\newcites{A}{ReferencesB}
\DeclareMathOperator{\de}{d\!}
\DeclareMathOperator{\cl}{Cl}
 \newcommand{\even}{\mathrm{Even}(\mathbb{F}_{2}^{k\times m})}
\newcommand{\symf}{\mathrm{Sym}(\mathbb{F}_2^{m\times m})}
\newcommand{\cmark}{\ding{51}}%
\newcommand{\xmark}{\ding{55}}%
\definecolor{lightred}{RGB}{243,229,231}
\definecolor{lightgreen}{RGB}{241,255,239}
\definecolor{lightblue}{RGB}{232,240,244}
\definecolor{RoyalBlue}{RGB}{65,105,225}
\definecolor{ForestGreen}{RGB}{34,139,34}   
\definecolor{Maroon}{RGB}{135,0,0}
\definecolor{myrefcolor}{rgb}{0.067,0.5,0.5}
\definecolor{myurlcolor}{rgb}{0.1,0,0.9}
\DeclareMathOperator{\poly}{poly}
\DeclareDocumentCommand\mel{ s s m m m }
{ 
    \IfBooleanTF{#1}
    {
        \IfBooleanTF{#2}
        {\left\langle{#3}\middle\vert{#4}\middle\vert{#5}\right\rangle} 
        {\vphantom{#3#4#5}\left\langle\smash{#3}\middle\vert\smash{#4}\middle\vert\smash{#5}\right\rangle} 
    }
    {\vphantom{#4}\left\langle{#3}\middle\vert\smash{#4}\middle\vert{#5}\right\rangle} 
}
\newtheorem*{theorem*}{Theorem}
\newtheorem{theorem}{Theorem}
\newtheorem{lemma}{Lemma}
\newtheorem{definition}{Definition}
\newtheorem{corollary}{Corollary}
\theoremstyle{remark}
\newtheorem{fact}{Fact}
\newcommand{\stab}{\operatorname{STAB}}
\newcommand{\be}{\begin{equation}\begin{aligned}\hspace{0pt}}
\newcommand{\ee}{\end{aligned}\end{equation}}
\newcommand{\ba}{\begin{eqnarray}}
\newcommand{\ea}{\end{eqnarray}}
\newcommand{\haar}[0]{\operatorname{Haar}}
\definecolor{airforceblue}{rgb}{0.36, 0.54, 0.66}
\newcommand{\bb}{\begin{equation}\begin{aligned}\hspace{0pt}}
\newcommand{\bbb}{\begin{equation*}\begin{aligned}}
\newcommand{\eb}{\end{aligned}\end{equation}}
\newcommand{\eeb}{\end{aligned}\end{equation*}}
\begin{document}

\title{Adaptively secure unitary designs with constant non-Clifford cost}



\author{Lennart Bittel}
\thanks{These authors contributed equally. \href{mailto:l.bittel@fu-berlin.de}{l.bittel@fu-berlin.de}, \href{mailto:loleone@unisa.it}{loleone@unisa.it}}

\affiliation{Dahlem Center for Complex Quantum Systems, Freie Universit\"at Berlin, 14195 Berlin, Germany}
\author{Lorenzo Leone}
\thanks{These authors contributed equally. \href{mailto:l.bittel@fu-berlin.de}{l.bittel@fu-berlin.de}, \href{mailto:loleone@unisa.it}{loleone@unisa.it}}


\affiliation{Dipartimento di Ingegneria Industriale, Università degli Studi di Salerno, Via Giovanni Paolo II, 132, 84084 Fisciano (SA), Italy}
\affiliation{Dahlem Center for Complex Quantum Systems, Freie Universit\"at Berlin, 14195 Berlin, Germany}


\begin{abstract}
Randomness is a fundamental resource in quantum information, with crucial applications in cryptography, algorithms, and error correction. A central challenge is to construct unitary $k$-designs that closely approximate Haar-random unitaries while minimizing the costly use of non-Clifford operations. In this work, we present a protocol able to generate unitary $k$-designs on $n$ qubits, secure against any adversarial quantum measurement, with a system-size-independent number of non-Clifford gates. Our construction applies a $k$-design only to a subsystem of size $\Theta(k)$, independent of $n$. This ``seed'' design is then ``diluted'' across the entire $n$-qubit system by sandwiching it between two random Clifford operators. The resulting ensemble forms an $\varepsilon$-approximate unitary $k$-design on $n$ qubits. We prove that this construction achieves full quantum security against adaptive adversaries using only $\tilde{O}(k^2 \log\varepsilon^{-1})$ non-Clifford gates. If one requires security only against polynomial-time adaptive adversaries, the non-Clifford cost decreases to $\tilde{O}(k + \log^{1+c} \varepsilon^{-1})$. This is optimal, since we show that at least $\Omega(k)$ non-Clifford gates are required in this setting. Compared to existing approaches, our method significantly reduces non-Clifford overhead while strengthening security guarantees to adaptive security as well as removing artificial assumptions between $n$ and $k$. 
These results make high-order unitary designs practically attainable in near-term fault-tolerant quantum architectures.
\end{abstract}
\maketitle

Random unitary operations play a central role in modern quantum information science, with applications ranging from cryptography~\cite{ambainis_smaapp2004,kretschmer_quapru2021} and algorithms~\cite{sen_ranmea2006, brandao_expspe2013} to quantum sensing~\cite{kueng_distinguishing_2016}, verification methods~\cite{Eisert_2020}, and communication protocols~\cite{devetak_relqua2004,groisman_quacla2005}. They are equally important as conceptual tools: ensembles of random unitaries often provide effective coarse-grained models for the dynamics of highly complex quantum systems. This perspective has revealed deep insights into spectral statistics of many-body Hamiltonians, mechanisms of thermalization~\cite{popescu_entanglement_2006}, holographic dualities~\cite{hayden_black_2007}, the spreading of quantum information~\cite{sekino_fast_2008}, and principles of quantum thermodynamics~\cite{Munson2025Mar}. With such broad relevance, both practical and foundational, a natural question is how randomness arises in quantum theory and how it can be used efficiently.

The most direct construction of quantum randomness is to sample unitary matrices uniformly from the Haar measure over the unitary group on
$n$ qubits. While mathematically elegant, this approach quickly becomes unfeasible: generating Haar-random unitaries requires circuits of exponential depth in $n$, a fact already evident from basic dimensional counting~\cite{knill1995approximationquantumcircuits}. Fortunately, many applications do not require ``full Haar randomness''; it is often sufficient to match only the first few moments, say, up to order $k$. A more efficient approach is therefore to use ensembles of unitaries that reproduce the first $k$ moments of the Haar statistics, known as \emph{unitary $k$-designs}~\cite{emerson_pseudorandom_2003,Gross_2007}. An even weaker requirement, motivated from complexity theory, is computational pseudorandomness: \textit{pseudorandom unitaries} are ensembles of operators that cannot be distinguished from Haar random unitaries by any \textit{efficient} quantum measurement~\cite{Ji_2018,Haug_2025}. Both notions have become powerful tools in their own right and have been developed into a vibrant field of research~\cite{Ji_2018, metger2024pseudorandomunitariesnonadaptivesecurity, schuster2025randomunitariesextremelylow, ma2024constructrandomunitaries, haug2024pseudorandomunitariesrealsparse}. Strikingly, very recent advances~\cite{schuster2025randomunitariesextremelylow,laracuente2024approximateunitarykdesignsshallow} show that unitary 
$k$-designs and pseudorandom unitaries can now be implemented in circuits of logarithmic depth in the system size, a dramatic improvement over the exponential-depth constructions required for Haar-random unitaries.

Even though extremely shallow circuits can, in principle, generate ensembles of random unitaries, the true cost in many quantum hardware platforms—particularly early fault-tolerant architectures—depends not only on the \textit{number} of gates but also on the \textit{type} of gates. A central limitation arises from the \emph{Eastin–Knill theorem}~\cite{Eastin_2009}, which establishes that no quantum error correcting code can implement a universal gate set transversally, i.e., without time or space overhead. As a consequence, fault-tolerant gate-based architectures (based on a single code) are necessarily limited to schemes that require the implementation of at least one non-transversal gate, which then constitutes the most expensive component of the architecture. A widely used approach is to treat Clifford operations as having negligible cost~\cite{gottesman_stabilizer_1997} and to supplement them with a non-Clifford operation. The latter, implemented fault-tolerantly via \emph{magic-state injection}~\cite{bravyi_magicstate_2012}, extends the Clifford group to a universal gate set. Within this framework, the focus naturally shifts from reducing circuit depth to minimizing the number of non-Clifford gates required to realize unitary $k$-designs (see \cref{fig:placeholder}), since the dominant cost typically arises from the preparation of \textit{high-fidelity} magic-states. 
Indeed, despite remarkable progress in recent years~\cite{krishna2019towards,fang2024surpassingfundamentallimitsdistillation,nguyen2025good,golowich2025asymptotically,wills2025constant}, \textit{magic-state distillation} remains the most resource-intensive component of fault-tolerant quantum computation. 

\begin{figure}
    \centering
    \includegraphics[width=\linewidth]{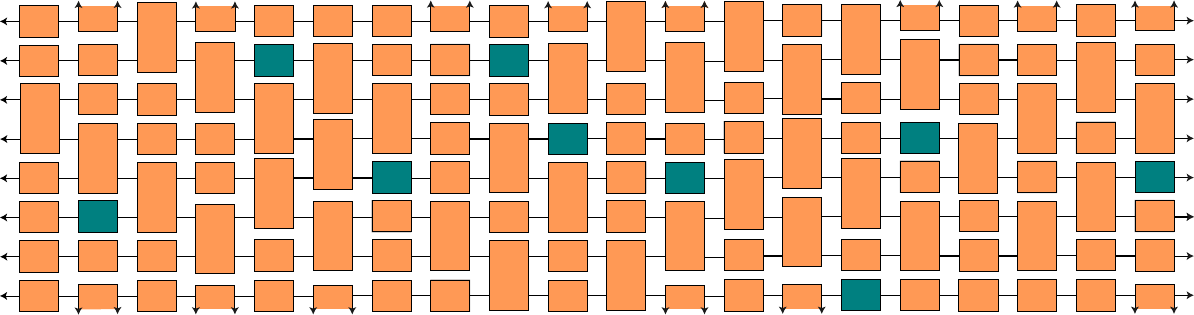}
    \caption{Clifford circuit (orange boxes) with few interspersed non-Clifford gates (green squares). }
    \label{fig:placeholder}
\end{figure}

Because of the practical relevance of this problem, substantial effort has been devoted to understanding how to construct random unitaries with minimal \textit{non-Clifford cost}. This line of work has produced a number of important results, summarized in \cref{table1}, and reviewed in Appendix S.1~\cite{_see_}. Most notably, a seminal paper~\cite{haferkamp2020QuantumHomeopathyWorks} established the existence of a scheme capable of generating unitary $k$-designs with a system-size independent number of non-Clifford gates, namely $\tilde{O}(k^4)$~\footnote{The notation $\tilde{O}$ denotes $\tilde{O}(f(n))=O(f(n)\operatorname{poly}\log f(n))$, i.e. the usual big-$O$ (or Landau) notation up to poly-logarithmic factors.}. This was a striking result: for moderate values of $k$, it implied that random unitary operators could, in principle, be implemented on near-term fault-tolerant quantum platforms with only a constant (system-size independent) resource cost. 

However, this excitement was soon tempered. The notion of unitary $k$-designs used in that result --\textit{additive-error} unitary $k$-designs-- only guarantees indistinguishability from Haar-random unitaries against a restricted class of quantum measurements, namely \textit{non-adaptive} quantum algorithms. Meanwhile, the stronger notion of \textit{relative-error} unitary designs gained popularity and is now widely regarded as the gold standard~\cite{schuster2025randomunitariesextremelylow}. Relative-error designs ensure closeness in expectation values under relative error, thereby guaranteeing indistinguishability against the most general (adaptive) quantum measurement. Unfortunately, a later result~\cite{leone2025noncliffordcostrandomunitaries} demonstrated that constructing relative-error $k$-designs necessarily requires an extensive non-Clifford cost of $\Omega(nk)$. This result appeared to rule out the efficient realization of ensembles that are operationally indistinguishable from Haar-random unitaries in fully adaptive quantum experiments. But did it, really?

\begin{table}[h!]
  \centering
  \begin{tabular}{@{}|l|c|c|@{}}
    \hline
    & \textbf{ non-Clifford cost} & \textbf{Security} \\
    \hline
    Additive $k$-design~\cite{haferkamp2020QuantumHomeopathyWorks,zhang2025designsmagicaugmentedcliffordcircuits} &    \cellcolor{lightgreen} \textcolor{black}{$O(k^4+k\log\varepsilon^{-1})$}    & \cellcolor{lightred}\textcolor{Maroon}{\xmark} \\
   Relative $k$-designs~\cite{haferkamp2020QuantumHomeopathyWorks,leone2025noncliffordcostrandomunitaries}&
   \cellcolor{lightred}\textcolor{black}{$\Theta(nk)$} & \cellcolor{lightgreen}\textcolor{ForestGreen}{\cmark} \\
  Pseudorandom unitaries~\cite{leone2025noncliffordcostrandomunitaries} &  \cellcolor{lightred}\textcolor{black}{$\Theta(n)$} &
    \cellcolor{lightblue}\textcolor{RoyalBlue}{\cmark} \\
    Secure-diluted $k$-design~[$\star$] & \cellcolor{lightgreen}\textcolor{black}{${\Theta}(k+\log\varepsilon^{-1})$\textcolor{black}{[$\diamond$]}} & \cellcolor{lightgreen}\textcolor{ForestGreen}{\cmark} \\
    Quantum-secure $k$-design~[$\star$] & \cellcolor{lightgreen}\textcolor{black}{$\tilde{O}(k^2\log \varepsilon^{-1})$} & \cellcolor{lightgreen}\textcolor{ForestGreen}{\cmark} \\
    Poly-secure $k$-design~[$\star$] & \cellcolor{lightgreen}\textcolor{black}{$\tilde{\Theta}(k+\log^{1+c}\varepsilon^{-1})$} & \cellcolor{lightblue}\textcolor{RoyalBlue}{\cmark} \\
    \hline
  \end{tabular}
    \caption{Summary of the results on unitary design constructions and their non-Clifford cost. 
Here, \xmark\ indicates that the construction is not fully secure, while \cmark\ denotes full security: 
green indicates security against all quantum experiments, and blue indicates security against 
polynomial-time quantum experiments. Quantum-secure and quantum-polynomially-secure designs are 
defined in \cref{def1,quantumpolysecuredesigns}, respectively. 
The symbol [$\star$] marks the results proven in this work, while [$\diamond$] denotes the non-Clifford cost expressed not in terms of gate count, but in terms of the support size of the diluted $k$-design in our construction (see \cref{fig1} and \cref{maintheorem}).
}\label{table1}
\end{table}


In this work, we introduce a protocol which dilutes a unitary $k$-design on $\Theta(k)$ qubits to generate unitary $k$-designs on $n$ qubits using only Clifford operations (see \cref{fig1}). The resulting designs are secure against \textit{any} quantum measurement while requiring, by construction, only a system-size-independent number of non-Clifford gates
. In addition, the construction can be implemented in logarithmic depth with the use of extra qubits. This result, which we explain in detail below, has two key implications:



\begin{enumerate}[label=(\roman*)]
    \item \textit{Practical}: it enables the implementation of random unitaries on near-term fault-tolerant quantum platforms, since the construction requires \textit{shallow} unitaries with \textit{few} non-Clifford gates;  
    \item \textit{Conceptual}: it shows that the notion of relative-error designs is not only unnecessary for physically motivated applications, but also overly demanding, requiring an extensive number of non-Clifford gates compared to the constant amount sufficient for security against \emph{any} quantum measurement.

\end{enumerate}

\textit{Setup and notation.---} We consider a system of $n$ qubits with Hilbert space $\mathbb{C}^{2\otimes n}$ and operator basis given by the Pauli group $\mathbb{P}_n$. We denote by $\|\cdot\|_1$ and $\|\cdot\|_{\infty}$ the trace norm and the operator norm of operators and by $\|\cdot\|_\diamond$ the diamond norm for quantum channels. For an ensemble $\mathcal{E}$ on the unitary group $\mathcal{U}_n$, $U \sim \mathcal{E}$ indicates that $U$ is drawn uniformly at random from $\mathcal{E}$; in particular, $\haar$ denotes the Haar (uniform) measure. We will mostly focus on the Clifford group $\mathcal{C}_n$, the subgroup of unitaries that maps Pauli operators to Pauli operators.

\medskip

\textit{Unitary $k$-designs.---} We begin with an overview of the rich landscape of unitary $k$-designs, unraveling the various concepts and nuances behind them. Let $\mathcal{E}$ be an ensemble of unitary operators on $n$ qubits, and define its $k$-moment operator as $\Phi_{\mathcal{E}}(\cdot)\coloneqq\mathbb{E}_{U\sim\mathcal{E}}U^{\otimes k}(\cdot)U^{\dag\otimes k}$. The ensemble $\mathcal{E}$ forms an {\em additive error $\varepsilon$-approximate unitary $k$-design} if~\cite{Harrow2009Random2-designs,brandao_local_2016}:
\begin{align}\label{eq:additivedesign}
    \|\Phi_{\mathcal{E}}-\Phi_{\haar}\|_{\diamond}\le \varepsilon\,,
\end{align}
where $\|\cdot\|_{\diamond}$ denotes the diamond norm. Using its variational definition, i.e. $\max_{\rho}\|I\otimes \Phi_{\mathcal{E}}(\rho)-I\otimes \Phi_{\haar}(\rho)\|_1$, one can deduce its operational meaning: given a quantum state $\rho$ with arbitrary additional qubits $n'$, an $\varepsilon$-approximate $k$-design in additive error cannot be distinguished from the Haar distribution by any quantum measurement, with resolution greater than $\varepsilon$, making $k$ \textit{parallel queries} to the unitary. This, however, does not cover the most general scenario: the most general quantum measurement may \textit{adaptively} choose subsequent operations based on intermediate measurement outcomes.

The notion of {\em relative error $\varepsilon$-approximate unitary $k$-designs} is instead defined by the closeness of the quantum channels $\Phi_{\mathcal{E}}$ and $\Phi_{\haar}$ in relative error~\cite{brandao_local_2016}:
\begin{align}\label{eq:relative designs}
    \Phi_{\haar}(1-\varepsilon)\le \Phi_{\mathcal{E}}\le (1+\varepsilon)\Phi_{\haar}\,,
\end{align}
where $\le$ denotes the ordering of completely positive maps, i.e. $\Phi\le\Phi'$ if and only if $(\Phi'-\Phi)$ is completely positive. From \cref{eq:relative designs}, one deduces the operational meaning: for any positive $(n+n')$-qubit matrix $\tilde{\rho}$, the expectation value of any operator $O$ differs by at most $2\varepsilon$ in relative error when either $I\otimes \Phi_{\mathcal{E}}$ or $I\otimes \Phi_{\haar}$ is applied. Therefore, \cref{eq:relative designs} implies \cref{eq:additivedesign}, but is strictly stronger. In fact, it can be shown~\cite{schuster2025randomunitariesextremelylow} that the state produced by any quantum algorithm making $k$ arbitrary queries to $U$ can be written as the partial overlap between one unnormalized state obtained by $k$ parallel queries to $U$ and another unnormalized state. Since relative error is insensitive to normalization, the outputs of any measurement when $U$ is sampled from $\mathcal{E}$ or from Haar differ by at most $2\varepsilon$. This establishes that relative error $k$-designs are secure against all kind of quantum measurements.  


Thus, using relative error designs that require $\Theta(nk)$ non-Clifford gates~\cite{leone2025noncliffordcostrandomunitaries}, the best known non-Clifford cost scaling to guarantee security against general quantum measurements is $O(nk)$. In contrast, as anticipated, our construction yields an $\varepsilon$-approximate unitary $k$-design secure against all quantum measurement with a system-size independent number of non-Clifford gates. 
The key step is to move beyond relative designs and instead focus exclusively on the strictly weaker—yet fully sufficient for all physically motivated purposes—requirement of security against the most general quantum algorithm. 

To formalize this, we consider the most general scenario where an adversary has query access to $k$ copies of an $n$-qubit unitary $U$. The adversary may operate on $n+n'$ qubits (including auxiliary workspace) and adaptively choose operations depending on intermediate measurement outcomes.  

We capture this idea with the following definition of a \textit{quantum-secure unitary $k$-design}~\footnote{A very recent work~\cite{cui2025unitarydesignsnearlyoptimal} introduced the notion of quantum-secure designs (referred to as \textit{measurable-error designs}), with the primary goal of minimizing circuit depth. However, with respect to gate count, their results offer no asymptotic improvement over relative-error designs, as both require $\Theta(nk)$ gates. In sharp contrast, our results show that when it comes to non-Clifford cost, relative-error designs impose an unnecessary burden: they require $\Omega(n)$ non-Clifford gates, whereas the practically motivated {quantum-secure designs} of \cref{def1} need only $O(1)$.}.  
\begin{definition}[Quantum-secure unitary designs]\label{def1} 
Let $\mathcal{E}$ be an ensemble of $n$-qubit unitaries and let $\haar$ denote the Haar measure on $n$ qubits. Let $\boldsymbol{V}$ be a collection of $k$ arbitrary unitary operators acting on $n+n'$ qubits, and define $\ket{\Psi_U(\boldsymbol{V})}\coloneqq U V_{k}UV_{k-1}\cdots UV_1\ket{0}$. $\mathcal{E}$ is an $\varepsilon$-approximate quantum secure unitary $k$-design iff:
\begin{align}
    \sup_{n'}\max_{\boldsymbol{V}}\left\|\underset{U\sim\mathcal{E}}{\mathbb{E}}\Psi_U(\boldsymbol{V})-\underset{U\sim\haar}{\mathbb{E}}\Psi_U(\boldsymbol{V})\right\|_1\le \varepsilon.
\end{align}
\end{definition}

Informally, the above norm measures the maximum distinguishing advantage as perceived by any quantum experiment making at most $k$ queries to a unitary $U$ sampled either from $\mathcal{E}$ or from the Haar measure. By construction, quantum-secure designs imply additive error designs and, as explained above, are implied by relative error designs:
\begin{align}
\text{relative-error} \mathrel{\substack{\Longrightarrow \\ \centernot\Longleftarrow}} \text{quantum-secure} \mathrel{\substack{\Longrightarrow \\ \centernot\Longleftarrow}} \text{additive-error}\label{eq:chainofimplications}\,.
\end{align}

Pseudorandom unitaries differ conceptually from unitary $k$-designs~\cite{ji_pseudorandom_2018,haug2024pseudorandomunitariesrealsparse,ma2025constructrandomunitaries}. An ensemble $\mathcal{E}$ is pseudorandom if no \textit{efficient quantum experiment} can distinguish it, up to polynomially small resolution, from Haar-random unitaries. Here, efficient experiments are those that use at most polynomially many queries $k$, with polynomially many auxiliary qubits $n'$, and implement unitaries $\boldsymbol{V}$ and POVMs of polynomial complexity in the total system size $n$. This makes pseudorandom unitaries the most practically motivated notion of randomness generation. Yet, known constructions of pseudorandom unitaries are \textit{conditional} on computational assumptions~\cite{ma2025constructrandomunitaries}. 

As in the case of relative-error designs, Ref.~\cite{leone2025noncliffordcostrandomunitaries} shows that generating pseudorandom unitaries requires $\Omega(n)$ non-Clifford gates. In practice, however, many applications only demand indistinguishability up to the first $k$ moments of the Haar distribution. The following definition combines the concepts of unitary $k$-designs and pseudorandom unitaries into what we refer to as \textit{quantum-polynomially-secure unitary $k$-designs}.  
\begin{definition}[Quantum-polynomially-secure unitary $k$-designs]\label{quantumpolysecuredesigns} 
Let $\boldsymbol{V}$ be a collection of $k$ arbitrary unitary operators acting on $n+n'$ qubits, and define $\ket{\Psi_U(\boldsymbol{V})}\coloneqq U V_{k}UV_{k-1}\cdots UV_1\ket{0}$. Let $O$ be any operator such that $\|O\|_{\infty}\le 1$. The ensemble $\mathcal{E}$ forms a quantum-polynomially-secure $\varepsilon$-approximate unitary $k$-design if 
\begin{align}
    \sup_{n',\boldsymbol{V},O\in\operatorname{poly}(n)}\left|\underset{U\sim\mathcal{E}}{\mathbb{E}}\tr[O\Psi_U(\boldsymbol{V})]-\underset{U\sim\haar}{\mathbb{E}}\tr[O\Psi_U(\boldsymbol{V})]\right|\le\varepsilon.
\end{align}
where $\operatorname{poly}(n)$ denotes the class of objects that can be implemented with resources polynomial in the total number of qubits $n$.
\end{definition}

\cref{quantumpolysecuredesigns} is similar in spirit to \cref{def1}, but it requires security only against polynomial-time quantum measurements that make at most $k$ queries to $U$, making it more relevant for practical applications. 

Before concluding this section, we note for the interested reader that the Pauli group ($k=1$) and the Clifford group ($k=2,3$) provide (trivial) examples of quantum-secure designs, as they are also exact unitary designs. The first non-trivial examples of quantum-secure designs are presented in the next section.


\medskip

{\em Our construction: quantum secure designs with constant non-Clifford cost.---} 
The idea behind our construction is simple: instead of implementing a unitary $k$-design directly on all $n$ qubits —which would incur an extensive non-Clifford cost— we apply it only to a subsystem of size $\Theta(k)$, independent of $n$. This “seed” design is then “diluted” into the full $n$-qubit space by sandwiching it between two random Clifford operators.
The resulting global operation forms an approximate quantum-secure unitary $k$-design on $n$ qubits (though not a relative-error design), while the non-Clifford cost remains confined to $\Theta(k)$ qubits. 

\begin{figure}
    \centering
    \includegraphics[width=\linewidth]{compression.pdf}
    \caption{Our construction: mixing power of Clifford circuits ``dilute'' the design ``seed'' into the full $n$-qubit space. The figure provides a graphical representation of \cref{maintheorem}.}
    \label{fig1}
\end{figure}

Concretely, we consider the ensemble of unitaries $\mathcal{E}_t\coloneqq\{C_1U_tC_2\}$ where $C_1 ,C_2\sim\mathcal{C}_n$ are uniformly sampled from the Clifford group $\mathcal{C}_n$ and $U_t$ from the Haar measure, or from an \textit{exact} $k$-design, on $t$ qubits, see \cref{fig1}. It is worth noting that the Clifford group alone already generates exact unitary 3-designs~\cite{webb_clifford_2016,zhu_multiqubit_2017}, which implies that Clifford unitaries are powerful scramblers of information, the property that is primarily responsible for our construction to work.
The next theorem, which constitutes the main result of this work, employs state-of-the-art techniques on the \textit{commutant} of the Clifford group~\cite{bittel2025completetheorycliffordcommutant,bittel2025operationalinterpretationstabilizerentropy} to show that this ensemble forms an $\varepsilon$-approximate quantum-secure unitary $k$-design.

\begin{theorem}\label{maintheorem}
    The ensemble of unitaries $\mathcal{E}_t$ forms a $\varepsilon$-approximate quantum-secure unitary $k$-design provided that $t\ge 2k+\log\varepsilon^{-1}+6$.
\end{theorem}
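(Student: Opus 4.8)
The plan is to pass from the adaptive formulation of \cref{def1} to a statement about the $k$-th moment channel $\Phi^{(k)}_{\mathcal E}(\cdot)=\mathbb{E}_{U\sim\mathcal E}\,U^{\otimes k}(\cdot)U^{\dag\otimes k}$, and then to analyse $\Phi^{(k)}_{\mathcal E_t}$ by combining the factorised structure of $\mathcal E_t$ with the complete description of the Clifford commutant from \cite{bittel2025completetheorycliffordcommutant}.

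\emph{Step 1 (adaptivity $\to$ moments).} Following the sequential-to-parallel correspondence of \cite{schuster2025randomunitariesextremelylow}, $\mathbb{E}_U\Psi_U(\boldsymbol V)$ equals the image of $\Phi^{(k)}_U\otimes\mathrm{id}$ on a fixed, generally unnormalised operator $\sigma_{\boldsymbol V}$, post-composed with a fixed projection depending only on $\boldsymbol V$ and $\ket{0}$. Since this post-processing carries a normalisation exponentially large in $n$, diamond-norm closeness of the moment channels (an additive design) is too weak, while full relative-error closeness is provably out of reach (it would force $\Omega(nk)$ non-Clifford gates). Step 1 is therefore a dedicated reduction showing that the $\ell_1$-distance in \cref{def1} is controlled by $\sup_{\sigma\in\mathcal S}\|(\Phi^{(k)}_{\mathcal E_t}-\Phi^{(k)}_{\haar})\otimes\mathrm{id}\,(\sigma)\|_1$, where $\mathcal S$ is the restricted family of operators genuinely arising as $k$-query comb inputs --- a quantity insensitive to the bad normalisation yet, unlike relative error, constraining $\Phi^{(k)}_{\mathcal E_t}$ only on this structured set.

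\emph{Step 2 (factorisation, Clifford commutant).} Independence of $C_1,U_t,C_2$ gives $\Phi^{(k)}_{\mathcal E_t}=\mathcal T_{\mathrm{Cl}}\circ\mathcal T_{U_t}\circ\mathcal T_{\mathrm{Cl}}$, with $\mathcal T_{\mathrm{Cl}}$ the $k$-fold Clifford twirl on $n$ qubits and $\mathcal T_{U_t}$ the $k$-fold Haar twirl over $U_t\otimes\id_{n-t}$; and $\Phi^{(k)}_{\haar}=\mathcal T_{\mathrm{Cl}}\circ\Phi^{(k)}_{\haar}\circ\mathcal T_{\mathrm{Cl}}$ since $C_1WC_2$ is Haar for $W$ Haar. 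Hence $\Phi^{(k)}_{\mathcal E_t}-\Phi^{(k)}_{\haar}=\mathcal T_{\mathrm{Cl}}\circ(\mathcal T_{U_t}-\Phi^{(k)}_{\haar})\circ\mathcal T_{\mathrm{Cl}}$. By \cite{bittel2025completetheorycliffordcommutant}, $\mathcal T_{\mathrm{Cl}}$ projects onto $\mathrm{span}\{R_\Theta:\Theta\in\Sigma_{k,n}\}$ where $\Sigma_{k,n}$ is the set of stochastic Lagrangian subspaces and $R_\Theta=r_\Theta^{\otimes n}$ is an $n$-fold tensor power of a one-qubit operator; the permutations $S_k\subseteq\Sigma_{k,n}$ span the $\mathcal U(n)$-commutant, which is exactly the range of $\Phi^{(k)}_{\haar}$, so the difference is carried by the non-permutation elements $\Sigma_{k,n}\setminus S_k$ (nonempty for $k\ge4$).

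\emph{Step 3 (quantitative core) and the main obstacle.} Writing $R_\Theta=r_\Theta^{\otimes t}\otimes r_\Theta^{\otimes(n-t)}$, the middle twirl acts only on the first $t$ slots: $\mathcal T_{U_t}(R_\Theta)=\big(\mathbb{E}_{U_t}U_t^{\otimes k}\,r_\Theta^{\otimes t}\,U_t^{\dag\otimes k}\big)\otimes r_\Theta^{\otimes(n-t)}$, and the bracket is the projection of $r_\Theta^{\otimes t}$ onto the $t$-qubit permutation operators. Its Weingarten coefficients are $\sim 2^{-t\,c(\Theta)}$ with $c(\Theta)\ge 1$ for $\Theta\notin S_k$, so the weight that $\Phi^{(k)}_{\mathcal E_t}$ places on each non-permutation $R_\Theta$ is suppressed, relative to the permutation part, by $\lesssim 2^{-(t-2k)}$; adding these up over $\Sigma_{k,n}\setminus S_k$ --- as a geometric-type series in the defect $c(\Theta)$, so that the total stays $\lesssim 2^{-(t-2k)}$ uniformly in $n$ and without a $\poly(k)$ or $2^{\Theta(k)}$ loss --- and evaluating on a normalised $\sigma\in\mathcal S$ gives $\varepsilon\lesssim 2^{-(t-2k)}2^{O(1)}$, whence $t\ge 2k+\log\varepsilon^{-1}+6$ suffices (the constant $6$ absorbing the $O(1)$ slack from the Weingarten and Gram estimates); replacing Haar $U_t$ by an exact $k$-design on $t$ qubits is free since $\mathcal T_{U_t}$ is unchanged. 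The hard parts are exactly these two: in Step 1, isolating the right renormalisation-robust-but-sub-relative quantity and checking $\mathcal E_t$ meets it --- using the randomising power of the outer Cliffords to control the worst-case comb input rather than all positive operators; and in Step 3, the Gram/dual-basis bookkeeping among the linearly independent but highly non-orthogonal $R_\Theta$ that converts the per-$\Theta$ suppression $2^{-(t-2k)}$ into the clean threshold $t\ge 2k+\log\varepsilon^{-1}+6$.
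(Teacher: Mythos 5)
Your decomposition $\Phi^{(k)}_{\mathcal E_t}=\mathcal T_{\mathrm{Cl}}\circ\mathcal T_{U_t}\circ\mathcal T_{\mathrm{Cl}}$ and the identification of the middle-twirl suppression match the setup of the paper's preliminary lemma, which writes $\mathbb{E}_{U\sim\mathcal E_t}\tr(O\Psi_U(\boldsymbol V))$ as an explicit double Weingarten expansion over the Clifford commutant (Pauli monomials, rather than stochastic Lagrangian subspaces, but that is cosmetic). The genuine gap is exactly where you flag the hard part: you assert that summing the per-element suppression $2^{-(t-2k)}$ over all non-permutation commutant elements "stays $\lesssim 2^{-(t-2k)}$ without a $\poly(k)$ or $2^{\Theta(k)}$ loss." This is not automatic and you give no mechanism for it. The commutant has $\prod_{i=0}^{k-2}(2^i+1)\sim 2^{k^2/2}$ elements, the Gram matrix of the $R_\Theta$ is highly non-orthogonal, and the Clifford--Weingarten entries admit no uniform bound that would make a brute-force sum converge to $O(2^{-(t-2k)})$ with an $O(1)$ prefactor; this is precisely why earlier works in this vein carried $k=O(\sqrt n)$ restrictions and lost $\poly(k)$ factors.

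The paper does not attempt this direct bookkeeping. Instead it proves a small interpolation lemma: if $f(x)=\sum_{i=1}^{k}a_i 2^{-ix}$ satisfies $|f(x)|\le C$ at every positive integer $x\le k$, then $|a_i|\le 30C\,2^{ik}$, by inverting the Vandermonde matrix $M_{ij}=2^{-ij}$. The decisive observation is then that the Weingarten-expanded quantity $f(\boldsymbol V,O,t')$, viewed as a function of the dilution size $t'$, is (up to the approximate-Haar-twirl error $O(k^2 2^{-t'})$) itself a physical expectation value $\mathbb{E}_{U\sim\mathcal E_{t'}}\tr(O\Psi_U(\boldsymbol V))$ with $\|O\|_\infty\le 1$, hence bounded by $\approx 1$ at \emph{every} integer $t'$ between $1$ and $2k$. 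Combined with the triangle inequality for $\alpha$, which confines the $t'$-dependence to $\sum_{x=0}^{2k}c_x 2^{-xt'}$, the Vandermonde inversion turns this physicality bound into $|c_x|\lesssim 2^{2kx}$, so the whole commutant-wide Weingarten estimate is replaced by a boundedness-in-$t'$ argument. Comparing the resulting expression at $t$ and at $n$ then gives the clean threshold $t\ge 2k+\log\varepsilon^{-1}+6$. Your Step~1 is similarly left as an unspecified "dedicated reduction"; the paper handles it concretely with the unnormalised Bell-pair dilation of~\cite{schuster2025randomunitariesextremelylow}, writing $\ket{\Psi_U(\boldsymbol V)}$ as a partial overlap so that $\tr(O\Psi_U(\boldsymbol V))$ becomes a linear functional of $U^{\otimes k}(\cdot)U^{\dag\otimes k}$ that can be twirled directly, avoiding any detour through a restricted operator class $\mathcal S$. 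In short, your architecture is right, but the two steps you acknowledge as hard are essentially the whole proof, and the missing idea is the interpolation-in-$t'$ trick that substitutes a one-line physicality bound for an intractable commutant sum.
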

\noindent
\textit{Proof sketch.} We derive an explicit tight bound on the norm defined in \cref{def1}. The key step in the proof relies on a functional analysis lemma of independent interest. Specifically, consider a function of theform $f(x) = \sum_{i=1}^{k} c_i \, 2^{-i x}$. If \(|f(x)| \leq 1\) for every integer \(x \leq k\), then the coefficients satisfy\(|c_i| \leq 2^{k i}\). The full proof is deferred to Appendix S.3~1~\cite{_see_}.\qed

The above result shows that the \textit{dilution} of a $k$-design across $O(k)$ qubits through the application of random Clifford operators suffice to spread the property around the macroscopic system of $n$ qubits. Moreover, we show that $\Omega(k)$ is the minimal support required to dilute an exact $k$-design into a quantum-secure design on $n$ qubits, thereby establishing the optimality of \cref{maintheorem}.

An immediate question is how our construction translates into bounds on the number of non-Clifford gates required for quantum-secure unitary $k$-designs. Since the only step involving non-Clifford operations is the injected $k$-design on $\Theta(k)$ qubits, the resulting non-Clifford cost is \textit{automatically system-size independent}, establishing the central result of this work. The exact number of non-Clifford gates depends on the best available constructions of unitary $k$-designs. As an example, in the following corollary we apply the result of Ref.~\cite{schuster2025randomunitariesextremelylow} and use a hybrid argument, replacing the exact $k$-design on $\Theta(k)$ qubits with a relative-error approximate $k$-design (see \cref{eq:relative designs}), to derive an explicit bound on the non-Clifford gate count \textit{sufficient} for our construction.

\begin{corollary}\label{cor1}
    Quantum-secure unitary $k$-designs can be constructed using $O(k^2\log^2\varepsilon^{-1})$ many non-Clifford gates~\footnote{The dependence on the resolution error $\varepsilon$ can be improved to $\tilde{O}(k^2\log\varepsilon^{-1})$ by using $\tilde{O}(k^2)$ extra qubits, using the construction of Ref.~\cite{cui2025unitarydesignsnearlyoptimal}; see the supplemental material for details.}.
\end{corollary}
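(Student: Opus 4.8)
\emph{Overview of the plan.---} The strategy is to combine \cref{maintheorem} with a hybrid argument that substitutes the exact seed $k$-design by an explicit relative-error approximate one, and then to count non-Clifford gates through the best available relative-error construction. Fix the target resolution $\varepsilon$, set $t\coloneqq\lceil 2k+\log\varepsilon^{-1}+7\rceil=O(k+\log\varepsilon^{-1})$, let $\tilde U_t$ be a relative-error $(\varepsilon/4)$-approximate unitary $k$-design on the first $t$ of the $n$ qubits (extended by $\id_{n-t}$), and consider the ensemble $\tilde{\mathcal E}_t=\{C_1\tilde U_t C_2\}$ with $C_1,C_2\sim\mathcal C_n$. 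The only non-Clifford gates in $\tilde{\mathcal E}_t$ are those of $\tilde U_t$, so it remains to show that $\tilde{\mathcal E}_t$ is an $\varepsilon$-approximate quantum-secure $k$-design and then bound the gate count of $\tilde U_t$.

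\emph{Reduction by a hybrid argument.---} Let $\mathcal E_t$ denote the ensemble from \cref{maintheorem} built with an \emph{exact} $k$-design on $t$ qubits, which satisfies $\sup_{n'}\max_{\boldsymbol V}\|\mathbb E_{\mathcal E_t}\Psi(\boldsymbol V)-\mathbb E_{\haar}\Psi(\boldsymbol V)\|_1\le\varepsilon/2$. By the triangle inequality it suffices to bound, for every fixed $\boldsymbol V$ and every $n'$, the quantity $\|\mathbb E_{\tilde{\mathcal E}_t}\Psi(\boldsymbol V)-\mathbb E_{\mathcal E_t}\Psi(\boldsymbol V)\|_1$. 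The key observation is that a single sample of $\tilde{\mathcal E}_t$ uses the \emph{same} pair $C_1,C_2$ for all $k$ queries, so with $W\coloneqq U_t\otimes\id_{n-t}$ one can regroup
\begin{align}
\ket{\Psi_{C_1 W C_2}(\boldsymbol V)}=C_1\,W\,W_k\,W\,W_{k-1}\cdots W\,W_2\,W\,W_1'\ket{0^{n+n'}},
\end{align}
where $W_i\coloneqq C_2 V_i C_1$ for $i\ge 2$ and $W_1'\coloneqq C_2 V_1$. For fixed $C_1,C_2$ this is exactly the output of an adaptive $k$-query experiment against the $t$-qubit unitary $U_t$, with interleaving unitaries $W_i$, workspace given by the $(n-t)$ idle qubits together with the $n'$ ancillas, and an overall fixed left multiplication by $C_1$ (which leaves trace-norm differences invariant). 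Since an exact $k$-design on $t$ qubits is indistinguishable from $\haar_t$ under any $k$-query experiment (its $k$-th moment operator agrees with Haar), and since a relative-error $(\varepsilon/4)$-design is quantum-secure with error at most $2\cdot(\varepsilon/4)=\varepsilon/2$ by the relative-error $\Rightarrow$ quantum-security implication recalled above, for each fixed $C_1,C_2$ one gets $\|\mathbb E_{\tilde U_t}\Psi_{C_1 W C_2}(\boldsymbol V)-\mathbb E_{U_t^{\mathrm{exact}}}\Psi_{C_1 W C_2}(\boldsymbol V)\|_1\le\varepsilon/2$. Averaging over $C_1,C_2$ and using convexity of the trace norm preserves the bound, so $\tilde{\mathcal E}_t$ is an $\varepsilon$-approximate quantum-secure unitary $k$-design.

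\emph{Gate count.---} It remains to instantiate $\tilde U_t$: a relative-error $\Theta(\varepsilon)$-approximate unitary $k$-design on $t=O(k+\log\varepsilon^{-1})$ qubits. Using the construction of Ref.~\cite{schuster2025randomunitariesextremelylow} and bounding the number of non-Clifford gates by the total gate count, substituting $t=O(k+\log\varepsilon^{-1})$ and inner error $\Theta(\varepsilon)$ into that bound yields $O(k^2\log^2\varepsilon^{-1})$ non-Clifford gates, which proves the corollary. (With $\tilde O(k^2)$ extra qubits one may instead use Ref.~\cite{cui2025unitarydesignsnearlyoptimal}, improving the $\varepsilon$-dependence to $\tilde O(k^2\log\varepsilon^{-1})$, as indicated in the footnote.)

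\emph{Main obstacle.---} The delicate point is the hybrid of the middle paragraph: it must be performed at the level of \emph{adaptive} experiments, and this is precisely where an additive-error seed would fail — inserting an additive-error $k$-design into $k$ adaptively interleaved slots can amplify the distinguishing advantage uncontrollably, whereas the relative-error property (equivalently, security against arbitrary $k$-query experiments, including all the Clifford-conjugated interleavings $W_i$ and the idle-qubit workspace) makes the hybrid telescope cleanly with no $k$-dependent loss. A secondary, purely technical issue is tracking the $\poly\log$ factors of the relative-error $k$-design construction so that the arithmetic lands exactly at $O(k^2\log^2\varepsilon^{-1})$ rather than at a worse polynomial.
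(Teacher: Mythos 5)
Your proposal is correct and follows essentially the same route as the paper's proof (Corollary~\ref{cor1app}): replace the exact seed on $t=O(k+\log\varepsilon^{-1})$ qubits by a relative-error $\Theta(\varepsilon)$-approximate $k$-design, invoke the relative-error $\Rightarrow$ quantum-security implication (Corollary~\ref{lem:relativeimpliesquantumsecurity}), apply the triangle inequality against Theorem~\ref{maintheorem}, and count gates via Lemma~\ref{lem:shallowunitarydesigns}. Your middle paragraph — conditioning on $(C_1,C_2)$, regrouping so that the inner $t$-qubit unitary is queried by an adaptive experiment whose interleavers are $C_2V_iC_1$ and whose workspace absorbs the idle $n-t$ qubits and ancillas, then averaging by convexity — is precisely the step the paper summarizes as ``use triangle inequality,'' so you have filled it in rather than deviated from it.
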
 

The proof has to be found in Appendix S.3~3~\cite{_see_}. A natural question for the curious reader is how many non-Clifford gates are \textit{necessary} to construct quantum-secure unitary designs, and whether \cref{cor1} achieves optimality in this regard. The following theorem establishes a lower bound on the non-Clifford cost required for unitary designs. Specifically, we present a \textit{computationally efficient} quantum algorithm which, given $k$ parallel queries to a unitary sampled either from an ensemble $\mathcal{E}$ or from the Haar measure, can distinguish the two unless the unitaries in $\mathcal{E}$ contain at least $\Omega(k)$ non-Clifford gates.

\begin{theorem}\label{th:lowerboundmain}
    Let $\mathcal{E}$ be an ensemble of unitaries containing at most $t<n$ non-Clifford gates. If $12t+10\le k$, then
    \begin{align}
        \|\Phi_{\mathcal{E}_t}-\Phi_{\haar}\|_{\diamond}\ge\frac{1}{8}
    \end{align}
\end{theorem}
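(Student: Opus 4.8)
The plan is to exhibit an explicit, computationally efficient distinguisher that succeeds against \emph{any} ensemble $\mathcal{E}$ built from at most $t<n$ non-Clifford gates, as long as $k$ is large enough relative to $t$. The natural quantity to exploit is \emph{stabilizer purity} (or a closely related anticoncentration/out-of-time-order correlator), which for a Clifford unitary takes its extremal value and which degrades in a controlled way as non-Clifford resources are injected. The key structural input is that a unitary containing $t$ non-Clifford (say $T$-type) gates has stabilizer nullity at most $O(t)$, so quantities like $\mathbb{E}_{P\in\mathbb{P}_n}\tr[U P U^\dagger O]^{2m}$ for a fixed stabilizer input $O$ are pinned to be large; by contrast a Haar-random $U$ anticoncentrates and makes the same quantity exponentially small. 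Concretely, I would take the test state to be $k$ parallel copies applied to $\ket{0}^{\otimes n}$ (plus the stabilizer-purity measurement, which is itself a Clifford circuit on the $k$ copies, hence efficient), so the distinguisher reads off $\Phi_{\mathcal{E}}$ versus $\Phi_{\haar}$ on a fixed stabilizer projector in the commutant.

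The first step is to recall from the Clifford-commutant machinery of Ref.~\cite{bittel2025completetheorycliffordcommutant} the exact form of $\Phi_{\haar}$ restricted to the relevant stabilizer test operator, and to identify a commutant element (a ``purity''-like permutation-and-transpose operator on the $k$ copies) whose expectation value separates Clifford-like behavior from Haar behavior. Second, I would lift the single-unitary statement to the ensemble: since $\mathcal{E}_t$ is a convex combination of unitaries each of which has $\le t$ non-Clifford gates, and the test functional is linear in $\Phi$, it suffices to lower bound the test functional for \emph{every} unitary in the support. For a fixed $U$ with $t$ non-Clifford gates one uses the stabilizer-nullity bound: writing $U = C' \prod T_i C''$ (Cliffords interleaved with $t$ single-qubit non-Clifford gates), the image $U\ket{0}^{\otimes n}$ has stabilizer nullity at most $2t$ (or $t$, with the right normalization), which forces the chosen commutant functional evaluated on $k$ copies of $U$ to be at least $2^{-O(tk)}$ — i.e. still ``macroscopically large'' compared to the Haar value $2^{-\Omega(nk)}$ whenever $t \ll k \ll n$, but in fact the sharper statement one needs is a constant gap, which comes from choosing the functional so that its Haar value is $\le \tfrac 14$ while its value on any unitary of nullity $O(t)$ is $\ge \tfrac 12$ once $12t+10\le k$. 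Third, convert the gap in expectation values into a diamond-norm lower bound: since the functional is $\tr[\cdot\, O]$ for an explicit operator $O$ with $\|O\|_\infty \le 1$ acting on the $k$-fold (plus ancilla) system, the variational characterization of $\|\Phi_{\mathcal{E}_t}-\Phi_{\haar}\|_\diamond$ immediately yields $\|\Phi_{\mathcal{E}_t}-\Phi_{\haar}\|_\diamond \ge |\text{gap}| \ge \tfrac 18$.

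The main obstacle I anticipate is the second step: getting the right \emph{constant} (not merely exponentially small) lower bound on the commutant functional for an \emph{arbitrary} unitary with $t$ non-Clifford gates, uniformly over the placement and type of those gates. The naive stabilizer-nullity argument gives a bound that decays like $2^{-ck t}$, which is too weak — it only beats the Haar value, not a fixed threshold like $1/2$. The fix should be to choose the test operator more cleverly: instead of a single monolithic purity on all $k$ copies, use a sum/average over subsets of copies (or the ``$k$-th moment stabilizer fidelity'' restricted to blocks of size $\sim 12t+10$), so that whenever $k$ exceeds roughly $12 t + 10$ there is necessarily a block of copies on which the state looks stabilizer-like and the functional is bounded below by an absolute constant, by a pigeonhole/union argument over how the $t$ non-Clifford gates can ``touch'' the copies. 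Making this pigeonhole rigorous — carefully tracking how the $t$ gates propagate through the tensor-power structure and bounding the number of copies they can corrupt — is the technical heart of the proof; everything else (the Haar computation, the commutant identities, and the passage to the diamond norm) is routine given the cited results. A secondary, minor obstacle is ensuring the distinguishing measurement is genuinely $\poly(n)$-time: the stabilizer-purity measurement is a SWAP-test-like Clifford circuit on the $2k$-or-$4k$ copies, so this is fine, but one should state it explicitly to justify the ``computationally efficient'' claim in the theorem's surrounding text.
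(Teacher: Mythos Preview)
Your plan has a genuine gap in the second step, and your proposed fix does not close it.

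The static test functionals you describe---stabilizer purity $\tfrac{1}{d}\sum_P\tr^4(P\psi)$, or any fixed Pauli-moment / commutant element evaluated on $\psi^{\otimes k}$---separate a $t$-compressible state from a Haar state by a gap of order $2^{-t}$ versus $2^{-\Omega(n)}$, not by a constant. This you correctly diagnose. But your repair, a pigeonhole over ``blocks of copies'' and over ``how the $t$ non-Clifford gates can touch the copies,'' is based on a misconception: the same unitary $U$ is applied to every copy, so every block of $\psi^{\otimes k}$ sees exactly the same state $\psi$ with exactly the same non-Clifford content. There is no block on which the state looks more stabilizer-like than on any other, and no pigeonhole is available. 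Consequently your proposal never actually produces an operator $O$ with $\|O\|_\infty\le 1$ achieving a constant gap, and the diamond-norm bound does not follow.

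The paper takes a genuinely different route. It does not use a fixed moment functional at all; instead it constructs an \emph{adaptive} distinguisher that first \emph{learns} candidate stabilizers of $\psi$ from the data and then verifies them. Concretely: use $4l$ copies to perform $l$ rounds of Bell difference sampling, obtaining Paulis $X=(x_1,\dots,x_l)$; form $S=\operatorname{span}(X)\cap X^\perp$; pick a nontrivial $P\in S$ and estimate $\tr^2(P\psi)$ on two fresh copies. For a $t$-compressible state one shows (i) once $l>2t$ the set $S$ is nontrivial with high probability, via a linear-dependence argument on the first $t$ qubits combined with uniformity of the $Z$-string part on the remaining $n-t$ qubits; and (ii) any $P$ that commutes with all $l$ samples has $\tr^2(P\psi)\ge \tfrac12$ except with probability $\le 4^t(5/8)^l$, by the identity $\Pr_{Q\sim (p\star p)_\psi}([P,Q]=0)=\tfrac{1+\tr^4(P\psi)}{2}$. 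Choosing $l\ge 3t+2$ (so $k=4l+2\ge 12t+10$) and $\varepsilon_T=\tfrac12$ yields the constant gap $\tfrac18$. The passage from state to unitary designs then uses only the magic-compression theorem: $U\ket{0}^{\otimes n}$ is $t$-compressible whenever $U$ contains at most $t$ single-qubit non-Clifford gates. The adaptivity (learn-then-test) is precisely what upgrades the $2^{-t}$ gap of a static purity test to a constant, and is the idea your outline is missing.
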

\noindent
\textit{Proof sketch:} The distinguishing algorithm proceeds as follows. We construct $k$ copies of $\ket{\psi}=U\ket{0}$. We demonstrate that $O(k)$ measurements on $\ket{\psi}$ are able to decide whether: Case A) $\ket{\psi}$ contains at least one nontrivial Pauli operator $P$ such that $P\ket{\psi}=\ket{\psi}$ or, Case B), there is no such Pauli operator. We then notice that for $U$ being Haar random there is no such Pauli operator with probability $\sim1$, while for $U$ constructed by $O(k)$ Clifford gates that are at least $2^{n-O(k)}$~\cite{leone2023learning}. The full proof is deferred to Appendix S.3~2~\cite{_see_}. \qed

Although \cref{maintheorem} is {\em optimal} in terms of the support of the diluted $k$-design, \cref{th:lowerboundmain} shows that the minimal number of non-Clifford gates required to construct a quantum-secure design is $\Omega(k)$. This suggests that the non-Clifford gate count in \cref{cor1} may be loose by a quadratic factor. However, we note that the lower bound is derived from the existence of an \textit{ efficient} quantum algorithm that can distinguish the two ensembles, whereas \cref{cor1} guarantees security even against unbounded measurements. We accordingly defined quantum-polynomially-secure $\varepsilon$-approximate unitary $k$-designs (\cref{quantumpolysecuredesigns}) as those that cannot be distinguished from Haar-random unitaries by any $k$-query computationally efficient quantum algorithm with resolution greater than $\varepsilon$. The following corollary establishes, up to computational assumptions detailed in Appendix S.3~3~\cite{_see_}, the optimality of our construction for polynomially-secure designs.
\begin{corollary}\label{cor2}
    Let $k=\omega(\log n)$, $\varepsilon=\exp(o(n))$. For any $c>0$, quantum-polynomially-secure  unitary $k$-design can be constructed using $\tilde{O}((k+\log^{1+c}\varepsilon^{-1}))$ non-Clifford gates. Combined with \cref{th:lowerboundmain}, this construction is optimal. 
\end{corollary}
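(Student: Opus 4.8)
The plan is to build the quantum-polynomially-secure $k$-design from our homeopathy construction $\mathcal{E}_t = \{C_1 U_t C_2\}$ of \cref{maintheorem}, but to make two modifications that exploit the weaker security requirement of \cref{quantumpolysecuredesigns}. First, by \cref{maintheorem} it suffices to take the support $t = \Theta(k + \log \varepsilon^{-1})$ and inject an \emph{exact} $k$-design on those $t$ qubits; the non-Clifford cost is then entirely that of the seed design on $\Theta(k+\log\varepsilon^{-1})$ qubits. The key observation is that, against a computationally efficient adversary, we do not need an exact seed design nor even a relative-error one on all $t$ qubits: it is enough that the seed be \emph{computationally indistinguishable} from a $k$-design by any poly-time $k$-query experiment, since such an adversary is strictly weaker than the unbounded one handled by \cref{maintheorem}. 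Concretely, I would replace the exact seed with a pseudorandom-unitary--type construction on $\Theta(k + \log^{1+c}\varepsilon^{-1})$ qubits whose non-Clifford count is \emph{linear} in its qubit number up to polylog factors—this is where the computational assumptions (existence of post-quantum one-way functions / PRFs, stated in the supplemental material) enter, via the known conditional PRU constructions of Ref.~\cite{ma2025constructrandomunitaries}.

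The argument then proceeds by a hybrid/triangle-inequality chain. Let $\mathcal{D}_t$ be the exact $k$-design seed and $\mathcal{P}_t$ the efficient pseudorandom seed on $t$ qubits. Step one: by \cref{maintheorem}, $\{C_1 \mathcal{D}_t C_2\}$ is an $(\varepsilon/2)$-approximate quantum-secure (hence quantum-polynomially-secure) $k$-design once $t \ge 2k + \log(2/\varepsilon) + 6$. Step two: I must show $\{C_1 \mathcal{P}_t C_2\}$ is within $\varepsilon/2$ of $\{C_1 \mathcal{D}_t C_2\}$ \emph{in the polynomial-security metric}. This reduces to showing that any efficient $k$-query experiment on the full $n$-qubit sandwiched ensemble induces an efficient $k$-query experiment on the $t$-qubit seed: one simulates the outer Cliffords $C_1, C_2$ (efficient, since Clifford circuits on $n$ qubits have poly$(n)$ description and the adversary's $\boldsymbol V$ and $O$ are already poly$(n)$), so a distinguisher for the sandwiched ensembles yields a distinguisher for $\mathcal{P}_t$ vs.\ $\mathcal{D}_t$ of the same query number and polynomial complexity; by the pseudorandomness guarantee this advantage is negligible, and in particular $\le \varepsilon/2$ for $\varepsilon = \exp(o(n))$ and $k = \omega(\log n)$. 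Combining the two steps gives the $\varepsilon$ bound. Step three: count gates—the seed uses $\tilde O(t) = \tilde O(k + \log^{1+c}\varepsilon^{-1})$ non-Clifford gates (the $\log^{1+c}$ rather than $\log$ is the security-parameter overhead inherent to the PRF-based construction, which is why the $c>0$ slack appears), and the dilution step is Clifford-only, so the total is $\tilde O(k + \log^{1+c}\varepsilon^{-1})$. Finally, optimality: \cref{th:lowerboundmain} gives an \emph{efficient} distinguisher whenever the ensemble has $< k/12 - O(1)$ non-Clifford gates, so $\Omega(k)$ non-Clifford gates are necessary even against poly-time adversaries, matching the construction up to polylog and the $\log^{1+c}\varepsilon^{-1}$ term.

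The main obstacle I anticipate is making the reduction in step two fully rigorous with respect to the precise notion of "efficient experiment" in \cref{quantumpolysecuredesigns}—in particular, verifying that conjugating the adversary's operations $\boldsymbol V$ and measurement $O$ by the random Cliffords $C_1, C_2$ keeps everything inside $\operatorname{poly}(n)$ complexity \emph{and} that the resulting seed-distinguisher genuinely fits the PRU security game (same $k$, auxiliary register size poly in the seed's parameter, not just poly in $n$). A secondary subtlety is that the pseudorandom seed is only secure against distinguishers polynomial in its \emph{own} qubit count $t = \tilde\Theta(k + \log^{1+c}\varepsilon^{-1})$, whereas the outer adversary is polynomial in $n \gg t$; since $k = \omega(\log n)$ forces $t = \omega(\log n)$, a poly$(n)$ adversary is indeed sub-exponential—and for appropriate sub-exponentially secure PRFs, negligible—in $t$, so the parameter regimes stated in the corollary ($k = \omega(\log n)$, $\varepsilon = \exp(o(n))$) are exactly what is needed to close this gap; spelling out the required hardness flavor is the technical crux and is deferred to the supplemental material.
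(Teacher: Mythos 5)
Your proposal is correct and matches the paper's approach: both replace the exact seed in the homeopathy construction with a pseudorandom unitary, use a hybrid/triangle-inequality argument and the efficient-reduction observation (simulating the outer Cliffords keeps the adversary efficient), and both correctly isolate the key subtlety — the PRU is only secure against adversaries polynomial in its own qubit count $t$, so $k=\omega(\log n)$ is needed to make a $\poly(n)$-time adversary sub-exponential in $t$, and $\log^{1+c}\varepsilon^{-1}$ extra seed qubits are needed so the PRU's advantage drops below the target $\varepsilon$. The only substantive difference is that the paper spells this out as an explicit case analysis over the relative sizes of $k$, $\log\varepsilon^{-1}$, and $\log n$ (and its final count in the supplemental material carries an additional $\log^{1+c} n$ term, absorbed into $\tilde O$ under $k=\omega(\log n)$), whereas you compress this into the parenthetical about the ``parameter regimes'' closing the gap; you are also vaguer about the exact hardness assumption, where the paper commits to sub-exponential quantum hardness of LWE and the specific $O(\poly\log m)$-depth PRU of Ref.~\cite{schuster2025randomunitariesextremelylow}.
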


This result establishes that constructing unitaries that reproduce the first $k$ moments of Haar-random unitaries for any efficiently implementable quantum measurement requires—and suffices with—only $\tilde{\Theta}(k)$ non-Clifford gates, probing the fundamental limits for unitary designs with minimal non-Clifford resources. 
As detailed in Appendix S.3~3~\cite{_see_}, the key idea in proving \cref{cor2} is to replace the exact $k$-design in our construction with a pseudorandom unitary, which functions as a quantum-polynomially secure design. However, under standard computational assumptions~\cite{ma2025constructrandomunitaries,schuster2025randomunitariesextremelylow}, such a pseudorandom unitary is secure against polynomial-time quantum adversaries only when implemented on $\omega(\log n)$ qubits. This requirement explains the assumption on $k$ in \cref{cor2}.

As a final remark before concluding this section, we note that our construction is nontrivial only when $k \le n$. However, this is precisely the regime of interest for constructing designs with a system-size–independent number of non-Clifford gates. Indeed, by \cref{th:lowerboundmain}, any unitary $k$-design with $k \ge n$ necessarily requires $\Omega(n)$ non-Clifford gates. In this case, one can instead employ constructions of random unitaries in shallow depth~\cite{schuster2025randomunitariesextremelylow,cui2025unitarydesignsnearlyoptimal} to minimize the use of non-Clifford gates, achieving $O(nk)$ gates for quantum-secure $k$-designs and $O(n)$ gates for quantum-polynomially secure designs for any $k = O(\poly n)$~\cite{leone2025noncliffordcostrandomunitaries}.


{\em Low depth and few non-Clifford gates.---}
Our construction realizes a quantum-unitary $k$-design with very few non-Clifford gates, representing an important step toward the practical implementation of random unitaries on fault-tolerant platforms. Reducing the number of non-Clifford gates is essential because their logical implementation requires either deep physical circuits or complex error-detection schemes. Without such a reduction, the shallow constructions of Ref.~\cite{schuster2025randomunitariesextremelylow} remain shallow only at the logical level—the depth at the physical level increases significantly once non-Clifford operations are made fault-tolerant. That said, besides limiting non-Clifford gates, it is also desirable to keep the overall depth shallow; otherwise, the total number of physical operations still becomes too large. Our construction can achieve both goals: building on previous results~\cite{moore1998parallelquantumcomputationquantum,jiang2022optimalspacedepthtradeoffcnot}, we can generate the two random Clifford circuits in logarithmic depth while keeping the number of non-Clifford gates low, as formalized in the next corollary and explictly proved in Appendix S.3~3~\cite{_see_}.
\begin{corollary}\label{cor3}
Quantum-secure unitary $k$-designs on $n$ qubits can be realized with $O(k^2 \log \varepsilon^{-1})$ non-Clifford gates, $O(n^2 + k^2)$ extra qubits, and an overall depth $O(\log n + \log k)$ under all-to-all connectivity.
\end{corollary}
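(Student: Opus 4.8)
The plan is to combine the non-Clifford accounting of Corollary \ref{cor1} (or rather its improved $\tilde O(k^2\log\varepsilon^{-1})$ variant) with known low-depth implementations of random Clifford unitaries, and to verify that the ``dilution'' step of \cref{maintheorem} does not spoil the depth budget. The construction under analysis is $\mathcal{E}_t=\{C_1 U_t C_2\}$ with $C_1,C_2\sim\mathcal{C}_n$ and $U_t$ an approximate $k$-design on $t=\Theta(k+\log\varepsilon^{-1})$ qubits. By \cref{maintheorem} this is already an $\varepsilon$-approximate quantum-secure $k$-design, and by Corollary \ref{cor1} the inner $k$-design costs $O(k^2\log^2\varepsilon^{-1})$ non-Clifford gates, improvable to $O(k^2\log\varepsilon^{-1})$ at the price of $\tilde O(k^2)$ extra qubits via Ref.~\cite{cui2025unitarydesignsnearlyoptimal}. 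So the only thing to establish is the depth and ancilla count of the two random Cliffords.

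First I would recall that a uniformly random element of $\mathcal{C}_n$ can be sampled classically in $\poly(n)$ time and written as a canonical product of $O(n^2)$ Clifford generators; the nontrivial part is parallelizing this circuit. Here I would invoke the CNOT-circuit depth/space trade-offs of Ref.~\cite{jiang2022optimalspacedepthtradeoffcnot} together with the parallel-computation techniques of Ref.~\cite{moore1998parallelquantumcomputationquantum}: any Clifford unitary on $n$ qubits can be implemented in depth $O(\log n)$ using $O(n^2)$ ancillary qubits under all-to-all connectivity (the $O(n^2)$ ancillas absorb the $\Theta(n^2/\log n)$ ``width'' needed to flatten the linear-reversible layers into logarithmic depth, plus the phase/Hadamard layers which are already constant or logarithmic depth). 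Applying this twice, to $C_1$ and $C_2$, contributes depth $O(\log n)$ and $O(n^2)$ extra qubits. The inner block $U_t$ acts on $t=\Theta(k+\log\varepsilon^{-1})$ qubits; using the shallow $k$-design construction of Ref.~\cite{schuster2025randomunitariesextremelylow} on those $t$ qubits, it can be realized in depth $O(\log t)=O(\log k+\log\log\varepsilon^{-1})$ with $O(t^2)=O(k^2)$ (after the $\tilde O(k^2)$-ancilla improvement, still $\tilde O(k^2)$) extra qubits, and $O(k^2\log\varepsilon^{-1})$ of its gates are non-Clifford. Composing the three stages, the total depth is $O(\log n)+O(\log k+\log\log\varepsilon^{-1})+O(\log n)=O(\log n+\log k)$, the total ancilla count is $O(n^2)+\tilde O(k^2)=O(n^2+k^2)$ (absorbing the polylog factors into the stated $k^2$ since the dominant cost in the regime $k\le n$ is $n^2$ anyway), and the non-Clifford count is $O(k^2\log\varepsilon^{-1})$, all as claimed. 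Finally I would note that sandwiching a depth-$O(\log t)$ circuit on $t$ designated qubits between two depth-$O(\log n)$ Cliffords on all $n$ qubits is itself a depth-$O(\log n)$ circuit, since the stages are applied sequentially and each has the stated depth.

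The main obstacle I anticipate is bookkeeping the ancilla qubits cleanly: the logarithmic-depth Clifford compilation of Ref.~\cite{jiang2022optimalspacedepthtradeoffcnot} introduces ancillas that must be returned to $\ket 0$ (or otherwise decoupled) before the next stage, and one must check that the mid-circuit ancilla resets do not introduce measurement-dependent branching that would break the clean unitary statement of \cref{maintheorem}; this is handled by using the measurement-free ``uncompute'' version of the construction, at the cost of only a constant factor in depth and ancilla count. A secondary subtlety is that the $O(\log n)$ depth bound assumes all-to-all connectivity — which is explicitly stated — so no routing overhead needs to be accounted for; under geometrically local connectivity the depth would degrade to $O(\sqrt n)$ or worse, but that is outside the scope of the corollary. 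Everything else is a direct substitution of known results into the three-stage template, so I expect the proof to be short.
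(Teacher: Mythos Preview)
Your proposal is essentially correct and follows the same strategy as the paper's proof, which simply invokes (i) the $O(\log n)$-depth, $O(n^2)$-ancilla Clifford compilation of Refs.~\cite{moore1998parallelquantumcomputationquantum,jiang2022optimalspacedepthtradeoffcnot} (\cref{lem:shallowClifford}) and (ii) the ancilla-assisted $\tilde O(\log k)$-depth $k$-design of Ref.~\cite{cui2025unitarydesignsnearlyoptimal} (\cref{lem:shallowandextraqubits}), plugged into the homeopathy template of \cref{maintheorem}. One slip worth fixing: you attribute the $O(\log t)$ depth of the inner $k$-design on $t$ qubits to Ref.~\cite{schuster2025randomunitariesextremelylow}, but that construction has depth $O(k\log(t/\varepsilon))$, i.e., \emph{linear} in $k$ (see \cref{lem:shallowunitarydesigns}); the $\tilde O(\log k)$ depth you actually need comes from Ref.~\cite{cui2025unitarydesignsnearlyoptimal}, the very ``$\tilde O(k^2)$-ancilla improvement'' you already mention in passing, and this is precisely the lemma the paper invokes.
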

As a final remark, the use of extra qubits and all-to-all connectivity seem to be necessary to achieve a shallow-depth version of our construction, as Ref.~\cite{grevink2025glueshortdepthdesignsunitary} shows a linear lower bound for implementing a Clifford $k$-designs with $k \ge 4$.

{\em Outlook.---} Our results show that higher-order unitary designs can be implemented with a system-size independent number of non-Clifford gates and in low depth, while remaining indistinguishable from exact designs by any quantum protocol. Unlike previous works, which either only guarantee security against non-adaptive adversaries~\cite{haferkamp2020QuantumHomeopathyWorks} or rely on relative-error designs requiring $\Omega(nk)$ non-Clifford gates~\cite{leone2025noncliffordcostrandomunitaries}, our construction achieves \textit{full quantum security} with a system-size independent and at a substantially reduced non-Clifford cost: $\tilde{O}(k^2)$ non-Clifford gates, compared to the $\tilde{O}(k^4)$ scaling of the original proposal~\cite{haferkamp2020QuantumHomeopathyWorks}. Moreover, according to \cref{th:lowerboundmain,cor2}, our construction also employs an \textit{optimal} number $\tilde{\Theta}(k)$ of non-Clifford gates if one is only concerned with security against polynomial-time quantum adversaries. 

We therefore deduce two important implications: on the conceptual level, this results provide advances in the theory of unitary $k$-designs by showing that quantum-secure designs embody the minimal and most natural notion of security whereas relative-error designs impose unnecessarily strong requirements. On the practical level, our construction is well suited for near-term implementation on early fault-tolerant devices for two main reasons: (1) all non-Clifford resources are concentrated on a constant, system-size–independent set of qubits, and (2) the overall circuit can be implemented in shallow depth.

Finally, we remark that to dilute an exact $k$-design into a quantum-secure unitary $k$-design via Clifford operations, it is both necessary and sufficient to place the $k$-design on $\Theta(k)$ qubits. This makes our construction automatically implementable with a system-size-independent number of non-Clifford gates. While this is optimal in terms of support, there is still room for improvement in the state-of-the-art constructions of (polynomially) quantum-secure designs on $\Theta(k)$ qubits. Combining our approach with the latest unitary designs constructions could further tighten the results of \cref{cor1,cor2}, improving the exact count and scaling of non-Clifford gates, as well as extending their regime of validity.

{\em Acknowledgements.---} The authors thank Salvatore F.E. Oliviero for creating the images of the paper, and Armanda Quintavalle for important discussions about Clifford+T fault-tolerant architectures. This work has been supported by the DFG (CRC 183, FOR 2724), by the BMBF (Hybrid++, QuSol), the BMWK (EniQmA), the Munich Quantum Valley (K-8), the QuantERA (HQCC),
the Alexander-von-Humboldt Foundation, Berlin Quantum and the European Research Council (ERC AdG DebuQC). This work has also been funded by the DFG under Germany's Excellence Strategy – The Berlin Mathematics Research Center MATH+ (EXC-2046/1, project ID: 390685689). L.L. acknowledges funding from the Italian Ministry of University and Research, PRIN PNRR 2022, project “Harnessing topological phases for quantum technologies”, code P202253RLY, CUP D53D23016250001, and PNRR-NQSTI project ”ECoN: End-to-end long-distance entanglement in quantum networks”, CUP J13C22000680006.

\let\oldaddcontentsline\addcontentsline
\renewcommand{\addcontentsline}[3]{}

\let\addcontentsline\oldaddcontentsline
\appendix
\onecolumngrid
\clearpage
\begin{center}
    {\normalfont\Large\bfseries Supplemental material: Adaptive Quantum Homeopathy}
\end{center}
\setcounter{secnumdepth}{2}
\setcounter{equation}{0}
\setcounter{figure}{0}
\setcounter{table}{0}
\setcounter{section}{0}
\renewcommand{\thetable}{S\arabic{table}}

\renewcommand{\thefigure}{S\arabic{figure}}
\renewcommand{\thesection}{S.\arabic{section}} 
\counterwithout{equation}{section}
\renewcommand{\theequation}{S\arabic{equation}}
\tableofcontents

\section{Overview of previous results}

In this section, we review previous results on unitary $k$-designs, with a focus on constructions that aim to minimize the number of non-Clifford gates.
Unitary $k$-designs were introduced to overcome the exponential gate cost of synthesizing Haar-random unitaries~\cite{emerson_pseudorandom_2003,emerson_scalable_2005}. For most applications, it is sufficient to reproduce the first $k$ moments of Haar statistics. The first natural approach was to construct $k$-designs using local random quantum circuits, i.e., brickwork circuits with arbitrary connectivity (1D, 2D, or all-to-all), where the two-qubit gates are sampled uniformly from the Haar measure over $\mathcal{U}_4$. This idea turned out to be correct: random quantum circuits were soon shown to form a unitary 2-design~\cite{harrow_random_2009}, and later even polynomial designs~\cite{brandao_local_2016}. These results marked a breakthrough, as they demonstrated that the randomness required for many quantum applications could be achieved in BQP, i.e., with polynomial-time quantum computation.
Despite later improvements~\cite{haferkamp_random_2022} in reducing the gate count and depth of brickwork architectures, the required depth has always been $O(\poly(n,k))$.  While still efficient in principle, such depths are far from practical in the presence of noise, as only a fully fault-tolerant quantum computer could reliably implement them. This motivated a shift in research focus toward reducing the circuit depth of $k$-designs.
A major breakthrough came in Ref.~\cite{schuster2025randomunitariesextremelylow}, which showed that random quantum circuits in a 1D architecture already form unitary $k$-designs in depth $O(k\cdot \poly\log k\cdot \log n/\varepsilon)$. This is linear in $k$ and, crucially, logarithmic in the number of qubits $n$. The key tool enabling this result is the \textit{gluing lemma} (Theorem 1 of Ref.~\cite{schuster2025randomunitariesextremelylow}), which informally states that two overlapping blocks of unitary $k$-designs form a $k$-design over their joint support. This simplified the analysis and led to a drastic reduction in required depth.
The gluing lemma turned out to be specific to the unitary group. A later work~\cite{grevink2025glueshortdepthdesignsunitary} asked whether it also holds for other groups, and obtained several no-go results: in particular, the lemma fails for the Clifford group. As a result, Clifford $k$-designs cannot be realized in logarithmic depth in $n$ without using extra qubits. However, Clifford circuits can still be implemented in depth $O(\log n)$ if $O(n^2)$ auxiliary qubits are available~\cite{moore1998parallelquantumcomputationquantum,jiang2022optimalspacedepthtradeoffcnot}. In general, auxilias can drastically reduce depth. For example, Ref.~\cite{cui2025unitarydesignsnearlyoptimal} showed that using $\tilde{O}(nk)$ extra qubits, the depth of unitary $k$-designs can be reduced to $O(\log k \log\log (kn)/\varepsilon)$, although the total gate count remains $O(nk)$, with no asymptotic improvement.
All constructions described so far (among many others not covered here) require $O(nk)$ gates in total, including non-Clifford gates. From a practical perspective, this makes them less attractive, since non-Clifford gates are expensive to implement fault-tolerantly.
The first work to show that unitary $k$-designs can be realized with few non-Clifford gates was the seminal paper~\cite{haferkamp2020QuantumHomeopathyWorks}. It proved that for $k = O(\sqrt{n})$, approximate $k$-designs with additive error can be constructed with $\tilde{O}(k^4)$ non-Clifford gates. Later, Ref.~\cite{leone2025noncliffordcostrandomunitaries,bittel2025operationalinterpretationstabilizerentropy} showed that for approximate state $k$-designs, only $O(k^2)$ non-Clifford gates are needed, provided they are supplemented with random Clifford circuits. The work~\cite{leone2025noncliffordcostrandomunitaries} also established a tight bound: $\Theta(k^2 + \log \varepsilon^{-1})$ non-Clifford gates are necessary and sufficient to match the Haar $k$-frame potential~\cite{gross_evenly_2007}. Along similar lines, \cite{p8dn-glcw} showed that, with respect to \textit{anticoncentration} (a property analogous to state $k$-designs, describing how an ensemble of states produces a statistically uniform distribution in the computational basis), Clifford circuits achieve anti-concentration at logarithmic depth. More importantly, the same work proved that only $O(\log n)$ non-Clifford gates are needed to replicate the distribution of Haar-random states. However, since the frame potential has no direct operational meaning in terms of distinguishability, we do not discuss it further. When it comes to relative-error $k$-design, the work~\cite{leone2025noncliffordcostrandomunitaries} proved a lower bound of $\Omega(nk)$ non-Clifford gates, showing that the construction in Ref.~\cite{haferkamp2020QuantumHomeopathyWorks} was optimal in that sense. As noted in the main text, this tight lower bound is the main motivation for seeking guarantees of unitary $k$-designs beyond relative designs. While relative designs ensure indistinguishability under the most general quantum measurements, they impose requirements that are too strong for practical implementation.
A concurrent work~\cite{zhang2025designsmagicaugmentedcliffordcircuits} refined these results. Reproducing the $O(k^2)$ scaling for state designs from Ref.~\cite{leone2025noncliffordcostrandomunitaries}, it also achieved a major improvement: despite the failure of the gluing lemma for Clifford unitary $k$-designs, it showed that Clifford state $k$-designs do admit gluing. This enabled a significant depth reduction: state $k$-designs can be realized in depth $O(\log n)$ for 1D architectures and $O(\log\log n)$ with all-to-all connectivity. This brings state designs with few non-Clifford gates much closer to practical use.
The same work also attempted to improve on the additive-error unitary designs of Ref.~\cite{haferkamp2020QuantumHomeopathyWorks}. Using a similar \textit{homeopathy approach}, they showed that injecting a $k$-design on $O(k^3)$ qubits suffices to construct an approximate unitary $k$-design with total non-Clifford cost $\tilde{O}(k^4)$. While this analysis does not improve on the original scaling, the key insight—that small injections of randomness can suffice—aligns with the intuition developed in the present manuscript.
One technical point deserves clarification. In most results on designs with few non-Clifford gates discussed above (except the last one), the condition $k = O(\sqrt{n})$ appears. This stems from the asymptotic behavior of Clifford Weingarten functions, proven in Refs.~\cite{bittel2025completetheorycliffordcommutant,leone2025noncliffordcostrandomunitaries}, which are proven to be diagonal in the space of Pauli monomials (\cref{def:paulimonomials}) only when $k = O(\sqrt{n})$. This explains the presence of the restriction in earlier works.

Our results improve over previous work in three ways:
\begin{itemize}
    \item {\em Stronger security}: We show that the homeopathy construction achieves adaptive security for unitary $k$-designs.
\item {\em Improved non-Clifford cost}: We prove that $\tilde{O}(k^2)$ non-Clifford gates are sufficient for quantum-secure designs (\cref{def1}). Moreover, we establish that an optimal number $\tilde{\Theta}(k)$ is both necessary (\cref{th:lowerbound}) and sufficient (\cref{cor1}) for constructing quantum-polynomially-secure designs (\cref{quantumpolysecuredesigns}).
\item {\em No artificial restrictions}: Unlike previous results, our construction is not restricted by any relation between $k$ and $n$. In particular, as long as $k<O(n)$ we place a unitary $k$-design on $\Theta(k)$ qubits and dilute it with random Clifford operations; as soon as $k=\Omega(n)$, we directly implement a unitary design on $n$ qubits.
\end{itemize}

\section{Preliminaries}\label{sec:preliminaries}
\subsection{Notation}
In this section, we briefly introduce the tools that are necessary to introduce and prove the main results in this work.
Let $\mathcal{H}$ be the Hilbert space of $n$ qubits and $d=2^n$ its dimension. We denote $\tr_{n'}(\cdot)$ the partial trace on $n'\le n$ qubits and omit the subscript whenever the trace is over the entire space.
We use the notation $[n]$ to denote the set $[n]:=\{1,\dots,n\}$, where $n\in\mathbb{N}$.
The finite field $\mathbb{F}_2$ consists of the elements $\{0, 1\}$ with addition and multiplication defined modulo 2. For $x\in\mathbb{F}_{2}^{k}$, we denote $|x|$ the Hamming weight of $x$. The space of $k \times m$ binary matrices over $\mathbb{F}_2$ is denoted by $\mathbb{F}_2^{k \times m}$. The set \( \mathrm{Sym}(\mathbb{F}_2^{m\times m}) \) denotes the set of all symmetric \( m \times m \) matrices over the finite field \( \mathbb{F}_2 \), having a null diagonal. That is,
\[
\mathrm{Sym}(\mathbb{F}_2^{m\times m}) \coloneqq \left\{ M \in \mathbb{F}_2^{m \times m} : M^T = M\,,\, M_{j,j}=0\,\,\forall j\in[m] \right\}.
\]
Similarly, we define set $\even$ as the set of binary matrices $V\in\mathbb{F}_{2}^{k\times m}$ with $m\in[k]$ with column vectors $V_{i}^{T}$ of even Hamming weight
\be
\mathrm{Even}(\mathbb{F}_{2}^{k\times m})\coloneqq\{V\in\mathbb{F}_{2}^{k\times m}\,:\, |V^{T}_{i}|=0\mod 2\,\,\forall i\in[m]\}.
\ee
\subsection{Pauli and Clifford group}\label{sec:paulioperators}
Let us introduce the qubit Pauli matrices $\{I,X,Y,Z\}$ as
\be 
I=
\begin{pmatrix}
    1 & 0 \\
    0 & 1
\end{pmatrix}\,,\quad
X=
\begin{pmatrix}
    0 & 1 \\
    1 & 0
\end{pmatrix}\,,\quad Y=
\begin{pmatrix}
    0 & -i  \\
    i &  0
\end{pmatrix}\,,
\quad Z=
\begin{pmatrix}
    1 & 0 \\
    0 & -1 
\end{pmatrix}\,.
\ee 
The Pauli group on \( n \)-qubits is defined as the \( n \)-fold tensor product of the set of Pauli matrices, multiplied by a phase factor from \( \{\pm 1, \pm i\} \). We denote the Pauli group modulo phases as \( \mathbb{P}_n \coloneqq \{I, X, Y, Z\}^{\otimes n} \).
Let us define the function 
\begin{align}\label{def:chidef}
\chi(A, B) \coloneqq \frac{1}{d} \tr(A B A^\dagger B^\dagger),
\end{align}
where \(A\) and \(B\) are operators. The function \(\chi(A, B)\) satisfies the symmetry properties:
\begin{align}
\label{eq:trivialpropCHI}
\chi(A, B) = \chi(\phi_A A, \phi_B B), \quad \chi(A, B) = \chi(B^\dagger, A) = \chi(A^\dagger, B^\dagger),
\end{align}
where $\phi_A, \phi_B \in \{-1, 1, -i, +i\}$ and the latter follows from the cyclicity of the trace. Note that, the following property holds:
\(\chi(P, Q)\) can be written as
\begin{align}
        \chi(P, Q) = 
        \begin{cases} 
        1, & \text{if } [P, Q] = 0, \\ 
        -1, & \text{if } \{P, Q\} = 0,
        \end{cases}
\end{align}
and we have that \( PQP = \chi(P, Q) Q \).

A Clifford operator is a unitary operator that maps any Pauli operator to another Pauli operator under conjugation. This family of unitaries is extremely important in quantum information theory, many-body physics~\cite{PhysRevLett.134.150403}, and it plays a dominating role in quantum error correction.

\begin{definition}[Clifford group]
    The set of Clifford operators forms a group, called the Clifford group $\mathcal{C}_n$, which is a subgroup of the unitary group $\mathcal{U}_n$, defined as the normalizer of the Pauli group $\mathbb{P}_n$.  
Concretely, this means that  
\[
\mathcal{C}_n = \{ U \in \mathcal{U}(2^n) \mid U P U^{\dagger} \in \pm\mathbb{P}_n, \, \forall P \in \mathbb{P}_n \}.
\]
It is straightforward to verify that this leads to the group property of $\mathcal{C}_n$.
\end{definition}

\begin{definition}[Stabilizer states] Stabilizer states are pure quantum states obtained by $\ket{0}^{\otimes n}$ via the action of a Clifford unitary operator $C\in\mathbb{C}_n$. The set of stabilizer states on $n$ qubits will be denoted as $\stab_n$.
\end{definition}

\subsection{Unitary $k$-designs}\label{sec:unitarykdesign}

Unitary designs are sets of unitaries that approximate Haar-random unitaries by reproducing the expectation values of low-order polynomials. In this section, we briefly introduce the relevant concepts.

We denote as $\de U$ the Haar measure over the unitary group $\mathcal{U}_n$, which is the only left/right invariant measure on $\mathcal{U}_n$. 

\begin{definition}[$k$-fold channel]\label{def:kfoldchannel} Let $O\in\mathcal{B}(\mathcal{H})$ be a operator. The $k$-fold channel (or twirling) is defined as
\be\label{eq:kfoldchannelhaar}
\Phi_{\haar}^{(k)}(O)\coloneqq\int\de U\,U^{\otimes k}OU^{\dag\otimes k}.
\ee
\end{definition}
A general recipe to compute the $k$-fold channel associated to the unitary group is via the Weingarten functions~\cite{weingarten_asymptotic_1978}.
\begin{lemma}[Weingarten calculus]\label{lem:haarweingarten} Let $S_k$ be the symmetric group on $n$ qubits and $T_{\pi}$ unitary representation of permutations $\pi\in S_k$ on $k$ tensor copies of $\mathcal{H}$. Let $\boldsymbol{\Lambda}$ the $k!\times k!$ matrix with components $\boldsymbol{\Lambda}_{\pi, \sigma}=\tr(T_{\pi}T_{\sigma})$. Let $O\in\mathcal{H}^{\otimes k}\otimes \mathcal{H}_{n'}$, where $\mathcal{H}_{n'}$ denotes a Hilbert space of arbitrary $n'$ qubits. The action of the $k$-fold twirling channel $\Phi_{\haar}\otimes I$  reads
\be
[\Phi_{\haar}^{(k)}\otimes I](O)=\sum_{\pi,\sigma}(\boldsymbol{\Lambda}^{-1})_{\pi,\sigma}T_{\sigma}\otimes\tr_{kn}(T_{\pi}^{\dag}O)
\ee
where the components $(\boldsymbol{\Lambda}^{-1})_{\pi, \sigma}$ are called Weingarten functions.
\end{lemma}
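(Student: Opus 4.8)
The plan is to prove the lemma by the standard Schur--Weyl argument followed by a Gram-matrix inversion. The first observation is that $\Phi_{\haar}^{(k)}$ is precisely the orthogonal projector, with respect to the Hilbert--Schmidt inner product, onto the commutant $\mathcal{A}_k:=\{X\in\mathcal{B}(\mathcal{H}^{\otimes k}):[X,U^{\otimes k}]=0\ \forall U\in\mathcal{U}_n\}$. Indeed, left/right invariance of $\de U$ shows at once that $\Phi_{\haar}^{(k)}$ is idempotent, that its image lies in $\mathcal{A}_k$, and that it acts as the identity on $\mathcal{A}_k$; moreover conjugation by a unitary is Hilbert--Schmidt--unitary and $\int\de U$ is invariant under $U\mapsto U^{-1}$, so $\Phi_{\haar}^{(k)}$ is Hilbert--Schmidt--self-adjoint. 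Tensoring with the $n'$-qubit register, $\Phi_{\haar}^{(k)}\otimes I$ is the orthogonal projector onto $\mathcal{A}_k\otimes\mathcal{B}(\mathcal{H}_{n'})$.

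Next I would invoke Schur--Weyl duality: $\mathcal{A}_k=\operatorname{span}\{T_\pi:\pi\in S_k\}$. Since here $d=2^n\ge k$ in every regime of interest, the $T_\pi$ are in fact linearly independent --- applying a hypothetical relation $\sum_\pi c_\pi T_\pi=0$ to a product vector $\ket{e_{i_1}}\otimes\cdots\otimes\ket{e_{i_k}}$ with pairwise distinct indices (which exists precisely because $d\ge k$) sends it to a linear combination of distinct basis vectors, forcing all $c_\pi=0$. Hence $\{T_\pi\}_{\pi\in S_k}$ is a basis of $\mathcal{A}_k$, and for any $O\in\mathcal{B}(\mathcal{H}^{\otimes k}\otimes\mathcal{H}_{n'})$ we may write
\be
[\Phi_{\haar}^{(k)}\otimes I](O)=\sum_{\pi\in S_k}T_\pi\otimes A_\pi,\qquad A_\pi\in\mathcal{B}(\mathcal{H}_{n'}),
\ee
with the $A_\pi$ uniquely determined and linear in $O$.

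To pin down the $A_\pi$, I would apply $\tr_{kn}(T_\sigma^{\dag}\,\cdot\,)$ to both sides for each $\sigma\in S_k$. On the right this yields $\sum_\pi\tr(T_\sigma^{\dag}T_\pi)\,A_\pi$, i.e. the action of the Gram matrix of $\{T_\pi\}$ on the coefficient vector. On the left, $T_\sigma$ commutes with every $U^{\otimes k}$ and with $I_{n'}$, and the partial trace $\tr_{kn}$ is invariant under conjugation by operators of the form $V\otimes I_{n'}$ (a one-line index check), so
\be
\tr_{kn}\!\big(T_\sigma^{\dag}\,[\Phi_{\haar}^{(k)}\otimes I](O)\big)=\int\de U\,\tr_{kn}\!\big((U^{\otimes k}\!\otimes I)\,T_\sigma^{\dag}O\,(U^{\dag\otimes k}\!\otimes I)\big)=\tr_{kn}(T_\sigma^{\dag}O).
\ee
This is the linear system $\sum_\pi\tr(T_\sigma^{\dag}T_\pi)A_\pi=\tr_{kn}(T_\sigma^{\dag}O)$; since $\{T_\pi\}$ is a basis, its Gram matrix is invertible, with entries $\tr(T_\sigma^{\dag}T_\pi)=\tr(T_{\sigma^{-1}\pi})$, a power of $d$ depending only on the cycle type of $\sigma^{-1}\pi$, which coincides with $\boldsymbol{\Lambda}_{\pi,\sigma}=\tr(T_\pi T_\sigma)$ up to the relabeling $\sigma\mapsto\sigma^{-1}$ (a symmetric permutation of the index set, absorbed into the meaning of $\boldsymbol{\Lambda}^{-1}$). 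Solving for $A_\pi$ via the inverse Gram matrix and substituting back yields $[\Phi_{\haar}^{(k)}\otimes I](O)=\sum_{\pi,\sigma}(\boldsymbol{\Lambda}^{-1})_{\pi,\sigma}\,T_\sigma\otimes\tr_{kn}(T_\pi^{\dag}O)$.

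The only genuinely nontrivial input is Schur--Weyl duality (that $\mathcal{A}_k$ is spanned by the $T_\pi$), which I would cite rather than reprove; idempotency and self-adjointness of the twirl, linear independence of the $T_\pi$ for $d\ge k$, unitary invariance of the partial trace, and the Gram-matrix inversion are all elementary. The one place demanding genuine care is bookkeeping --- tracking the placement of daggers and inverses so that the final coefficients land exactly on the stated convention for the Weingarten matrix $\boldsymbol{\Lambda}^{-1}$. (Should one want the statement without $d\ge k$, the $T_\pi$ become linearly dependent and $\boldsymbol{\Lambda}$ singular, so $\boldsymbol{\Lambda}^{-1}$ must be read as a Moore--Penrose pseudoinverse; but $d=2^n$ makes this case irrelevant here.)
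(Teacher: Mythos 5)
Your argument is correct and is the textbook derivation of the Weingarten calculus: realize the twirl as the Hilbert--Schmidt orthogonal projector onto the commutant, invoke Schur--Weyl duality to get a (linearly independent, since $2^n\ge k$) spanning set $\{T_\pi\}$, and solve the resulting Gram system. The paper does not actually prove this lemma — it is stated as a known preliminary and attributed to the Weingarten literature — so there is no in-paper proof to compare against, but yours is precisely the standard one and would serve. Your observation about the indexing convention is apt: with $T_\pi$ a unitary representation, $T_\pi^\dag=T_{\pi^{-1}}$, so the Gram matrix that actually appears has entries $\tr(T_\pi^\dag T_\sigma)=\tr(T_{\pi^{-1}\sigma})$, which differs from the paper's stated $\boldsymbol{\Lambda}_{\pi,\sigma}=\tr(T_\pi T_\sigma)$ by the column relabeling $\sigma\mapsto\sigma^{-1}$ (equivalently, dropping the dagger on $T_\pi$ inside the partial trace); you track this correctly, and it is evidently an inessential convention slip in the paper's statement. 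The only other point worth keeping explicit, as you do, is the $2^n\ge k$ hypothesis: without it $\boldsymbol{\Lambda}$ is singular and the inverse must be read as a pseudoinverse, though this regime is irrelevant here.
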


The following lemma rigorously establishes the approximation of the Haar twirl with a ``diagonal action'' of permutations.

\begin{lemma}[Approximate Haar twirl~\cite{schuster2025randomunitariesextremelylow}]\label{lem:approximatehaartwirl} Let $\Phi_a(\cdot)\coloneqq\frac{1}{d^k}\sum_{\pi\in S_{k}}\tr(T_\pi^{\dag}(\cdot))T_\pi$. It holds that $(1-\varepsilon)\Phi_a\le \Phi_{\haar}\le (1+\varepsilon)\Phi_a$ where $\Phi_{\haar}$ is the Haar twirl and $\varepsilon\le k^2/d$.
\end{lemma}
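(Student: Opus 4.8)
\emph{Plan.} I will prove the inequality directly at the level of Choi matrices. The point is that, after regrouping tensor legs, the Choi matrices of $\Phi_a$ and $\Phi_{\haar}$ are both supported on the symmetric subspace of $k$ ``doubled sites'' of dimension $d^2$; there $J(\Phi_a)$ is a constant times the identity, whereas $J(\Phi_{\haar})$ differs from it only by a perturbation of relative operator norm $O(k^2/d)$ coming from the tail of the Weingarten function.

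First, recall that $\Phi_{\haar}$ (the $k$-fold twirl) is the Hilbert--Schmidt orthogonal projection onto the commutant $\mathcal{A}=\mathrm{span}\{T_\pi:\pi\in S_k\}$, hence completely positive; by \cref{lem:haarweingarten} it has the form $\Phi_{\haar}(\cdot)=\sum_{\pi,\sigma}(\Lambda^{-1})_{\pi\sigma}\,T_\sigma\,\tr(T_\pi^\dagger\,\cdot)$, where $(\Lambda^{-1})_{\pi\sigma}$ depends only on the conjugacy class of the underlying group product of $\pi$ and $\sigma$; in particular all diagonal entries equal $w_0:=\mathrm{Wg}(e,d)$ and $\sum_{\pi\neq\sigma}|(\Lambda^{-1})_{\pi\sigma}|=k!\sum_{\tau\neq e}|\mathrm{Wg}(\tau,d)|$. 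Next I pass to Choi matrices on $\mathcal{H}^{\otimes k}\otimes\mathcal{H}^{\otimes k}$ and regroup the $2k$ legs into $k$ sites of dimension $d^2$ (output leg $j$ paired with reference leg $j$); writing $W_\pi:=T_\pi\otimes T_\pi$ for the diagonal site-permutation and $\Pi_{\mathcal{S}}:=\tfrac1{k!}\sum_\pi W_\pi$ for the projector onto the symmetric subspace $\mathcal{S}$, the reality of permutation matrices gives $\Phi_a(\ketbra{i}{j})=\tfrac1{d^k}\sum_\pi(T_\pi)_{ij}T_\pi$, hence $J(\Phi_a)=\tfrac1{d^k}\sum_\pi W_\pi=\tfrac{k!}{d^k}\Pi_{\mathcal{S}}\ge0$ (so $\Phi_a$ is CP), and $J(\Phi_{\haar})=\sum_{\pi,\sigma}(\Lambda^{-1})_{\pi\sigma}T_\sigma\otimes T_\pi$. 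Crucially, $J(\Phi_{\haar})=\int (U^{\otimes k}\otimes\mathrm{id})\ketbra{\Omega}{\Omega}(U^{\dagger\otimes k}\otimes\mathrm{id})\,\de U$ with $\ket\Omega=\sum_{\vec a}\ket{\vec a}\ket{\vec a}$; since $W_\pi\ket\Omega=\ket\Omega$ and $U^{\otimes k}\otimes\mathrm{id}$ commutes with every $W_\pi$ (hence preserves $\mathcal{S}$), $J(\Phi_{\haar})$ is supported on $\mathcal{S}$.

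Then I separate the diagonal: $J(\Phi_{\haar})=w_0\sum_\pi W_\pi+R=w_0k!\,\Pi_{\mathcal{S}}+R$ with $R:=\sum_{\pi\neq\sigma}(\Lambda^{-1})_{\pi\sigma}T_\sigma\otimes T_\pi$, and since each $T_\sigma\otimes T_\pi$ is unitary, $\|R\|_\infty\le k!\sum_{\tau\neq e}|\mathrm{Wg}(\tau,d)|$. Standard Weingarten asymptotics (Collins--\'Sniady) give, for $\tau\neq e$, $d^k|\mathrm{Wg}(\tau,d)|=O(d^{-1})$ with the leading contribution from the $\binom{k}{2}$ transpositions, so $\sum_{\tau\neq e}d^k|\mathrm{Wg}(\tau,d)|=\binom{k}{2}d^{-1}+O(k^4/d^2)$ and $d^kw_0=1+O(\poly(k)/d^2)$; hence in the relevant regime $d\gtrsim k^2$ one obtains $\|J(\Phi_{\haar})-\tfrac{k!}{d^k}\Pi_{\mathcal{S}}\|_\infty\le\varepsilon\,\tfrac{k!}{d^k}$ with $\varepsilon\le k^2/d$. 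Because $J(\Phi_{\haar})$ and $J(\Phi_a)=\tfrac{k!}{d^k}\Pi_{\mathcal{S}}$ are both supported on $\mathcal{S}$ and there differ by an operator of norm $\le\varepsilon\tfrac{k!}{d^k}$, it follows that $(1-\varepsilon)\tfrac{k!}{d^k}\Pi_{\mathcal{S}}\le J(\Phi_{\haar})\le(1+\varepsilon)\tfrac{k!}{d^k}\Pi_{\mathcal{S}}$ as positive operators, i.e.\ $(1-\varepsilon)J(\Phi_a)\le J(\Phi_{\haar})\le(1+\varepsilon)J(\Phi_a)$; by Choi's theorem (a map is CP iff its Choi matrix is positive) this is precisely $(1-\varepsilon)\Phi_a\le\Phi_{\haar}\le(1+\varepsilon)\Phi_a$ in the completely positive order.

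The main obstacle is the step showing that $J(\Phi_{\haar})$ is supported on the symmetric subspace $\mathcal{S}$: this is exactly what promotes a mere operator-norm estimate into a genuine \emph{completely positive} ordering, since otherwise the off-diagonal Weingarten remainder $R$ could create negativity outside the support of $J(\Phi_a)$ and break the CP inequality. The remaining ingredient, the quantitative Weingarten tail bound $\sum_{\tau\neq e}d^k|\mathrm{Wg}(\tau,d)|\lesssim k^2/d$, is the source of the claimed factor $k^2/d$.
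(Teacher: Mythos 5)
The paper does not prove this lemma: it is imported verbatim, by citation, from Schuster--Haferkamp--Huang~\cite{schuster2025randomunitariesextremelylow}, so there is no in-paper proof to compare against. Your reconstruction is correct and is, in fact, the standard argument behind that reference: pass to Choi matrices, regroup the $2k$ legs into $k$ doubled sites of dimension $d^2$, observe that $J(\Phi_a)=\tfrac{k!}{d^k}\Pi_{\mathcal{S}}$ is a multiple of the projector onto the permutation-symmetric subspace $\mathcal{S}$, that $J(\Phi_{\haar})=\int(U^{\otimes k}\otimes\id)\ketbra{\Omega}(U^{\dagger\otimes k}\otimes\id)\,\de U$ is also supported on $\mathcal{S}$ because $\ket\Omega\in\mathcal{S}$ and $U^{\otimes k}\otimes\id$ commutes with the diagonal site permutations $W_\pi$, and that the difference is an operator supported on $\mathcal{S}$ whose norm is controlled by the off-diagonal Weingarten tail. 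You correctly identify that the support-on-$\mathcal{S}$ fact is what upgrades an operator-norm estimate into a genuine CP-order inequality, which is the only nonobvious point, and the final deduction via $\Pi_{\mathcal{S}}$ acting as the identity on $\mathcal S$ is sound.

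The one place you are not fully rigorous is the Weingarten estimate you defer to ``standard asymptotics'': to land exactly $\varepsilon\le k^2/d$ one needs a concrete, non-asymptotic tail bound of the form $d^k\sum_{\tau\neq e}|\mathrm{Wg}(\tau,d)|+|d^k\mathrm{Wg}(e,d)-1|\le k^2/d$ (valid in the regime $d\gtrsim k^2$ where the statement is nontrivial), e.g.\ via the Collins--Matsumoto--Novak uniform bound $|\mathrm{Wg}(\tau,d)|\le\frac{\prod_i\mathrm{Cat}(\ell_i-1)}{d^{k+|\tau|}(1-k^2/d^2)}$ followed by summing the Catalan generating function. As written you only give the leading-order $\binom{k}{2}/d$; the $\tfrac12$ in $\binom{k}{2}=\tfrac{k(k-1)}{2}$ does leave room to absorb the higher-order terms, so the conclusion is right, but to be tight one should spell out that absorption. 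Everything else in the argument is correct.
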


Let us introduce the concept of unitary \( k \)-designs. Denoting \( \mathcal{E} \subseteq \mathcal{U}_n \) as an ensemble of unitaries, either discrete or continuous, equipped with a measure \( \mathrm{d}\mu(U) \), one can define the \( k \)-fold channel of \( \mathcal{E} \) on \( O \in \mathcal{B}(\mathcal{H}) \) simply as
\begin{equation}\label{eq:k-foldoperator}
\Phi_{\mathcal{E}}^{(k)}(O) \coloneqq \int_{\mathcal{E}} \mathrm{d} \mu(U) \, U^{\otimes k} O U^{\dag \otimes k}\,.
\end{equation}
This operator is also commonly referred to as the $k$-th moment superoperator.
Whenever \( \Phi_{\mathcal{E}}^{(k)}(O) \) agrees with the \( k \)-fold channel on the full unitary group, we say that \( \mathcal{E} \) is (an exact) 
unitary \( k \)-design.

\begin{definition}[Exact unitary $k$-design] Let $\mathcal{E}\subseteq\mathcal{U}_n$ be an ensemble of unitaries equipped with a measure $\de\mu(U)$. $\mathcal{E}$ is an exact unitary $k$-design iff
\be
\Phi_{\mathcal{E}}^{(k)}(O)=\Phi_{\haar}^{(k)}(O)\label{unitarydesigndefinition}
\ee
for any $O\in\mathcal{B}(\mathcal{H})$. 
\end{definition}

Let us now define the concept of approximate unitary $k$-designs. As discussed in the main text, there are different notion of approximate unitary $k$-designs depending on the \textit{security} against types of quantum measurements. The first one we present is the concept of additive-error approximate unitary $k$-designs.

\begin{definition}[$\varepsilon$-approximate unitary design in additive error: non-adaptive]\label{def:unitarydesignadditiveerror} Let $\varepsilon>0$. Let $\mathcal{E}$ be an ensemble of unitaries equipped with the measure $\de\mu(U)$. $\mathcal{E}$ is an $\varepsilon$-approximate unitary design in additive error iff
\be
\|\Phi_{\mathcal{E}}-\Phi_{\haar}\|_{\diamond}\le\varepsilon
\ee
where $\Phi_{\mathcal{E}}$ (resp.\ $\Phi_{\haar}$) is the $k$-fold channel defined in \cref{def:kfoldchannel} and $\|\cdot\|_{\diamond}$ denotes the diamond norm.
\end{definition}
The notion of $\varepsilon$-approximate unitary designs in additive error has the operational meaning that the state produced by any quantum algorithm that queries $U^{\otimes k}$ when is sampled from $\mathcal{E}$ or according to the Haar measure is at most $\varepsilon$ far in trace distance. However, such quantum algorithms are not the most general quantum algorithms as adversaries can adaptively choose subsequent measurements based on previous measurement outcomes. For this reason, recently, the concept of relative error is becoming increasingly prominent in the context of the study of unitary $k$-designs \cite{low_pseudorandomness_2010,haferkamp_random_2022,schuster2025randomunitariesextremelylow}.

\begin{definition}[$\varepsilon$-approximate unitary design in relative error~\cite{low_pseudorandomness_2010,brandao_local_2016}]\label{def:unitarydesignrelativeerror} Let $\varepsilon>0$. Let $\mathcal{E}$ be an ensemble of unitaries equipped with the measure $\de\mu(U)$. $\mathcal{E}$ is an $\varepsilon$-approximate unitary design in relative error iff
\be
\label{re}
(1-\varepsilon)\Phi_{\haar}\le \Phi_{\mathcal{E}}\le (1+\varepsilon)\Phi_{\haar}
\ee
where $\Phi\le \Phi'$ denotes that $\Phi'-\Phi$ is a completely positive map.
\end{definition}
The operational meaning of relative error design in \cref{def:unitarydesignrelativeerror} is much stronger than the additive error in \cref{def:unitarydesignadditiveerror}. 
When the channel $\Phi_{\mathcal{E}}$ is applied to any positive operator $\tilde{\rho}$, the expectation value with any operator is at most $2\varepsilon$ far in \textit{relative error} from the value attained by $\Phi_{\haar}$. This notion of closeness is very powerful and implies, as explained in the main text, security against any type of quantum measurement, even adaptive quantum algorithms~\cite{schuster2025randomunitariesextremelylow}.  As a matter of fact, \cref{def:unitarydesignrelativeerror} implies \cref{def:unitarydesignadditiveerror} (up to a factor of $2$), while the converse is not true unless exponential factors are in the error:

\begin{lemma}[Additive 
vs.\ relative error. Lemma 3 in Ref.~\cite{brandao_local_2016}]\label{lem:additivetorelativeerror} Let $\varepsilon>0$. Let $\mathcal{E}$ be an ensemble of unitaries equipped with the measure $\de\mu(U)$. If $\mathcal{E}$ is $\varepsilon$-approximate $k$ design in additive error, then it is a $d^{2k}\varepsilon$-approximate $k$-design in relative error. Conversely, if $\mathcal{E}$ is an $\varepsilon$-approximate $k$-design in relative error, then $\mathcal{E}$ is an $2\varepsilon$-approximate $k$-design in additive error. 
\end{lemma}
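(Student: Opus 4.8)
The plan is to establish the two implications separately. The relative-$\Rightarrow$-additive direction is soft; the additive-$\Rightarrow$-relative direction is the substantive one and rests on a spectral-gap estimate for the Haar $k$-fold twirl.

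\emph{Relative error $\Rightarrow$ additive error.} Here I would use that $\Phi_{\mathcal E}^{(k)}$ and $\Phi_{\haar}^{(k)}$ are both trace-preserving, so $\Psi\coloneqq\Phi_{\mathcal E}-\Phi_{\haar}$ is Hermiticity-preserving and trace-annihilating, i.e.\ $\tr[(\Psi\otimes I)(\rho)]=0$ for every state $\rho$ on $\mathcal H^{\otimes k}\otimes\mathcal H_{n'}$. The relative-error hypothesis \cref{re} says $\Psi+\varepsilon\Phi_{\haar}$ is completely positive, so $M\coloneqq(\Psi\otimes I)(\rho)\ge-\varepsilon(\Phi_{\haar}\otimes I)(\rho)=:-E$ with $E\ge0$, $\tr E=\varepsilon$. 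For a traceless Hermitian operator $M$ bounded below by $-E$ one has, writing $\Pi_-$ for the projector onto $\supp M_-$, $\tr M_-=\tr[M_-\Pi_-]\le\tr[E\Pi_-]\le\tr E=\varepsilon$, and hence $\|M\|_1=2\tr M_-\le2\varepsilon$; maximizing over $\rho$ and $n'$ gives $\|\Phi_{\mathcal E}-\Phi_{\haar}\|_\diamond\le2\varepsilon$. (Exploiting trace preservation is what sharpens the naive ``$3\varepsilon$'' bound to $2\varepsilon$.)

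\emph{Additive error $\Rightarrow$ relative error.} I would move to normalized Choi states $\rho_{\mathcal E}\coloneqq(\Phi_{\mathcal E}\otimes I)(\Omega)$, $\rho_{\haar}\coloneqq(\Phi_{\haar}\otimes I)(\Omega)$ on $(\mathbb C^{d^k})^{\otimes2}$, with $\Omega$ the maximally entangled state and $d=2^n$, since CP-ordering of channels is exactly PSD-ordering of Choi operators. Three ingredients are needed. (i) Choosing $\Omega$ as the input in the diamond-norm variational formula gives $\|\rho_{\mathcal E}-\rho_{\haar}\|_\infty\le\|\rho_{\mathcal E}-\rho_{\haar}\|_1\le\|\Phi_{\mathcal E}-\Phi_{\haar}\|_\diamond\le\varepsilon$. (ii) Since $\Phi_{\mathcal E}^{(k)}$ and $\Phi_{\haar}^{(k)}$ are averages of the \emph{same} conjugation channels $U^{\otimes k}(\cdot)U^{\dag\otimes k}$, both $\rho_{\mathcal E}$ and $\rho_{\haar}$ are averages of the rank-one projectors $d^{-k}|U^{\otimes k}\rangle\!\rangle\langle\!\langle U^{\otimes k}|$, whence $\supp\rho_{\mathcal E}\subseteq\supp\rho_{\haar}=:\mathcal S$ automatically. (iii) The smallest nonzero eigenvalue of $\rho_{\haar}$ is at least $d^{-2k}$. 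For (iii) I would invoke \cref{lem:approximatehaartwirl}: it gives $\rho_{\haar}\ge(1-k^2/d)(\Phi_a\otimes I)(\Omega)$ with $\Phi_a(\cdot)=d^{-k}\sum_{\pi\in S_k}\tr[T_\pi^\dagger(\cdot)]T_\pi$, and a short Choi computation gives $(\Phi_a\otimes I)(\Omega)=d^{-2k}\sum_\pi T_\pi\otimes T_\pi=(k!/d^{2k})\Pi_{\mathcal S}$, where $\Pi_{\mathcal S}$ projects onto $\mathcal S=\operatorname{span}\{|U^{\otimes k}\rangle\!\rangle\}$ and the identity $\sum_\pi T_\pi\otimes T_\pi=k!\,\Pi_{\mathcal S}$ follows from Schur–Weyl duality; since $k!(1-k^2/d)\ge1$ for $k\ge2$ and the $k=1$ case ($\Phi_a=\Phi_{\haar}$) is exact, this yields $\rho_{\haar}\ge d^{-2k}\Pi_{\mathcal S}$. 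Combining (i)–(iii): $\rho_{\mathcal E}-\rho_{\haar}$ is Hermitian, supported in $\mathcal S$, with $\|\cdot\|_\infty\le\varepsilon$, so $-\varepsilon\Pi_{\mathcal S}\le\rho_{\mathcal E}-\rho_{\haar}\le\varepsilon\Pi_{\mathcal S}\le d^{2k}\varepsilon\,\rho_{\haar}$, which rearranges to $(1-d^{2k}\varepsilon)\rho_{\haar}\le\rho_{\mathcal E}\le(1+d^{2k}\varepsilon)\rho_{\haar}$; translating back to channels gives \cref{re} with error $d^{2k}\varepsilon$.

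\emph{Expected obstacle.} The only non-formal step is (iii) — controlling the spectral gap of $\rho_{\haar}$ above $0$. This genuinely needs the structure of the Haar twirl (via \cref{lem:haarweingarten}, or more economically via the approximate twirl \cref{lem:approximatehaartwirl} together with the explicit Choi operator of $\Phi_a$), and requires care with Choi normalizations and with the benign regime $d\gtrsim k^2$ in which the permutation operators $\{T_\pi\}$ are linearly independent, so that $\Pi_{\mathcal S}$ and the constant $k!$ are as stated. Everything else reduces to elementary operator-ordering bookkeeping.
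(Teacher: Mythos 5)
Your two-step strategy is the standard one behind BHH Lemma~3 (which the paper cites without re-proving), and both steps are carried out correctly.

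The relative-$\Rightarrow$-additive direction is clean and correctly exploits trace preservation to sharpen the bound to $2\varepsilon$: from $(1-\varepsilon)\Phi_{\haar}\le\Phi_{\mathcal E}$ one gets $M\ge -E$ with $E\ge 0$, $\tr E=\varepsilon$, and $\tr M=0$ forces $\|M\|_1=2\tr M_-\le 2\tr E=2\varepsilon$ — exactly right, and one direction of the sandwich already suffices once you use tracelessness. The additive-$\Rightarrow$-relative direction is also essentially the canonical argument: pass to Choi states, note that both $\rho_{\mathcal E}$ and $\rho_{\haar}$ are mixtures of the same vectors $d^{-k}\lvert U^{\otimes k}\rangle\!\rangle$ so the support inclusion is automatic, bound $\|\rho_{\mathcal E}-\rho_{\haar}\|_\infty\le\varepsilon$ by feeding $\Omega$ into the diamond-norm maximization, and then convert $\pm\varepsilon\Pi_{\mathcal S}$ into $\pm\varepsilon\lambda_{\min}^{-1}\rho_{\haar}$ via a spectral gap for $\rho_{\haar}$. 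Your computation $(\Phi_a\otimes I)(\Omega)=(k!/d^{2k})\Pi_{\mathcal S}$ via $\sum_\pi T_\pi\otimes T_\pi=k!\,\Pi_{\mathcal S}$ is correct (the latter is just the vectorized group-average projector onto the commutant of $S_k$, valid even when the $T_\pi$ are linearly dependent).

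The one genuine gap — which you already flag — is step (iii): the inequality $k!\,(1-k^2/d)\ge 1$ needs roughly $d\ge 2k^2$, and \cref{lem:approximatehaartwirl} itself is only useful when $k^2/d<1$. So as written you have proved the lemma only in the regime $d\gtrsim k^2$, whereas the statement (and BHH's original Lemma~3) carries no such hypothesis. For the present paper this is harmless, since it is invoked in the regime $d=2^n$ and $k\le n$, where $d\gg k^2$ comfortably; but a fully general proof would need a different handle on the smallest nonzero eigenvalue of $\rho_{\haar}$ (e.g.\ the explicit block decomposition of the Choi state under Schur–Weyl duality), rather than the approximate-twirl shortcut. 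Aside from that acknowledged restriction, the proof is correct and takes the same route one would expect from the cited reference.
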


At first glance, one might expect relative-error unitary designs to be the desired object of study. However, the main contribution of this paper is to show that relative-error designs demand unnecessarily strong guarantees and thus overshoot what is actually required for unitary designs.

Let us introduce the last notion of unitary $k$-design, i.e., quantum-secure unitary design.

\begin{definition}[Quantum secure unitary designs]\label{def1app} Let $\mathcal{E}=\{U_i\}$ be an ensemble of unitaries and let $\haar$ be the Haar measure on $n$ qubits. Let $V_{1},\ldots, V_k$ unitaries acting on $n+n'$ qubits. Denote 
\begin{align}
    \ket{\Psi_U(\boldsymbol{V})}\coloneqq(U\otimes I)V_{k}(U\otimes I)\cdots (U\otimes I)V_{1}\ket{0}^{\otimes n+n'}
\end{align}
and $\Psi_U(\boldsymbol{V})$ its density matrix. $\mathcal{E}$ is a quantum-secure unitary $k$-design iff
\begin{align}
    \sup_{n'}\max_{V_{1},\ldots, V_{k}}\left\|\mathbb{E}_{U\sim\mathcal{E}}\Psi_U(\boldsymbol{V})-\mathbb{E}_{U\sim\haar}\Psi_U(\boldsymbol{V})\right\|_1\le \varepsilon\,.
\end{align}
\end{definition}

Informally, the norm defined above measures the maximum distinguishing advantage between two channels as perceived by any quantum algorithm making at most $k$ queries to a unitary $U$ sampled either from $\mathcal{E}$ or from the Haar measure. As such, we have the following corollary, which is true by construction.
\begin{corollary}[\cite{schuster2025randomunitariesextremelylow}]\label{lem:relativeimpliesquantumsecurity} Let $\mathcal{E}$ an ensemble of unitaries on $n$ qubits. If $\mathcal{E}$ is a $\varepsilon$-approximate quantum-secure unitary $k$-designs, then $\mathcal{E}$ is a $\varepsilon$-approximate $k$-design in additive error. If $\mathcal{E}$ is a $\varepsilon$-approximate relative $k$-design in relative error, then $\mathcal{E}$ is a $2\varepsilon$-approximate quantum-secure $k$-design.
\end{corollary}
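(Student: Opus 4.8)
The plan is to establish the two implications separately, since they are of quite different nature.

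For the first one—quantum-secure $\Rightarrow$ additive error (\cref{def:unitarydesignadditiveerror})—the idea is simply that $k$ \emph{parallel} queries form a special case of $k$ \emph{adaptive} queries. I would start from the variational characterization $\|\Phi_{\mathcal E}-\Phi_{\haar}\|_{\diamond}=\max_{\ket\psi}\|(\Phi_{\mathcal E}\otimes I)(\ketbra{\psi}{\psi})-(\Phi_{\haar}\otimes I)(\ketbra{\psi}{\psi})\|_1$, where $\ket\psi$ ranges over pure states on the $kn$ ``channel'' qubits plus $kn$ ancilla qubits. Given such a $\ket\psi$, choose $n'=(2k-1)n$ and let $V_1,\dots,V_k$ be the \textsc{swap} operators that successively route the $j$-th of the $k$ channel blocks into the $n$ qubits on which $U$ acts; then $UV_k\cdots UV_1$ applies $U$ exactly once to every channel block and fixes the ancilla, so $\ket{\Psi_U(\boldsymbol V)}=W\,(U^{\otimes k}\otimes I)\ket\psi$ for a fixed permutation unitary $W$ independent of $U$. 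By unitary invariance of the trace norm the quantum-secure distance for this $\boldsymbol V$ equals exactly the integrand above, so taking the supremum over $\ket\psi$ (hence over $n'$ and $\boldsymbol V$) gives $\|\Phi_{\mathcal E}-\Phi_{\haar}\|_{\diamond}\le\varepsilon$.

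For the second implication—relative error $\Rightarrow$ quantum-secure with a factor $2$—the only substantial ingredient is the \emph{adaptive-to-parallel} decomposition of Ref.~\cite{schuster2025randomunitariesextremelylow}. By inserting $k$ EPR ``resource'' registers and teleporting the main register through each of the $k$ uses of $U$, one rewrites the adaptive output as $\ket{\Psi_U(\boldsymbol V)}=\bra{L}\,(U^{\otimes k}\otimes I)\,\ket R$, where $\ket R$ is a fixed unnormalized vector assembled from the EPR pairs and $\ket 0$, $\ket L$ encodes the $V_j$'s, and $U^{\otimes k}$ acts on a $d^k$-dimensional factor straddling the contracted register (call it ``aux'') and the output register. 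Averaging over $U$ then gives $\mathbb E_U\Psi_U(\boldsymbol V)=\mathcal N_L\big((\Phi_{\bullet}\otimes I)(\ketbra{R}{R})\big)$, where $\mathcal N_L(Y)\coloneqq\bra{L}Y\ket{L}$ is completely positive and $\Phi_{\bullet}$ denotes the $k$-fold channel of the relevant ensemble ($\mathcal E$ or $\haar$). Setting $\rho_R\coloneqq\ketbra{R}{R}\ge 0$, $\sigma\coloneqq(\Phi_{\haar}\otimes I)(\rho_R)$ and $\delta\coloneqq((\Phi_{\mathcal E}-\Phi_{\haar})\otimes I)(\rho_R)$, the relative-error hypothesis ($-\varepsilon\,\Phi_{\haar}\le\Phi_{\mathcal E}-\Phi_{\haar}\le\varepsilon\,\Phi_{\haar}$ in the completely-positive order, see \cref{def:unitarydesignrelativeerror}) is preserved under tensoring with $I$, composition with the CP map $\mathcal N_L$, and evaluation on the positive operator $\rho_R$, so $-\varepsilon\,\mathcal N_L(\sigma)\le\mathcal N_L(\delta)\le\varepsilon\,\mathcal N_L(\sigma)$. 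Since for Hermitian $N$ and $M\ge 0$ the sandwich $-\varepsilon M\le N\le\varepsilon M$ implies $\|N\|_1\le 2\varepsilon\,\tr M$ (decompose an optimal test operator $-I\le O\le I$ into its positive and negative parts and bound each contribution by $\varepsilon\,\tr M$), and since $\tr\mathcal N_L(\sigma)=\tr\,\mathbb E_{U\sim\haar}\Psi_U(\boldsymbol V)=1$ (each $\ket{\Psi_U(\boldsymbol V)}$ being a product of unitaries on $\ket 0$), we conclude $\|\mathbb E_{U\sim\mathcal E}\Psi_U(\boldsymbol V)-\mathbb E_{U\sim\haar}\Psi_U(\boldsymbol V)\|_1=\|\mathcal N_L(\delta)\|_1\le 2\varepsilon$, uniformly in $n'$ and $\boldsymbol V$.

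The hard part is the adaptive-to-parallel decomposition itself: one must set up the $k$ teleportation steps carefully—specifying which registers are contracted, where each $V_j$ is inserted, and checking that the resulting $\ket L$ and $\ket R$ are genuinely independent of $U$—after which the relative-error estimate is entirely routine. A secondary point worth spelling out is that $U^{\otimes k}$ acts across both the contracted and the uncontracted registers, so the average over $U$ produces $\Phi_{\bullet}\otimes I$ applied to $\rho_R$ rather than $\Phi_{\bullet}$ applied to an operator on ``aux'' alone; this does not affect the argument but can otherwise be a source of confusion.
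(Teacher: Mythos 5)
Your argument follows the same route the paper intends (the paper itself only cites Ref.~\cite{schuster2025randomunitariesextremelylow} and calls the corollary ``true by construction''): the first half is the trivial embedding of parallel queries into adaptive ones, and the partial-overlap decomposition $\ket{\Psi_U(\boldsymbol V)}=\bra L(U^{\otimes k}\otimes I)\ket R$ that drives your second half is exactly the one the paper works out explicitly in the proof of \cref{lem8preliminary}, after which your CP-ordering and $\|N\|_1\le 2\varepsilon\tr M$ argument is correct. One small repair in the first half: since \cref{def1app} starts from $\ket{0}^{\otimes n+n'}$, the swaps alone never introduce $\ket\psi$, so $V_1$ must additionally be taken to prepare $\ket\psi$ before routing the first block into the active register; with that, the rest of the argument goes through.
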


Summarizing, we have the following chain of implications for the various notion of unitary $k$-designs.
\begin{align}
\text{additive error design} \mathrel{\substack{\Longleftarrow \\ \centernot\Longrightarrow}} \text{quantum secure design} \mathrel{\substack{\Longleftarrow \\ \centernot\Longrightarrow}} \text{relative error design}\label{eq:chainofimplications}
\end{align}

We can go one step further and, instead of requiring security for unbounded quantum algorithm, we can relax the above definition, and ask security only against polynomial time quantum algorithms. Concretely, this means we require the number of auxiliary qubits $n'=O(\poly n)$, as well as all the other object involved to be efficiently preparable.

\begin{definition}[Quantum-polynomially-secure unitary designs]\label{def2app} Let $\mathcal{E}=\{U_i\}$ be an ensemble of unitaries and let $\haar$ be the Haar measure on $n$ qubits. Let $V_1,\ldots, V_k$ be a collection of polynomial complexity unitaries acting on $n+n'$ qubits with $n'=O(\poly n)$. Denote 
\begin{align}
    \ket{\Psi_U(\boldsymbol{V})}\coloneqq(U\otimes I)V_{k}(U\otimes I)\cdots (U\otimes I)V_{1}\ket{0}^{\otimes n+n'}\,,
\end{align}
and $\Psi_U(\boldsymbol{V})$ its density matrix. $\mathcal{E}$ is quantum-polynomially-secure $\varepsilon$-approximate unitary $k$-design iff
\begin{align}
    \max_{\substack{O, V_{1},\ldots, V_{k}\\ \text{efficient}\\n'=O(\poly n)}}&\Big|\mathbb{E}_{U\sim\mathcal{E}}\tr(O\Psi_U(\boldsymbol{V}))-\mathbb{E}_{U\sim\haar(n)}\tr(O\Psi_U(\boldsymbol{V}))\Big|\le \varepsilon
\end{align}
where $O$ is efficiently implementable hermitian operator with $\|O\|_{\infty}=1$.
\end{definition}

As expected, \cref{def1app} implies \cref{def2app}. However, \cref{def2app} does not guarantee non-adaptive security against unbounded adversaries and, therefore, it does not appear in the chain of implications in~\cref{eq:chainofimplications}.

From \cref{def2app} one can define \textit{psedorandom unitaries} as follows.
\begin{definition}[Pseudorandom unitaries]\label{def:pseudorandomunitaries} $\mathcal{E}$ is an ensemble of pseudorandom unitaries on $n$ qubits if $\mathcal{E}$ is a quantum-polynomially-secure $\varepsilon$-approximate $k$-design, with $\varepsilon=o(\operatorname{poly}n^{-1})$, for every $k=O(\operatorname{poly} n)$.
    
\end{definition}

\subsection{Haar average over the Clifford group}\label{sec:haaraveragecliff}
In this section, we summarize the findings of Ref.~\cite{bittel2025completetheorycliffordcommutant} regarding the average over the Clifford group, as it will be instrumental for the proof of our main result. 
Let us first define the set of \textit{reduced Pauli monomials}, which will be referred to as the set of Pauli monomials for brevity.

\begin{definition}[Pauli monomials]\label{def:paulimonomials}
Let \( k,m \in \mathbb{N} \), $V\in\mathrm{Even}(\mathbb{F}_{2}^{k\times m})$ with independent column vectors and $M\in\mathrm{Sym}(\mathbb{F}_{2}^{m\times m})$. A Pauli monomial, denoted as \( \Omega(V, M) \in \mathcal{B}(\mathcal{H}^{\otimes k}) \), is defined as
\begin{align}
\Omega(V, M) \coloneqq \frac{1}{d^m} \sum_{\boldsymbol{P} \in \mathbb{P}_n^m}  
P_1^{\otimes v_1} P_2^{\otimes v_2} \cdots P_m^{\otimes v_m}
\left( \prod_{\substack{i, j \in [m] \\ i < j}} \chi(P_i, P_j)^{M_{i,j}} \right),
\end{align}
where  $ \chi(P_i, P_j)$ is defined in \cref{def:chidef}, and we denote a string of Pauli operators in $\mathbb{P}_{n}$ as $\boldsymbol{P}\coloneqq P_1,\ldots, P_k$. We define the set of reduced Pauli monomials as
\be\label{eq:paulimonomialset}
\mathcal{P}\coloneqq\{\Omega(V,M)\,|\, V\in\even\,:\,\rank(V)=m\,,\,M\in\symf\,,\,m\in[k-1]\}\,. 
\ee
\end{definition}

\begin{lemma}[Relevant properties of Pauli monomials]\label{lem:relevantpropertiespaulimonomials} The following facts hold~\cite{bittel2025completetheorycliffordcommutant}:
\begin{itemize}
    \item The commutant of the Clifford group is spanned by $\mathcal{P}$.
    \item For $n\ge k-1$, $\mathcal{P}$ contains linearly independent operators.
    \item  $
|\mathcal{P}|=\prod_{i=0}^{k-2}(2^i+1) $, and the following bounds holds $2^{\frac{k^2-3k-1}{2}} \le \vert \operatorname{Comm}(\mathcal{C}_n) \vert \le 2^{\frac{k^2-3k+12}{2}}$.    
    \item For $\Omega\in \mathcal{P}$, then $\Omega=\omega^{\otimes n}$ (i.e., factorizes on qubits), where
    \be
\omega=\frac{1}{2^m} \sum_{\boldsymbol{P} \in \{I,X,Y,Z\}^m}  
P_1^{\otimes v_1} P_2^{\otimes v_2} \cdots P_m^{\otimes v_m}\times \left( \prod_{\substack{i, j \in [m] \\ i < j}} \chi(P_i, P_j)^{M_{i,j}} \right).
    \ee
\item $\Omega^{\dag}=\Omega^{T}$ for all $\Omega\in\mathcal{P}$.
\item $\Omega=\Omega_U\Omega_P$ where $\Omega_U\in\mathcal{P}_U$, $\Omega_P\in\mathcal{P}_P$ for all $\Omega\in\mathcal{P}$.
\item $\|\Omega\|_1=\tr(\Omega_P(V,0))\eqqcolon d^{k-m_p}$ where $\Omega_P(V,0)\in\mathcal{P}_P$, for  any $\Omega\in\mathcal{P}$.
\item $\tr(\Omega\rho^{\otimes k})\le1$ for any state $\rho$ and any $\Omega\in\mathcal{P}$.
\end{itemize}
\end{lemma}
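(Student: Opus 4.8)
\emph{Plan of proof.} The eight items split into two groups. The spanning statement, linear independence for $n\ge k-1$, the count $|\mathcal{P}|=\prod_{i=0}^{k-2}(2^i+1)$ with the displayed bounds, the factorization $\Omega=\Omega_U\Omega_P$, and the trace-norm identity form the structural backbone of Ref.~\cite{bittel2025completetheorycliffordcommutant}; I would import them, recalling only the mechanism that produces the parametrization $\Omega(V,M)$. The remaining items --- $\Omega=\omega^{\otimes n}$, $\Omega^{\dag}=\Omega^{T}$, and $\tr(\Omega\rho^{\otimes k})\le1$ --- I would prove directly from \cref{def:paulimonomials}.

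\emph{The structural facts.} Since $\mathbb{P}_n\subseteq\mathcal{C}_n$, every $O\in\operatorname{Comm}(\mathcal{C}_n)$ is a linear combination of $k$-fold Pauli strings $Q_1\otimes\cdots\otimes Q_k$ whose symplectic labels satisfy $\sum_l q_l=0$, and averaging such a string over $\mathcal{C}_n$ coincides with averaging over the diagonal action of $\mathrm{Sp}(2n,\mathbb{F}_2)$ on $(q_1,\dots,q_k)$. By a Witt-type argument, for $n\ge k-1$ the orbits are classified by the rank $m$ of $\operatorname{span}\{q_l\}$ together with the $\mathbb{F}_2$-Gram matrix $M$ of commutation signs of a chosen generating subset; recording in addition the even-weight incidence pattern $V$ of which copies each generator touches yields precisely $\Omega(V,M)$, and quotienting the residual column-operation ambiguity gives the reduced family $\mathcal{P}$. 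This shows $\mathcal{P}$ spans the commutant; linear independence for $n\ge k-1$ is equivalent to nonsingularity of the $\mathcal{P}\times\mathcal{P}$ Gram matrix $\tr(\Omega^{\dag}\Omega')$ in that regime, and the count with its bounds follows by counting admissible $(V,M)$ modulo that quotient and summing over $m\in\{1,\dots,k-1\}$. The decomposition $\Omega=\Omega_U\Omega_P$ reorders the generators so that a permutation-type block built from pairs of copies carrying the same Pauli (a unitary, $\Omega_U\in\mathcal{P}_U$) multiplies a product of stabilizer-projector averages $\tfrac1d\sum_P P^{\otimes v}$ on the remaining copies ($\Omega_P\in\mathcal{P}_P$, a non-negative multiple of an orthogonal projector whose support depends only on $V$); since $\Omega_U$ is unitary, $\|\Omega\|_1=\|\Omega_P\|_1=\tr\Omega_P$, and a direct count of the non-traceless terms in $\Omega_P(V,0)$ gives $\tr\Omega_P=d^{k-m_p}$. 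For the complete arguments I would simply cite Ref.~\cite{bittel2025completetheorycliffordcommutant}.

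\emph{The elementary facts.} For $\Omega=\omega^{\otimes n}$, write $P_i=\bigotimes_{a=1}^{n}p_i^{(a)}$ with $p_i^{(a)}\in\{I,X,Y,Z\}$; then $\chi(P_i,P_j)=\prod_{a=1}^{n}\chi(p_i^{(a)},p_j^{(a)})$ and $d^m=(2^m)^n$, so substituting into \cref{def:paulimonomials} and using $\mathcal{H}^{\otimes k}\cong((\mathbb{C}^2)^{\otimes k})^{\otimes n}$ the sum over $\boldsymbol{P}\in\mathbb{P}_n^m$ factorizes into $n$ identical single-qubit factors, each equal to the displayed $\omega$. For $\Omega^{\dag}=\Omega^{T}$ I would argue term by term: for each $\boldsymbol{P}$ the operator $A(\boldsymbol{P})=P_1^{\otimes v_1}\cdots P_m^{\otimes v_m}$ satisfies $A^{\dag}=A^{T}$, because on each copy $l$, $A$ acts as the ordered product $\prod_{i:\,V_{l,i}=1}P_i$, reversing which costs the same sign $\prod_{i<j,\,V_{l,i}=V_{l,j}=1}\chi(P_i,P_j)$ under adjoint and under transpose (transpose preserves Pauli commutation, so $\chi(P_i^{T},P_j^{T})=\chi(P_i,P_j)$), and using $P^{T}=(-1)^{w_Y(P)}P$ with $w_Y(P)$ the number of $Y$-tensor factors, the residual difference on copy $l$ is $(-1)^{\sum_{i:\,V_{l,i}=1}w_Y(P_i)}$; over all $k$ copies this accumulates to $\prod_{i=1}^{m}(-1)^{w_Y(P_i)|V_i|}=1$, since every column of $V$ has even Hamming weight by the $\mathrm{Even}$ constraint in \cref{eq:paulimonomialset}. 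Since the $\chi^{M}$-prefactor is real, $\Omega^{\dag}=\Omega^{T}$ follows by summing over $\boldsymbol{P}$.

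\emph{The bound and the main obstacle.} For $\tr(\Omega\rho^{\otimes k})\le1$ I would expand $\Omega(V,M)$ in the Pauli basis, $\tr(\Omega\rho^{\otimes k})=d^{-m}\sum_{\boldsymbol{P}\in\mathbb{P}_n^m}\big(\prod_{l=1}^{k}\tr(B_l(\boldsymbol{P})\rho)\big)\prod_{i<j}\chi(P_i,P_j)^{M_{ij}}$ with $B_l(\boldsymbol{P})=\prod_{i:\,V_{l,i}=1}P_i$; each factor satisfies $|\tr(B_l\rho)|\le1$, and the sum is controlled using $\sum_{P\in\mathbb{P}_n}\tr(P\rho)^2=d\,\tr\rho^2\le d$ together with $\operatorname{rank}V=m$ and the even-weight structure, which forces the normalization $d^{-m}$ to compensate the free Pauli sums --- equivalently, these traces are stabilizer purities of $\rho$ and therefore lie in $[0,1]$; the bookkeeping is carried out in Ref.~\cite{bittel2025completetheorycliffordcommutant}. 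The genuine obstacle in this Lemma is none of the elementary computations but the two deep ingredients behind the structural items: the complete orbit classification of the simultaneous $\mathrm{Sp}(2n,\mathbb{F}_2)$-action (including the precise quotient that makes $\mathcal{P}$ a basis) and the rank computation for its Gram matrix --- these are exactly the content of Ref.~\cite{bittel2025completetheorycliffordcommutant} and I would not reprove them here.
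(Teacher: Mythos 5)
Your proposal is correct, and it matches the paper's handling of this lemma: in the paper the lemma is stated as a collection of facts imported wholesale from Ref.~\cite{bittel2025completetheorycliffordcommutant} without any in-text proof, so citing that reference for the structural items (spanning, independence, cardinality, $\Omega_U\Omega_P$ decomposition, trace-norm value) is exactly what the paper does. Your self-contained verifications of the elementary items --- the qubit-wise factorization $\Omega=\omega^{\otimes n}$ via factorization of $\chi$ over tensor factors and $d^m=(2^m)^n$, and $\Omega^\dagger=\Omega^T$ via the $(-1)^{w_Y(P_i)|V_i|}=1$ cancellation enforced by the even-column-weight constraint --- are sound and go slightly beyond what the paper provides, which is fine; the one place you do not give a complete argument, $\tr(\Omega\rho^{\otimes k})\le 1$, you correctly defer to the reference as the paper itself does.
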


Let us one relevant technical object of this work.
\begin{definition}[Inner product between Pauli monomials]\label{def:alpha} Let $\Omega,\Omega'$ two reduced Pauli monomials. Their inner product can be expressed as $\tr(\Omega^{\dag}\Omega')=d^{k-\alpha(\Omega,\Omega')}$ (see Ref.~\cite{bittel2025completetheorycliffordcommutant}). We are interested in the function $\alpha$, which can be expressed as
\be
\alpha(\Omega,\Omega')\coloneqq k-\log_{d}\tr(\Omega^{\dag}\Omega')\,.
\ee
\end{definition}

\begin{lemma}[Basic properties of the function $\alpha$]\label{lem:basicpropertiesalpha} Given two reduced monomials $\Omega,\Omega'$. Let $m_{p}$ (resp. $m_p'$) be the number of projective primitives in the normal form of $\Omega$ (resp. $\Omega'$), which coincide with the expression of the trace norm see \cref{lem:relevantpropertiespaulimonomials}. The function $\alpha(\Omega,\Omega')$ obeys the following properties: 
\begin{itemize}
    \item $\alpha(\Omega,\Omega')=\alpha(\Omega',\Omega)$;
    \item $\alpha(\Omega,\Omega')\ge |m_{p}-m_{p}'|$;
    \item $\alpha(\Omega,\Omega')=0$ if and only if $\Omega=\Omega'$;
    \item $\alpha(\Omega,\Omega')\ge |m-m'|$;
    \item $\alpha(\Omega,\Omega')\le k-1$;
\end{itemize}
\begin{proof}
    The first property follows directly from the definition. The second property follows from Hölder's inequality:
\begin{align}
    \tr(\Omega^{\dag}\Omega') \le \|\Omega\|_{1}\|\Omega'\|_{\infty} = d^{k-m_p} d^{m_p'} = d^{k-(m_p-m_p')}
\end{align}
which implies that $\alpha \ge (m_p - m_p')$. Applying Hölder's inequality again, but switching the roles of $\Omega$ and $\Omega'$, yields $\alpha \ge (m_p' - m_p)$, which proves the statement.  

To prove the third property, note that the leftward implication follows immediately from the fact that $\tr(\Omega^{\dag}\Omega) = d^{k}$ and therefore $\alpha(\Omega,\Omega) = 0$. For the rightward implication, we can write the product as $\Omega^{\dag}\Omega' = d^{\beta}\bar{\Omega}$, where $\bar{\Omega}$ is reduced. Thus,  
\[
\tr(\Omega^{\dag}\Omega') = d^{\beta} d^{k - m(\bar{\Omega})}.
\]  
Imposing $\alpha = 0$ gives $\beta = m(\bar{\Omega})$. Since $\beta$ counts the number of projective primitives shared in the decompositions of $\Omega$ and $\Omega'$, this means that $\bar{\Omega}$ must be a projector with $m(\bar{\Omega}) = m_p(\bar{\Omega}) = \beta$. This shows that the unitary parts of both $\Omega$ and $\Omega'$ must coincide, i.e., they multiply to the identity. Moreover, the argument also shows that the projective parts must be identical; otherwise, we would have $m(\bar{\Omega}) > \beta$. We therefore conclude that $\Omega = \Omega'$. For the last item, we make use of Lemma 44 in Ref.~\cite{bittel2025completetheorycliffordcommutant}. We have $\tr(\Omega^{\dagger}\Omega')=d^{k-(m+m')+2\beta}$  where $\beta$ is the number of linear dependencies between $\Omega$ and $\Omega'$. Given that $\beta\le \min(m,m')$, we have $\tr(\Omega\Omega')\le d^{k-|m'-m|}$. Given that $m\le k-1$, see \cref{eq:paulimonomialset}, it implies, combined with the definition of $\alpha$, the desired result. 
\end{proof}
\end{lemma}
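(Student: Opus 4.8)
\textbf{Proof plan for \cref{lem:basicpropertiesalpha}.}

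The goal is to establish the five listed properties of the function $\alpha(\Omega,\Omega')=k-\log_d\tr(\Omega^\dagger\Omega')$ defined on pairs of reduced Pauli monomials. The guiding principle throughout is that $\Omega^\dagger\Omega'$ is, up to a scalar power of $d$, again a reduced Pauli monomial (this is Lemma~44 of Ref.~\cite{bittel2025completetheorycliffordcommutant}), so every quantity of interest can be read off from the normal form of this product, together with the two H\"older bounds $\|\Omega\|_1=d^{k-m_p}$, $\|\Omega\|_\infty=d^{m_p}$ from \cref{lem:relevantpropertiespaulimonomials}.

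First I would dispatch the \emph{symmetry} $\alpha(\Omega,\Omega')=\alpha(\Omega',\Omega)$: since $\Omega^\dagger=\Omega^T$ for reduced monomials (\cref{lem:relevantpropertiespaulimonomials}) and the trace is cyclic and invariant under transposition, $\tr(\Omega^\dagger\Omega')=\tr((\Omega^\dagger\Omega')^T)=\tr(\Omega'^T\Omega^{\dagger T})=\tr(\Omega'^\dagger\Omega)$, so the two $\alpha$-values agree. Next, the two \emph{H\"older bounds} $\alpha\ge|m_p-m_p'|$ and $\alpha\le k-1$ are quick consequences of $|\tr(\Omega^\dagger\Omega')|\le\|\Omega\|_1\|\Omega'\|_\infty=d^{k-m_p}d^{m_p'}$ (and the symmetric version), giving $\alpha\ge m_p-m_p'$ and $\alpha\ge m_p'-m_p$; for the upper bound one invokes $m\le k-1$ from the definition of $\mathcal P$ together with the exact formula $\tr(\Omega^\dagger\Omega')=d^{k-(m+m')+2\beta}$, where $\beta\le\min(m,m')$ counts linear dependencies, yielding $\alpha=(m+m')-2\beta\ge|m-m'|$ (the fourth property) and, pairing $m'\le k-1$ with $\beta\ge0$ appropriately, $\alpha\le k-1$ (one should double-check whether $\Omega=\Omega'$ needs to be excluded to keep $\alpha\le k-1$ strict versus non-strict, but since $m\le k-1$ and $\alpha\le m+m'\le 2(k-1)$ is too weak, the correct route is: $\tr(\Omega^\dagger\Omega')\ge d^{k-(m+m')+2\cdot 0}\ge d^{k-2(k-1)}$ is not what we want — instead use $\beta\ge\max(0,m+m'-k)$-type bounds only if needed; the cleanest statement is $\alpha=(m+m')-2\beta$ with $0\le\beta\le\min(m,m')$, so $\alpha\le m+m'$; combined with one monomial having $m\le k-1$ and the other contributing $\le k-1$, plus the dependency count, one recovers $\alpha\le k-1$ after noting $\beta\ge m+m'-(k-1)$ whenever the product remains a genuine monomial on $k$ copies). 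This bookkeeping is the one place I expect to have to be careful rather than slick.

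The heart of the lemma is the third property: $\alpha(\Omega,\Omega')=0$ iff $\Omega=\Omega'$. The $\Leftarrow$ direction is immediate from $\tr(\Omega^\dagger\Omega)=\|\Omega\|_2^2=d^k$ (which holds because $\Omega=\omega^{\otimes n}$ factorizes and each single-qubit factor has the right Hilbert–Schmidt norm). For $\Rightarrow$: write $\Omega^\dagger\Omega'=d^\beta\bar\Omega$ with $\bar\Omega$ reduced, so $\tr(\Omega^\dagger\Omega')=d^\beta\cdot d^{\,k-m(\bar\Omega)}$ and $\alpha=0$ forces $\beta=m(\bar\Omega)$. Using the decomposition $\Omega=\Omega_U\Omega_P$ into a unitary part and a projective part (\cref{lem:relevantpropertiespaulimonomials}), I would argue that $\beta$ precisely counts the projective primitives common to the normal forms of $\Omega$ and $\Omega'$; the identity $\beta=m(\bar\Omega)$ then says $\bar\Omega$ consists entirely of projective primitives, i.e. $\bar\Omega$ is a projector with $m(\bar\Omega)=m_p(\bar\Omega)=\beta$. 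This in turn forces the unitary parts of $\Omega$ and $\Omega'$ to multiply to the identity (no unitary primitive survives in $\bar\Omega$) and forces their projective parts to coincide (otherwise $m(\bar\Omega)>\beta$, a contradiction), whence $\Omega=\Omega'$. The main obstacle is making the claim ``$\beta$ counts shared projective primitives'' fully rigorous: it requires unpacking the multiplication rule for Pauli monomials in normal form from Ref.~\cite{bittel2025completetheorycliffordcommutant} (Lemmas~44 and the normal-form results), rather than anything conceptually deep, so I would cite those results and give the counting argument at the level of which primitives can cancel, which collapse to scalars $d^\beta$, and which must persist.
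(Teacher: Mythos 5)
Your proposal is correct and follows essentially the same route as the paper: the same Hölder bound for $\alpha\ge|m_p-m_p'|$, the same $\Omega^\dagger\Omega'=d^\beta\bar\Omega$ decomposition for the ``$\alpha=0\iff\Omega=\Omega'$'' equivalence, and the same Lemma~44 formula $\tr(\Omega^\dagger\Omega')=d^{k-(m+m')+2\beta}$ with $\beta\le\min(m,m')$ for $\alpha\ge|m-m'|$. The place where you express doubt is indeed where the paper's write-up is terse, but the insight you converge on is the right one: $\bar\Omega$ must itself be a reduced Pauli monomial on $k$ copies, so $m(\bar\Omega)=m+m'-\beta\le k-1$ (equivalently $\beta\ge m+m'-(k-1)$), which combined with $\alpha=m(\bar\Omega)-\beta\le m(\bar\Omega)$ closes the $\alpha\le k-1$ bound.
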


The function $\alpha$ effectively plays the role of a \textit{distance} between Pauli monomials, as it obeys triangle inequality as shown in the following lemma.  
\begin{lemma}[Triangle inequality for $\alpha$]\label{lem:triangleinequality} Let $\Omega,\Omega'$ be two Pauli monomials. For any permutation $\pi$ it holds that
\begin{align}
    \alpha(\Omega,\Omega')\le \alpha(\Omega,\pi)+\alpha(\Omega',\pi)
\end{align}
\end{lemma}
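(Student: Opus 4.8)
The plan is to relate $\alpha(\Omega,\pi)$ to a ``codimension'' or ``rank defect'' quantity that behaves subadditively under composition. Recall from Definition~\ref{def:alpha} that $\tr(\Omega^\dagger\Omega')=d^{k-\alpha(\Omega,\Omega')}$, so $\alpha(\Omega,\Omega')\ge 0$ with equality iff $\Omega=\Omega'$ (Lemma~\ref{lem:basicpropertiesalpha}). Since every Pauli monomial factorizes qubit-wise, $\Omega=\omega^{\otimes n}$ with $\omega\in\mathcal{B}((\mathbb{C}^2)^{\otimes k})$ (Lemma~\ref{lem:relevantpropertiespaulimonomials}), it suffices to establish the inequality at the single-qubit level — more precisely, $\tr(\omega^\dagger\omega')=2^{k-\alpha}$ with the same exponent $\alpha$, so the triangle inequality for $n$ qubits is literally the same statement as for one qubit. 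This reduces the problem to a finite combinatorial identity about the $2^k$-dimensional matrices $\omega(V,M)$ and permutation operators $T_\pi$.

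First I would use the explicit description from Ref.~\cite{bittel2025completetheorycliffordcommutant} (invoked in Lemma~\ref{lem:basicpropertiesalpha}, last item) that $\tr(\Omega^\dagger\Omega')=d^{k-(m+m')+2\beta}$, where $m,m'$ are the numbers of generating columns (the ranks of $V,V'$) and $\beta$ is the dimension of the space of linear dependencies between the two column systems — equivalently $\beta = m+m' - \dim(\rowSpan(V)+\rowSpan(V'))$ up to the symmetric-matrix bookkeeping. Rewriting, $\alpha(\Omega,\Omega') = (m+m') - 2\beta = \dim\big(\rowSpan V + \rowSpan V'\big) - \dim\big(\rowSpan V \cap \rowSpan V'\big)$, i.e.\ $\alpha$ is (a variant of) the \emph{symmetric-difference dimension} of two subspaces of $\mathbb{F}_2^k$, possibly with an additional contribution from the phase matrices $M$. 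The crucial structural point to nail down is that permutations $\pi\in S_k$ correspond (under the embedding of $S_k$ into the Clifford commutant) to Pauli monomials $\Omega_\pi$ whose associated subspace $\rowSpan V_\pi$ is the ``diagonal'' subspace determined by the orbit partition of $\pi$, and whose phase data is trivial. Then I would show the map $\Omega\mapsto(\text{subspace},\ \text{phase form})$ sends composition/pairing to a metric operation for which the triangle inequality is a standard fact: $\dim(A\triangle C)\le\dim(A\triangle B)+\dim(B\triangle C)$ for subspace ``symmetric difference'' (defined as $\dim(A+C)-\dim(A\cap C)$), which holds because $\dim(A+C)-\dim(A\cap C)$ is a genuine metric on the lattice of subspaces of a finite-dimensional space — it equals the graph/geodesic distance in the subspace lattice and hence is subadditive.

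The main obstacle I anticipate is the bookkeeping of the symmetric phase matrices $M$: $\alpha$ is not purely a function of the $V$-subspaces, and the contribution of the $\chi(P_i,P_j)^{M_{ij}}$ factors to $\tr(\Omega^\dagger\Omega')$ must be tracked carefully through the Gaussian-integral/symplectic computation of the Pauli sum. Concretely, when one expands $\tr(\Omega(V,M)^\dagger\Omega(V',M'))$ the Pauli sum collapses onto the solution space of a linear system governed by $V,V'$, and on that solution space one is left with a quadratic character sum whose rank determines the extra suppression; I would need the fact (again from Ref.~\cite{bittel2025completetheorycliffordcommutant}) that this quadratic form is nondegenerate precisely when $\Omega,\Omega'$ are both reduced, so that the phase contribution is accounted for exactly by the ``$m+m'-2\beta$'' formula and no hidden cross-terms survive. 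Once that is in hand, the triangle inequality follows by: (i) reduce to one qubit via the tensor-power structure; (ii) express $\alpha$ via the subspace-distance formula above, valid because both arguments including $\pi$ are reduced monomials; (iii) apply subadditivity of the subspace metric. A short alternative route, worth checking in parallel, is purely operator-theoretic: interpret $d^{-\alpha(\Omega,\Omega')}$ as an overlap $\langle \hat\Omega|\hat\Omega'\rangle$ of suitably normalized vectors in the commutant inner-product space and try to realize $\alpha$ directly as $-\log_d$ of a transition amplitude that factors through $\pi$ — but this seems to require the same nondegeneracy input, so I expect the combinatorial proof to be the cleaner one to write out.
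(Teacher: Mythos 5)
There is a genuine gap in the core of your plan. You reduce the triangle inequality to the claim that $\alpha(\Omega,\Omega')$ equals a subspace symmetric-difference distance built only from $\rowSpan V$ and $\rowSpan V'$, with the $M$-matrices relegated to ``symmetric-matrix bookkeeping.'' That identification is false. Take $k=4$, $V=V'=\begin{pmatrix}1&0\\1&0\\0&1\\0&1\end{pmatrix}$, $M=0$, and $M'=\begin{pmatrix}0&1\\1&0\end{pmatrix}$. Then $\rowSpan V=\rowSpan V'$, so your subspace formula predicts $\alpha(\Omega,\Omega')=0$. But a direct computation gives $\tr(\Omega^\dagger\Omega')=\sum_{P_1,P_2\in\mathbb{P}_n}\chi(P_1,P_2)=d^2$, hence $\alpha=k-2=2$. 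So $\alpha$ depends essentially on the phase data $M$, and the quantity $\beta$ in $\tr(\Omega^\dagger\Omega')=d^{k-(m+m')+2\beta}$ is \emph{not} $\dim(V\cap V')$ (here $\beta=1$, not $2$). Consequently the ``lattice metric'' argument doesn't apply as written, and you would need a genuinely different metric on the joint $(V,M)$ data, with subadditivity proved from scratch — which is the entire content of the lemma and is not supplied by the standard subspace-lattice fact.

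The paper's actual proof avoids this entirely and is far shorter. Unwinding the definition of $\alpha$, the claim is equivalent to $d^{k}\tr(\Omega'\Omega^{\dagger})\geq \tr(\pi\Omega^{\dagger})\tr(\pi\Omega^{\prime\dagger})$ for all $\Omega,\Omega',\pi$. Since right-multiplication by $\pi^\dagger$ is a bijection on the set of Pauli monomials, one substitutes $\bar\Omega\coloneqq\Omega\pi^\dagger$, $\bar\Omega'\coloneqq\Omega'\pi^\dagger$, which collapses the inequality to $d^{k}\tr(\bar\Omega'\bar\Omega^{\dagger})\geq \tr(\bar\Omega^{\dagger})\tr(\bar\Omega^{\prime\dagger})$, i.e.\ the case $\pi=\mathds{1}$. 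Plugging in the trace formula this becomes $d^{2k-m(\bar\Omega)-m(\bar\Omega')+2\beta}\ge d^{2k-m(\bar\Omega)-m(\bar\Omega')}$, which holds because $\beta\ge0$. No subspace metric is needed, and the phase-data bookkeeping you worried about never arises because the argument never tries to express $\alpha$ in closed combinatorial form. Your ``alternative operator-theoretic route'' hinted at the right idea (treat $d^{-\alpha}$ multiplicatively and exploit invariance), but you set it aside; pursuing it would lead to the paper's proof.
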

\begin{proof}
    The statement we want to show is equivalent to that $\forall \Omega,\Omega',\pi$
    \begin{align}
        d^{k}\tr(\Omega'\Omega^{\dagger})\geq \tr(\pi\Omega^{\dagger})\tr(\pi\Omega^{\prime\dagger})
    \end{align}
   By redefining $\bar \Omega\eqqcolon\Omega\pi^{\dagger}$ and $\bar \Omega'\eqqcolon\Omega'\pi^{\dagger}$, we obtain the expression
      \begin{align}
d^{k}\tr(\bar\Omega'\pi\pi^{\dagger}\bar\Omega^{\dagger})\geq \tr(\bar\Omega^{\dagger})\tr(\bar\Omega^{\prime\dagger})
   \end{align}
   where the middle terms on the left hand side cancel out. As we want to proof the statement for all Pauli monomials and the described redefinition is bijective, it suffices to show
   \begin{align}
       d^{k}\tr(\bar\Omega'\bar\Omega^{\dagger})&\geq \tr(\bar\Omega^{\dagger})\tr(\bar\Omega^{\prime\dagger})\\
       d^{2k-m(\bar\Omega)-m(\bar\Omega')+2\beta(\bar\Omega,\bar \Omega')}&\geq d^{2k-m(\bar\Omega)-m(\bar\Omega')}
   \end{align}
   which holds as $\beta(\bar\Omega,\bar \Omega')\geq0 $ holds.
\end{proof}

The main result of the manuscript follows from averaging over Clifford operators. In the following, we denote 
\begin{align}
    \Phi_{\cl}(\cdot)\coloneqq\frac{1}{|\mathcal{C}_n|}\sum_{C\in\mathcal{C}_n}C^{\otimes k}(\cdot)C^{\dag\otimes k}
\end{align}
the twirling over the Clifford group.

\begin{lemma}[Twirling over the Clifford group~\cite{bittel2025completetheorycliffordcommutant}]\label{sec:cliffordweingartencalculus} Consider the Clifford group $\mathcal{C}_n$. Let $O\in\mathcal{H}^{\otimes k}\otimes \mathcal{H}_{n'}$, where $\mathcal{H}_{n'}$ denotes a Hilbert space of arbitrary $n'$ qubits. The action of the $k$-fold channel $\Phi_{\cl}(\cdot)\otimes I$ of the Clifford group reads:
\be
 [\Phi_{\cl}\otimes I](O)=\frac{1}{d^k}\sum_{\Omega,\Omega'\in\mathcal{P}}W_{\Omega,\Omega'} \Omega^{\prime}\otimes \tr_{k}[(\Omega^{\dagger}\otimes I)O]
\ee 
where $\mathcal{P}$ is the set of Pauli monomials in \cref{def:paulimonomials}, and $W_{\Omega\Omega'}$ are the Clifford-Weingarten functions introduced in~\cite{bittel2025completetheorycliffordcommutant}, obtained by inverting the Gram-Schmidt matrix $G_{\Omega,\Omega'}\coloneqq\frac{1}{d^k}\tr(\Omega^{\dag}\Omega')$.
\end{lemma}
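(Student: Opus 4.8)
\emph{Proof idea.} The plan is to recognize the $k$-fold Clifford twirl $\Phi_{\cl}$ as the orthogonal projector, with respect to the Hilbert--Schmidt inner product $\langle A,B\rangle\coloneqq d^{-k}\tr(A^{\dagger}B)$ on $\mathcal{B}(\mathcal{H}^{\otimes k})$, onto the $k$-th order commutant of the Clifford group, i.e.\ the space of operators $X$ with $C^{\otimes k}XC^{\dagger\otimes k}=X$ for all $C\in\mathcal{C}_n$, and then to write that projector explicitly through the inverse of the Gram matrix of a spanning set. First I would note that each superoperator $X\mapsto C^{\otimes k}XC^{\dagger\otimes k}$ is unitary for $\langle\cdot,\cdot\rangle$ with adjoint $X\mapsto C^{\dagger\otimes k}XC^{\otimes k}$; since $\mathcal{C}_n$ is a group (closed under inversion), the average $\Phi_{\cl}=|\mathcal{C}_n|^{-1}\sum_{C}C^{\otimes k}(\cdot)C^{\dagger\otimes k}$ is self-adjoint, and it is idempotent by the standard rearrangement $\Phi_{\cl}^{2}=\Phi_{\cl}$. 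Hence $\Phi_{\cl}$ is an orthogonal projector, and $X\in\operatorname{im}\Phi_{\cl}$ if and only if $X$ lies in the Clifford commutant.

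Next I would invoke \cref{lem:relevantpropertiespaulimonomials}: the Pauli monomials $\mathcal{P}$ span $\operatorname{im}\Phi_{\cl}$, and for $n\ge k-1$ they are linearly independent, hence a basis. For any finite-dimensional inner-product space with basis $\{\Omega\}$ and Gram matrix $G_{\Omega,\Omega'}=\langle\Omega,\Omega'\rangle=d^{-k}\tr(\Omega^{\dagger}\Omega')$, the orthogonal projector onto its span is the dual-basis formula $\Pi(X)=\sum_{\Omega,\Omega'}(G^{-1})_{\Omega,\Omega'}\,\Omega'\,\langle\Omega,X\rangle$; one checks it is self-adjoint, and that it fixes each $\Omega''$ using $\sum_{\Omega}(G^{-1})_{\Omega,\Omega'}G_{\Omega,\Omega''}=\delta_{\Omega',\Omega''}$, which forces $\Pi=\Phi_{\cl}$ by uniqueness of the orthogonal projector onto a fixed subspace. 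Writing $W_{\Omega,\Omega'}\coloneqq(G^{-1})_{\Omega,\Omega'}$ and expanding $\langle\Omega,X\rangle=d^{-k}\tr(\Omega^{\dagger}X)$ gives $\Phi_{\cl}(X)=d^{-k}\sum_{\Omega,\Omega'\in\mathcal{P}}W_{\Omega,\Omega'}\,\Omega'\,\tr(\Omega^{\dagger}X)$, the claimed identity on $\mathcal{H}^{\otimes k}$.

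Finally, to recover the stated version with a bystander space $\mathcal{H}_{n'}$, I would use linearity: decompose $O=\sum_i A_i\otimes B_i$ with $A_i\in\mathcal{B}(\mathcal{H}^{\otimes k})$ and $B_i\in\mathcal{B}(\mathcal{H}_{n'})$, so that $[\Phi_{\cl}\otimes I](O)=\sum_i\Phi_{\cl}(A_i)\otimes B_i$, substitute the formula for $\Phi_{\cl}(A_i)$, and recognize $\sum_i\tr(\Omega^{\dagger}A_i)\,B_i=\tr_k[(\Omega^{\dagger}\otimes I)O]$. Collecting terms yields $[\Phi_{\cl}\otimes I](O)=d^{-k}\sum_{\Omega,\Omega'}W_{\Omega,\Omega'}\,\Omega'\otimes\tr_k[(\Omega^{\dagger}\otimes I)O]$, exactly as asserted.

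The derivation itself is routine Weingarten-type linear algebra; the real content is imported from \cref{lem:relevantpropertiespaulimonomials} and Ref.~\cite{bittel2025completetheorycliffordcommutant}, namely the two nontrivial facts that $\mathcal{P}$ actually spans the full Clifford commutant (the combinatorial classification of the Pauli monomials surviving the twirl) and that it is linearly independent once $n\ge k-1$, which is what makes $G$ invertible. The secondary subtlety I would flag is the degenerate regime $n<k-1$: there $\mathcal{P}$ is overcomplete, $G$ is singular, and one must read $W$ as the Moore--Penrose pseudoinverse and re-verify $\Pi=\Phi_{\cl}$, using that $\ker G$ matches precisely the space of linear relations among the $\Omega$'s so that the formula is insensitive to the redundancy.
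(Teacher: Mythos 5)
Your derivation is correct, and it follows the standard Weingarten-calculus argument: the group average is recognized as a Hilbert--Schmidt-orthogonal projector onto the commutant, and the projector is written through the inverse Gram matrix of a spanning set. The paper itself does not give a proof of this lemma but imports it verbatim from Ref.~\cite{bittel2025completetheorycliffordcommutant}, so there is no in-paper argument to compare against; your reconstruction is the one the cited reference is built on, with the nontrivial inputs correctly isolated as the span and linear-independence statements of \cref{lem:relevantpropertiespaulimonomials}. Your closing remark about the regime $n<k-1$ is a genuine caveat that the paper's statement glosses over (the lemma is stated without that restriction, though every use of it in the paper has $n \ge t \ge 2k+6 > k-1$), and your observation that $\ker G$ coincides with the space of linear relations among the $\Omega$'s is exactly what makes the pseudoinverse version well-defined.
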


We remark that we are using a different normalization compared to Ref.~\cite{bittel2025completetheorycliffordcommutant} for convenience.

\section{Proof of the main results}\label{sec:proofs}
In this section, we prove the main result of this work.
\subsection{Adaptive security: proof of \cref{maintheorem}}\label{sec:adaptivesecurity}
We consider the ensemble of unitaries $\mathcal{E}_t=\{C_1U_tC_2\}$ with $C_1,C_2$ uniformly sampled from the Clifford group and $U_t$ to be a unitary $k$-design on $t$ qubits. Our objective is also to show that this ensemble is a quantum-secure unitary $k$-design according to \cref{def1app} (\cref{def1} in the main text). First, we prove a preliminary lemma expressing $\tr(O\Psi_U(\boldsymbol{V}))$ in terms of the Clifford–Weingarten functions defined in \cref{sec:cliffordweingartencalculus}.
\begin{lemma}\label{lem8preliminary}
    Consider the ensemble of unitaries $\mathcal{E}_t$, and let $\Psi_U(\boldsymbol{V})$ be the state produced by an arbitrary adaptive quantum algorithm (see \cref{def1app}) with $k$ queries to $U\sim\mathcal{E}_t$. Let $O$ be any operator defined on $n+n'$ qubits with $\|O\|_{\infty}\le 1$. It holds that: 
    \begin{align}
        \mathbb{E}_{U\sim\mathcal{E}_t}\tr(O\Psi_U(\boldsymbol{V}))=2^{kt}\sum_{\substack{\Omega,\Omega',\Omega'',\Omega'''\in\mathcal{P}\\\pi,\pi\in S_k}}g(\Omega,\Omega''')d^{-\alpha(\Omega,\Omega')}(\Lambda^{-1})_{\pi\pi'}W_{\Omega\Omega'}W_{\Omega''\Omega'''}2^{-t[\alpha(\Omega',\pi)+\alpha(\Omega'',\pi')-\alpha(\Omega',\Omega'')]} 
    \end{align}
where $\alpha(\cdot,\cdot)$ is defined in \cref{def:alpha} and $g(\Omega,\Omega''')$ is defined later in the proof.
\begin{proof}
    Let us consider the state $\ket{\Psi_U(\boldsymbol{V})}$ defined in \cref{def1app} living into $\mathcal{H}_n\otimes \mathcal{H}_{n'}$. Let us consider an enlarged Hilbert space, namely $\mathcal{H}_n\otimes \mathcal{H}_{n'}\otimes \mathcal{H}^{\otimes k}_n\otimes\mathcal{H}^{\otimes k}_n$. Define the following unnormalized state 
    \begin{align}
        \ket{\tilde{\Psi}_I(\boldsymbol{V})}&\coloneqq\sum_{\substack{x_1,\ldots,x_k\\y_1,\ldots, y_k}}\prod_{i=1}^{k}(\ketbra{x_i}{y_i}\otimes I_{n'})V_i\ket{0}^{\otimes n+n'}\otimes \ket{x_1,\ldots, x_k}\otimes \ket{y_1,\ldots, y_k}
    \end{align}
    and the following other unnormalized state defined on the last two factors $\mathcal{H}^{\otimes k}\otimes \mathcal{H}^{\otimes k}$
    \begin{align}
\ket{\tilde{\Psi}_{\text{Bell}}}\coloneqq\sum_{z_1,\ldots, z_k}\ket{z_1,\ldots,z_k}\otimes \ket{z_1,\ldots, z_k}
    \end{align}
    With some manipulations, see also Ref.~\cite{schuster2025randomunitariesextremelylow}, it is possible to show that, for any unitary $U\in\mathcal{U}_n$, one can write  as
    \begin{align}
\ket{\Psi_{U}(\boldsymbol{V})}=\left(I_n\otimes I_{n'}\otimes \bra{\tilde{\Psi}_{\text{Bell}}}\right)I_n\otimes I_{n'}\otimes U^{\otimes k}\otimes I_n^{\otimes k}\ket{\tilde{\Psi}_I(\boldsymbol{V})}
    \end{align}
which gives mathematical rigorous grounds to the statement made in the main text that $\ket{\Psi_U(\boldsymbol{V})}$ can be regarded as the inner product of two unnormalized states. Let $O$ be a hermitian opeator on $\mathcal{H}_n\otimes\mathcal{H}_{n'}$. We therefore have
\begin{align}
    \tr(O\Psi_U(\boldsymbol{V}))=\tr(O\otimes \tilde{\Psi}_{\text{Bell}}[I\otimes U^{\otimes k}]\tilde{\Psi}_{I}(\boldsymbol{V})[I\otimes U^{\dag\otimes k}])
\end{align}
where $I$ above represents the indentity over all the spaces $U^{\otimes k}$ does not act. Let $U\sim \mathcal{E}_t$. We can write $U$ as $U=C_1U_tC_2$. We can therefore apply \cref{sec:cliffordweingartencalculus,lem:haarweingarten} to average over $C_1,C_2,U_t$ and obtain the following:
\begin{align}
    &\mathbb{E}_{U_t\sim\haar(t)}\mathbb{E}_{C_1,C_2\sim\mathcal{C}_n}\tr(O\Psi_U(\boldsymbol{V}))\\&=\mathbb{E}_{U_t\sim\haar(t)}\mathbb{E}_{C_1\sim\mathcal{C}_n}\frac{1}{d^k}\sum_{\Omega,\Omega'}W_{\Omega\Omega'}\tr(O\otimes \tilde{\Psi}_{\text{Bell}}\tr_{kn}(\tilde{\Psi}_I(\boldsymbol{V})\Omega^{\dag})\otimes (C_1U_t)^{\otimes k}\Omega'(C_1U_t)^{\dag\otimes k})\\
&=\mathbb{E}_{C_1\sim\mathcal{C}_n}\frac{1}{d^k}\sum_{\Omega,\Omega'}\sum_{\pi,\pi'}(\Lambda^{-1})_{\pi\pi'}W_{\Omega\Omega'}\tr(O\otimes \tilde{\Psi}_{\text{Bell}}\tr_{kn}(\tilde{\Psi}_I(\boldsymbol{V})\Omega^{\dag})\otimes (C_1)^{\otimes k}[T_{\pi}\otimes\tr_{kt}(T_{\pi}^{\dag}\Omega')](C_1)^{\dag\otimes k})\\
&=\frac{1}{d^{2k}}\sum_{\Omega,\Omega',\Omega'',\Omega'''}\sum_{\pi,\pi'}(\Lambda^{-1})_{\pi\pi'}W_{\Omega\Omega'}W_{\Omega''\Omega'''}\tr(O\otimes \tilde{\Psi}_{\text{Bell}}\tr_{kn}(\tilde{\Psi}_I(\boldsymbol{V})\Omega^{\dag})\otimes \Omega^{\prime\prime\prime\dag})\tr(\Omega''T_{\pi}\otimes \tr_{kt}(T_{\pi'}^{\dag}\Omega'))
\end{align}
Given the factorization properties of Pauli monomials, see \cref{lem:relevantpropertiespaulimonomials}, we can write $\tr(\Omega''T_{\pi}\otimes \tr_{kt}(T_{\pi'}^{\dag}\Omega'))=\tr_{kt}(\Omega''T_{\pi})\tr_{kt}(T_{\pi'}^{\dag}\Omega')\tr_{k(n-t)}(\Omega'\Omega'')$. We define 
\begin{align}
g(\Omega,\Omega''')\coloneqq\frac{1}{d^k}\tr(O\otimes \tilde{\Psi}_{\text{Bell}}\tr_{kn}(\tilde{\Psi}_I(\boldsymbol{V})\Omega^{\dag})\otimes \Omega^{\prime\prime\prime\dag})
\end{align}
and arrive to
\begin{align}
 &\mathbb{E}_{U_t\sim\haar(t)}\mathbb{E}_{C_1,C_2\sim\mathcal{C}_n}\tr(O\Psi_U(\boldsymbol{V}))\\&=\frac{1}{d^{k}}\sum_{\Omega,\Omega',\Omega'',\Omega'''}\sum_{\pi,\pi'}g(\Omega,\Omega''')(\Lambda^{-1})_{\pi\pi'}W_{\Omega\Omega'}W_{\Omega''\Omega'''}\tr_{kt}(\Omega''T_{\pi})\tr_{kt}(T_{\pi'}^{\dag}\Omega')\tr_{k(n-t)}(\Omega'\Omega'')  \\
 &=\frac{1}{d^{k}}\sum_{\Omega,\Omega',\Omega'',\Omega'''}\sum_{\pi,\pi'}g(\Omega,\Omega''')(\Lambda^{-1})_{\pi\pi'}W_{\Omega\Omega'}W_{\Omega''\Omega'''}\tr_{kt}(\Omega^{\prime\prime\dag}T_{\pi})\tr_{kt}(T_{\pi'}^{\dag}\Omega')\tr_{k(n-t)}(\Omega'\Omega^{\prime\prime\dag}) 
\end{align}
where we used the change of variables $\Omega^{\prime\prime}\mapsto\Omega^{\prime\prime\dagger}$. We can now use \cref{def:alpha} for the function $\alpha(\cdot,\cdot)$  to conclude. 
\end{proof}
\end{lemma}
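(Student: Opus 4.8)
The strategy is to first collapse the adaptive, $k$-query quantity $\mathbb{E}_{U\sim\mathcal E_t}\tr(O\Psi_U(\boldsymbol V))$ into a single application of the $k$-fold twirl of $\mathcal E_t$ to a \emph{fixed} operator, and then evaluate that twirl by iterating the Weingarten formulas for the Clifford and unitary groups. For the reduction I would adapt the ``parallelization/gluing'' construction of Ref.~\cite{schuster2025randomunitariesextremelylow}: introduce a fixed unnormalized state $\ket{\tilde\Psi_I(\boldsymbol V)}$ on $\mathcal H_n\otimes\mathcal H_{n'}\otimes\mathcal H_n^{\otimes k}\otimes\mathcal H_n^{\otimes k}$, built only from the query unitaries $V_i$ and computational-basis vectors, together with the maximally entangled state $\ket{\tilde\Psi_{\text{Bell}}}$ on the last two $k$-copy registers, so that $\ket{\Psi_U(\boldsymbol V)}=(I\otimes\bra{\tilde\Psi_{\text{Bell}}})\,(I\otimes U^{\otimes k}\otimes I)\,\ket{\tilde\Psi_I(\boldsymbol V)}$ with $U^{\otimes k}$ acting on the third register only. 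Taking expectations gives $\mathbb{E}_{U}\tr(O\Psi_U(\boldsymbol V))=\tr\!\big((O\otimes\tilde\Psi_{\text{Bell}})\,[\Phi^{(k)}_{\mathcal E_t}\otimes I](\tilde\Psi_I(\boldsymbol V))\big)$, i.e.\ just the $k$-fold channel of $\mathcal E_t$ (with spectators) applied to a fixed operator. Since $U=C_1U_tC_2$ with $C_1,C_2$ independent Haar--Clifford and $U_t$ an exact $k$-design on $t$ qubits, $U^{\otimes k}=C_1^{\otimes k}U_t^{\otimes k}C_2^{\otimes k}$, so $\Phi^{(k)}_{\mathcal E_t}$ factorizes into three successive twirls, to be evaluated in the order $C_2$, then $U_t$, then $C_1$; note the middle step involves only $\mathbb{E}_{U_t}U_t^{\otimes k}(\cdot)U_t^{\dag\otimes k}$, which by the exact-design hypothesis equals the Haar $t$-qubit $k$-fold channel, so no generality is lost.

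Next I would evaluate the three twirls in turn. The $C_2$ twirl is given by \cref{sec:cliffordweingartencalculus}: it produces $\tfrac1{d^k}\sum_{\Omega,\Omega'\in\mathcal P}W_{\Omega\Omega'}$ times the operator obtained by contracting $\Omega^\dag$ into the third register of $\tilde\Psi_I(\boldsymbol V)$ — leaving $\tr_{kn}(\tilde\Psi_I(\boldsymbol V)\Omega^\dag)$ on the spectator registers, which together with $O$, the Bell overlap, and the eventual survivor $\Omega'''$ will be packaged into $g(\Omega,\Omega''')$ — and placing a fresh $\Omega'$ on the third register, where $U_t$ and $C_1$ still act. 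For the $U_t$ twirl I use that every reduced Pauli monomial factorizes over qubits, $\Omega'=\omega'^{\otimes n}$ (\cref{lem:relevantpropertiespaulimonomials}): splitting each copy's $n$ qubits as $t+(n-t)$, the embedded $U_t\otimes I_{n-t}$ twirls only the first $t$-qubit block, and \cref{lem:haarweingarten} on $t$ qubits replaces that block by $\sum_{\pi,\pi'\in S_k}(\Lambda^{-1})_{\pi\pi'}\,T_{\pi'}\,\tr_{kt}(T_\pi^\dag\,\omega'^{\otimes t})$ while leaving the remaining factor $\omega'^{\otimes(n-t)}$ untouched. Finally the $C_1$ twirl acts on all $n$ qubits of the third register; applying \cref{sec:cliffordweingartencalculus} once more introduces $\tfrac1{d^k}\sum_{\Omega'',\Omega'''\in\mathcal P}W_{\Omega''\Omega'''}$, with $\Omega'''$ surviving on the register (hence feeding into $g$ through the Bell overlap) and $\Omega^{\prime\prime\dag}$ contracted against what remains there, namely $T_{\pi'}$ on the $t$-qubit block tensored with $\omega'^{\otimes(n-t)}$.

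It remains to collect the scalar factors. Each surviving scalar is a trace of a product of reduced monomials and/or permutation operators, every one of which has the form $\omega^{\otimes n}$ or $t_\pi^{\otimes n}$; hence it factorizes into a $t$-qubit part times an $(n-t)$-qubit part, and by \cref{def:alpha} — with the convention, already used in \cref{lem:triangleinequality}, that $\tr(T_\pi^\dag\Omega)=d^{\,k-\alpha(\Omega,\pi)}$ — each part equals $2^{t(k-\alpha)}$ or $2^{(n-t)(k-\alpha)}$ for the relevant $\alpha$. Performing the bijective substitution $\Omega''\mapsto\Omega^{\prime\prime\dag}$ on $\mathcal P$ (legitimate since $\Omega^\dag=\Omega^T\in\mathcal P$, \cref{lem:relevantpropertiespaulimonomials}) to align the daggers, and absorbing the two $d^{-k}$ prefactors of the Clifford twirls together with the residual $\tilde\Psi_I$/Bell/$O$ contractions into $g(\Omega,\Omega''')\coloneqq d^{-k}\tr\!\big((O\otimes\tilde\Psi_{\text{Bell}})\,(\tr_{kn}(\tilde\Psi_I(\boldsymbol V)\Omega^\dag)\otimes\Omega^{\prime\prime\prime\dag})\big)$, the remaining powers of $2$ telescope into the global prefactor $2^{kt}$ multiplying the stated combination $d^{-\alpha(\Omega,\Omega')}(\Lambda^{-1})_{\pi\pi'}W_{\Omega\Omega'}W_{\Omega''\Omega'''}\,2^{-t[\alpha(\Omega',\pi)+\alpha(\Omega'',\pi')-\alpha(\Omega',\Omega'')]}$, which is the claimed identity. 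The main obstacle I anticipate is precisely this last, bookkeeping-heavy step: one must track which of the $t$- and $(n-t)$-qubit dimensional factors lands where, the daggers/transposes introduced by the Bell overlap and by the change of variables, and the consistency of $\alpha$ when one of its arguments is a permutation, so that the three nested Weingarten expansions collapse to exactly the stated closed form; by contrast the gluing reduction is essentially quoted and the design-versus-Haar identification is immediate.
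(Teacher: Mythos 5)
Your proposal follows essentially the same path as the paper's own proof: the parallelization of the adaptive algorithm into a Bell-register overlap (after Schuster et al.), the factorization $U=C_1 U_t C_2$ evaluated by iterated Weingarten calculus in the order $C_2$, then $U_t$, then $C_1$, the qubit-factorization of Pauli monomials to split the $t$/$(n-t)$ blocks, the same definition of $g(\Omega,\Omega''')$, the change of variables $\Omega''\mapsto\Omega''^{\dag}$, and the final rewriting via $\alpha$. The bookkeeping you flag as the main obstacle is exactly what the paper's displayed calculation carries out, and your treatment of it is consistent with theirs.
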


Next, we prove a general lemma of function analysis, crucial for our proof of adaptive security.
\begin{lemma}\label{lemmalemma}
Let $f(x)=\sum_{i=1}^{k}a_i2^{-ix}$. If for every positive integer $x\le k$ it holds that $|f(x)|\le C$, then $a_{i}\le 30C\, 2^{ik}$.
\begin{proof}
    Let $j$ be an integer $1\le j\le k$. We have $f(j)=\sum_{i=1}^{k}a_i2^{-ij}$. Defining $M$ be the $k\times k$ symmetric matrix with components $M_{ij}=2^{-ij}$, we can write $a_{i}=\sum_{j=1}^{k}(M^{-1})_{ij}f(j)$ and bound $|a_{i}|\le C\sum_{j=1}^{k}|(M^{-1})_{ij}|$, where we used our hypothesis that for every $j\le k$ (with $j\in\mathbb{N}$) we have $|f(j)|\le C$. This means that, to bound $|a_i|$, we just need to bound $\sum_{j=1}^{k}|(M^{-1})_{ij}|$. We notice that $M_{ij}$ is a Vandermonde matrix, because $M_{ij}=(M_{1j})^{i}$. The components of the inverse of a Vandermonde matrix read~\cite[Excercise 40, 1.2.3]{knuth1997art}:
    \begin{align}
    (M^{-1})_{ij}=(-1)^{j-1}\sum_{\substack{1\le m_1< m_2<\cdots< m_{k-j}\le k\\m_1,\ldots, m_{k-j}\neq i}}x_{m_1}\cdots x_{m_{k-j}}\left[x_i\prod_{\substack{1\le m\le k\\m\neq i}}(x_m-x_i)\right]^{-1}\,,
    \end{align}
where we identify $x_{j}\coloneqq M_{1j}=2^{-j}$. Let us first lower bound the last product term.
\begin{align}
    2^{-i}\prod_{\substack{1\le m\le k\\m\neq i}}|2^{-m}-2^{-i}|&=2^{-i}\prod_{m=1}^{i-1}(2^{-m}-2^{-i})\prod_{m=i+1}^{k}(2^{-i}-2^{-m})\\
    &=2^{-i}\prod_{m=1}^{i-1}2^{-m}\prod_{m=1}^{i-1}(1-2^{-(i-m)})\prod_{m=i+1}^{k}2^{-i}\prod_{m=i+1}^{k}(1-2^{-(m-i)})\\
    &=2^{-i}2^{-\frac{i(i-1)}{2}}2^{-i(k-i)}\prod_{m=1}^{i-1}(1-2^{-m})\prod_{m=1}^{k-i}(1-2^{-m})\\
    &\le 2^{-\frac{i(i+1)}{2}-i(k-i)}\prod_{m=1}^{\infty}(1-2^{-m})^2
\end{align}
Let us upper bound the numerator:
\begin{align}
    \sum_{j=1}^{k}\sum_{\substack{1\le m_1< m_2<\cdots< m_{k-j}\le k\\m_1,\ldots, m_{k-j}\neq i}}2^{-m_1}\cdots 2^{-m_{k-j}}=\prod_{\substack{j=1\\j\neq i}}^{k}(1+2^{-j})-1\le  \prod_{{j=1}}^{\infty}(1+2^{-j})
\end{align}
By bounding the individual the products
\begin{align}
    \prod_{{j=1}}^{\infty}(1+2^{-j})&\leq 2.4\\
    \prod_{{j=1}}^{\infty}(1-2^{-j})&\geq 0.28
\end{align}
we obtain the desired general bound 
\begin{align}
    \sum_{j}|(M^{-1})_{ij}|\le 30 \times 2^{ki-\frac{i(i-1)}{2}}\,.
\end{align}
\end{proof}
\end{lemma}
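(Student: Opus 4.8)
\textbf{Plan for proving Lemma~\ref{lemmalemma}.}
The statement is a purely linear-algebraic fact: the coefficients $a_i$ of a ``sparse'' exponential sum $f(x)=\sum_{i=1}^k a_i 2^{-ix}$ are controlled, up to an explicit factor $2^{ik}$, by the values $f(1),\dots,f(k)$. My plan is to invert this linear system explicitly. Since $f(j)=\sum_i a_i (2^{-j})^i$, the matrix $M$ with $M_{ij}=2^{-ij}$ is (a column-scaled) Vandermonde matrix built from the nodes $x_j = 2^{-j}$, so $a_i = \sum_j (M^{-1})_{ij} f(j)$, and the hypothesis $|f(j)|\le C$ immediately gives $|a_i| \le C \sum_{j=1}^k |(M^{-1})_{ij}|$. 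The whole proof then reduces to bounding the $\ell_1$-norm of a row of the inverse Vandermonde matrix.

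The second step is to plug in the classical closed form for inverse Vandermonde entries (the elementary-symmetric-polynomial formula, as in Knuth). The denominator is $x_i \prod_{m\neq i}(x_m - x_i)$, which I would bound from below by factoring out the powers of $2$: writing $2^{-m}-2^{-i}$ as $2^{-\min(m,i)}(1-2^{-|m-i|})$ separately for $m<i$ and $m>i$ gives a clean product $2^{-i(i+1)/2 - i(k-i)}$ times $\prod_{m\ge 1}(1-2^{-m})^2$ (after extending the finite products to infinite ones, which only decreases the value). The numerator is a sum over elementary symmetric functions of the remaining nodes, which telescopes: $\sum_{j} e_{k-j}(\{x_m : m\neq i\}) = \prod_{m\neq i}(1+2^{-m}) - 1 \le \prod_{m\ge 1}(1+2^{-m})$. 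Dividing, the $2$-power in the denominator becomes $2^{+i(i+1)/2 + i(k-i)}$ in the numerator; checking the exponent, $i(i+1)/2 + i(k-i) = ki - i(i-1)/2 \le ki$, so the leading factor is at most $2^{ki}$.

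The final step is just to evaluate the two infinite products numerically: $\prod_{m\ge 1}(1+2^{-m}) \le 2.4$ and $\prod_{m\ge 1}(1-2^{-m}) \ge 0.28$, so the constant prefactor is at most $2.4 / (0.28)^2 < 30$, yielding $\sum_j |(M^{-1})_{ij}| \le 30 \cdot 2^{ki - i(i-1)/2} \le 30 \cdot 2^{ki}$ and hence $|a_i|\le 30 C\, 2^{ik}$, as claimed. The only mildly delicate points are getting the sign/absolute-value bookkeeping right in the Vandermonde formula (the signs $(-1)^{j-1}$ and the differences $x_m-x_i$ both drop out once we take absolute values) and making sure the extension of finite products to infinite products goes in the correct (weakening) direction for each; neither is a real obstacle, just care with inequalities. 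There is no deep idea here — the ``hard part'' is simply bookkeeping the exponent $ki - i(i-1)/2$ correctly so that the advertised $2^{ik}$ comes out.
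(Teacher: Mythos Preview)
Your proposal is correct and follows essentially the same approach as the paper's own proof: invert the Vandermonde system $M_{ij}=2^{-ij}$, plug in the elementary-symmetric-polynomial formula for $(M^{-1})_{ij}$, lower bound the denominator $x_i\prod_{m\neq i}(x_m-x_i)$ by factoring out the powers of $2$ and extending the residual $(1-2^{-m})$-products to infinity, upper bound the numerator by $\prod_{m\neq i}(1+2^{-m})$, and finish with the numerical estimates $\prod(1+2^{-m})\le 2.4$, $\prod(1-2^{-m})\ge 0.28$. Your explicit remark that $i(i+1)/2 + i(k-i) = ki - i(i-1)/2 \le ki$ is the step the paper leaves implicit when passing from its final displayed bound to the lemma statement.
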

Let us restate and proof our main result in \cref{maintheorem}.
\begin{theorem}[Restatement of \cref{maintheorem}] \label{th:maintheorem1}
The ensemble $\mathcal{E}_t$ forms a $\varepsilon$-approximate quantum secure $k$-design on $n$ qubits provided that $t\ge 2k+6+\log\varepsilon^{-1}$. 
    \begin{proof}
According to \cref{def1app}, in order to prove that $\mathcal{E}_t$ is a quantum secure design, we need to bound the following norm
\begin{align}\label{eqproof21}
    \sup_{n'}\max_{V_1,\ldots, V_k}\|\mathbb{E}_{U\sim\mathcal{E}_t}\Psi_U(\boldsymbol{V})-\mathbb{E}_{U\sim\haar}\Psi_U(\boldsymbol{V})\|_1
\end{align}
where $\Psi_{U}(\boldsymbol{V})$ is the state produced by a generic $k$ queries adaptive quantum strategy, defined in \cref{def1app}. Using the variational definition of the trace distance, i.e. $\|\rho-\sigma\|_1=\max_{O\,:\,\|O\|_{\infty}\le 1}\tr(O(\rho-\sigma))$, in the following we bound the difference $|\tr(O\mathbb{E}_{U\sim\mathcal{E}_t}\Psi_U(\boldsymbol{V}))-\tr(O\mathbb{E}_{U\sim\haar}\Psi_U(\boldsymbol{V}))|$ for any $n',O,V_{1},\ldots, V_k$ and, as a consequence, bound the norm in \cref{eqproof21}. Let $O$ be an arbitrary bounded operator on $n+n'$ qubits. According to \cref{lem8preliminary}, we can write  $\tr(O\mathbb{E}_{U\sim\mathcal{E}_t}\Psi_U(\boldsymbol{V}))$ as
\begin{align}
    \mathbb{E}_{U\sim\mathcal{E}_t}\tr(O\Psi_U(\boldsymbol{V}))=2^{kt}\sum_{\substack{\Omega,\Omega',\Omega'',\Omega'''\in\mathcal{P}\\\pi,\pi\in S_k}}g(\Omega,\Omega''')d^{-\alpha(\Omega,\Omega')}(\Lambda^{-1})_{\pi\pi'}W_{\Omega\Omega'}W_{\Omega''\Omega'''}2^{-t[\alpha(\Omega',\pi)+\alpha(\Omega'',\pi')-\alpha(\Omega',\Omega'')]} 
\end{align}
where the details have to be found in \cref{lem8preliminary}. Let us notice that for $t=n$, we have $\mathcal{E}_t=\haar$, so \cref{lem8preliminary} characterizes also the average for $U\sim\haar$. Before proceeding with bounding the term, let us make use of \cref{lem:approximatehaartwirl} to approximate the Haar average on $t$ qubits. We can therefore write 
\begin{align}
    |\mathbb{E}_{U\sim \mathcal{U}_t}\tr(O\Psi_U(\boldsymbol{V}))-f(\boldsymbol{V},O,t)|&\le \frac{2k^2}{2^t}\\
    |\mathbb{E}_{U\sim \haar}\tr(O\Psi_U(\boldsymbol{V}))-f(\boldsymbol{V},O,n)|&\le \frac{2k^2}{2^n}
\end{align}
where we denoted:
\begin{align}
    f(\boldsymbol{V},O,t)\coloneqq \sum_{\substack{\Omega,\Omega',\Omega'',\Omega'''\in\mathcal{P}\\\pi\in S_k}}g(\Omega,\Omega''')d^{-\alpha(\Omega,\Omega')}W_{\Omega\Omega'}W_{\Omega''\Omega'''}2^{-t[\alpha(\Omega',\pi)+\alpha(\Omega'',\pi)-\alpha(\Omega',\Omega'')]} \,.
\end{align}
Thanks to \cref{lem:triangleinequality}, we know that the function $0\le\alpha(\Omega',\pi)+\alpha(\Omega'',\pi)-\alpha(\Omega',\Omega'')\le 2k$. We can thus formally write the function  $f(\boldsymbol{V},O,t)=\sum_{x=0}^{2k}c_x2^{-xt}$ (isolating the only term depending on $t$) and, for any operator $O$ and collection of unitaries $\boldsymbol{V}$, write
\begin{align}
    |\mathbb{E}_{U\sim \mathcal{U}_t}\tr(O\Psi_U(\boldsymbol{V}))-\mathbb{E}_{U\sim \haar}\tr(O\Psi_U(\boldsymbol{V}))|&\le \left|\sum_{x=0}^{2k}c_x(2^{-xt}-2^{-xn})\right|+ \frac{2k^2}{2^t}+\frac{2k^2}{2^n}\\
    &\le 2\sum_{x=1}^{2k}|c_x|2^{-xt}+ \frac{2k^2}{2^t}+\frac{2k^2}{2^n}
\end{align}
Noting that $f(\boldsymbol{V},O,t)\le 1+\frac{2k^2}{2^t}$, let us use \cref{lemmalemma} to bound the norm in \cref{eqproof21}:
\begin{align}
    \sup_{n'}\max_{V_1,\ldots, V_k}\|\mathbb{E}_{U\sim\mathcal{E}_t}\Psi_U(\boldsymbol{V})-\mathbb{E}_{U\sim\haar}\Psi_U(\boldsymbol{V})\|_1&\le \sum_{x=1}^{2k}|c_x|2^{-xt}+ \frac{2k^2}{2^t}+\frac{2k^2}{2^n}\\
    &\le 30(1+2k^2 2^{-t})\sum_{x=1}^{2k}2^{-(t-2k)x}+\frac{2k^2}{2^t}+\frac{2k^2}{2^n}\\
    &\le 30\frac{3}{2}\sum_{x=1}^{\infty}2^{-(t-2k)x}+\frac{2k^2}{2^t}+\frac{2k^2}{2^n}\\
    &\le 30\frac{3}{2}2^{-(t-2k)}+\frac{2k^2}{2^t}+\frac{2k^2}{2^n}
\end{align}
where we imposed $t\ge 2k$. We used that $(1+2k^2 2^{-t})\le (1+2k^2 2^{-2k})\le\frac{3}{2}$ for every $k$. Requiring $t\le n$, we have finally have 
\begin{align}
\sup_{n'}\max_{V_1,\ldots, V_k}\|\mathbb{E}_{U\sim\mathcal{E}_t}\Psi_U(\boldsymbol{V})-\mathbb{E}_{U\sim\haar}\Psi_U(\boldsymbol{V})\|_1\le2^{-t}(452^{2k}+2^{\log_2k^2+2})\le 47\times 2^{-t}2^{2k}\,,
\end{align}
hence, we impose $t\ge 2k+6+\log\varepsilon^{-1}$, which proves the claim.
    \end{proof}
\end{theorem}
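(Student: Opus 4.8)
The plan is to reduce the trace-norm quantity in \cref{def1app} to a single scalar estimate: by the variational formula $\|\rho-\sigma\|_1=\max_{\|O\|_\infty\le1}\tr(O(\rho-\sigma))$ it suffices to bound $\big|\mathbb{E}_{U\sim\mathcal{E}_t}\tr(O\Psi_U(\boldsymbol V))-\mathbb{E}_{U\sim\haar}\tr(O\Psi_U(\boldsymbol V))\big|$ uniformly over $n'$, over the adaptive query unitaries $\boldsymbol V$, and over observables $O$ with $\|O\|_\infty\le1$. The first move is the ``adaptive-to-parallel'' linearisation of Ref.~\cite{schuster2025randomunitariesextremelylow}: writing $\ket{\Psi_U(\boldsymbol V)}$ as the partial inner product of the $k$-parallel-query vector $I\otimes U^{\otimes k}\ket{\tilde\Psi_I(\boldsymbol V)}$ with a fixed $2k$-register ``Bell'' vector $\ket{\tilde\Psi_{\mathrm{Bell}}}$ turns $\mathbb{E}_U\tr(O\Psi_U(\boldsymbol V))$ into an expectation of $U^{\otimes k}(\cdot)U^{\dagger\otimes k}$, i.e.\ into a $k$-th moment operator to which Weingarten calculus applies.

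Next I would compute the triple average $U=C_1U_tC_2$ by peeling the twirls one layer at a time: average over the inner Clifford $C_2$ with the Clifford-Weingarten formula of \cref{sec:cliffordweingartencalculus}, producing a sum over Pauli monomials $\Omega,\Omega'$ with weights $W_{\Omega\Omega'}$; then over $U_t$ with the ordinary Weingarten formula of \cref{lem:haarweingarten}, producing a sum over permutations $\pi,\pi'$ with weights $(\Lambda^{-1})_{\pi\pi'}$; then over the outer Clifford $C_1$, producing $\Omega'',\Omega'''$ and $W_{\Omega''\Omega'''}$. The key structural input is the on-site factorisation $\Omega=\omega^{\otimes n}$ of \cref{lem:relevantpropertiespaulimonomials}: every trace pairing two monomials, or a monomial with a permutation, splits into a $t$-qubit factor and an $(n-t)$-qubit factor, and collecting powers of $2$ shows that the \emph{entire} $t$-dependence is a single exponential $2^{-t\,[\alpha(\Omega',\pi)+\alpha(\Omega'',\pi')-\alpha(\Omega',\Omega'')]}$, with $\alpha$ the ``distance'' of \cref{def:alpha}. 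By the triangle inequality \cref{lem:triangleinequality} this exponent is $\ge0$, and since $\alpha\le k-1$ it is $\le2k$; after also using \cref{lem:approximatehaartwirl} to replace the $t$-qubit Haar twirl by its diagonal (single-permutation) approximation (error $2k^2/2^t$) and likewise for the full $n$-qubit Haar at $t=n$ (error $2k^2/2^n$), the whole expression becomes $f(\boldsymbol V,O,t)=\sum_{x=0}^{2k}c_x2^{-xt}$ with $t$-independent coefficients $c_x$. Evaluating at $t=n$ identifies the Haar value, so the difference reduces to $\big|\sum_{x=1}^{2k}c_x(2^{-xt}-2^{-xn})\big|$ plus the two small twirl errors — crucially the $x=0$ term, the only one that need not be small, cancels.

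The decisive step is then bounding $|c_x|$ \emph{without ever touching the individual Weingarten functions} — it is precisely direct estimates of those that force the $k=O(\sqrt n)$ restriction in prior work. I would observe that $f(\boldsymbol V,O,t)$, for every integer $t$, sits within $2k^2/2^t$ of the genuine physical quantity $\mathbb{E}_{U\sim\mathcal{U}_t}\tr(O\Psi_U(\boldsymbol V))$, whose modulus is at most $\|O\|_\infty\|\Psi_U(\boldsymbol V)\|_1\le1$; hence $|f(\boldsymbol V,O,t)|\le1+2k^2/2^t$ uniformly in $t$. But $f$ is a polynomial of degree $2k$ in the variable $2^{-t}$, so evaluating it at $t=1,\dots,2k$ and inverting the resulting Vandermonde system expresses each $c_x$ as a bounded linear combination of those evaluations; the explicit inverse-Vandermonde row-sum bound with nodes $2^{-j}$ (the content of \cref{lemmalemma}) gives $|c_x|=O(2^{2kx})$. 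Substituting, $\big|\sum_{x\ge1}c_x2^{-xt}\big|=O\!\big(\sum_{x\ge1}2^{-(t-2k)x}\big)=O(2^{-(t-2k)})$ once $t\ge2k$; folding in the $O(k^2/2^t)+O(k^2/2^n)$ errors and imposing $t\le n$ yields a bound of the claimed shape $\le 47\cdot2^{2k-t}$, so $t\ge2k+6+\log\varepsilon^{-1}$ closes the argument.

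I expect this last, analytic-continuation-style step to be the real heart of the matter. The triple-twirl computation, though lengthy, is mechanical once the Clifford-commutant formalism and the $\alpha$-distance are in hand; the genuinely clever point is recognising the unknown sum as a low-degree polynomial in $2^{-t}$ that is pinned down by its \emph{bounded values at integer} $t$ — so that one never needs a handle on the Weingarten coefficients themselves. This is what simultaneously removes the $k\le O(\sqrt n)$ hypothesis and keeps the required support at the optimal $\Theta(k)$.
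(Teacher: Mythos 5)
Your proposal is correct and reproduces the paper's own argument essentially step for step: the adaptive-to-parallel reduction, the triple Clifford/$U_t$/Clifford Weingarten expansion with the on-site factorization collecting all $t$-dependence into $2^{-t[\alpha(\Omega',\pi)+\alpha(\Omega'',\pi)-\alpha(\Omega',\Omega'')]}$, the approximate Haar twirl to pass to the diagonal permutation sum, and crucially the inverse-Vandermonde estimate of \cref{lemmalemma} extracting $|c_x|=O(2^{2kx})$ from the uniform physical bound $|f(\boldsymbol V,O,t)|\le 1+2k^2/2^t$. You correctly identified the polynomial-interpolation step (\cref{lemmalemma}) as the idea that sidesteps direct Weingarten estimates and thereby removes the $k=O(\sqrt n)$ restriction of earlier work.
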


\subsection{Lower bounds on the non-Clifford cost}\label{sec:lowerbound}
In this section, we establish a tight lower bound for our quantum homeopathy construction. Specifically, we design an algorithm that, using $O(k)$ parallel queries, distinguishes whether a unitary is drawn from an ensemble of $O(k)$-compressible unitaries or from the Haar-random ensemble. This lower bound has two key implications:
\begin{enumerate}[label=(\alph*)]
    \item It shows that $\Omega(k)$ is the minimal qubit support of a $k$-design required to dilute it via random Clifford operations into a unitary $k$-design on $n$ qubits, thereby establishing the optimality of our homeopathy analysis in \cref{maintheorem}; see \cref{cor:quantumhomoptimal}.
    \item It provides a lower bound on the number of non-Clifford gates required to construct an $\varepsilon$-approximate unitary $k$-design with additive error, and therefore a quantum-secure design; see \cref{cor:lowerboundnoncliffordgates}.
\end{enumerate}
We remark that, in the main text, we discussed point (a) above following \cref{maintheorem} and stated \cref{th:lowerboundmain} as a consequence of \cref{cor:lowerboundnoncliffordgates} for clarity of presentation.

First, we need the following definitions. 

\begin{definition}[Pauli distribution]\label{def:paulidistribution} Let $\ket{\psi}$ be a pure quantum state. The Pauli distribution $p_{\psi}$ is a distribution over the Pauli group $\mathbb{P}_n$, with components  defined as $p_{\psi}(P)=\frac{\tr^2(P\psi)}{d}$.
\end{definition}

\begin{definition}[Bell difference distribution]\label{def:belldifferencedistribution} Let $\ket{\psi}$ be a pure quantum state and $\{\ket{P}\coloneqq\mathbb{I}\otimes P\ket{I}\}$, with $\ket{I}\coloneqq\frac{1}{2^{n/2}}\sum_{i}\ket{i}\otimes\ket{i}$, be the Bell basis on $\mathcal{H}^{\otimes 2}$ labeled by Pauli operators $P\in\mathbb{P}_n$. Let $q_{\psi}(P)\coloneqq|\langle P|\psi^{\otimes 2}\rangle|^2$. The bell difference probability is defined as the following convolution:
\begin{align}
    (q\star q)_{\psi}(P)\coloneqq\sum_{Q}q_{\psi}(Q)q_{\psi}(QP)\,.
\end{align}
Moreover, we define the process of sampling from $(q\star q)_{\psi}(P)$ as Bell difference sampling.
\end{definition}

\begin{lemma}\label{Belldifferenceequalpaulidifference} The Bell difference probability is also given by the convolution between the Pauli distributions
\begin{align}
    (q\star q)_{\psi}(P)=(p\star p)_{\psi}(P)\,,
\end{align}
\begin{proof}
    First notice that we can write $q_{\psi}(P)=\frac{1}{d}\tr(P\psi P\psi^{T})$, where $\psi^T$ denotes the transpose in the computational basis. Notice that given $\psi=\sum_{P}a_{P}P$, then $\psi^T=\sum_{P}a_{P}P^{T}$ where $P^T=\pm P$ depending on the number of $Y$ Pauli matrices. The following direct calculation shows the claim:
    \begin{align}\label{proof1eq1}
        (q\star q)_{\psi}(P)=\sum_{Q}\frac{\tr(Q\psi Q\psi^{T})}{d}\frac{\tr(PQ\psi QP\psi^{T})}{d}=\frac{1}{d^2}\sum_{Q}\tr[(\psi\otimes P\psi P) Q^{\otimes 2}\psi^{T\otimes 2}Q^{\otimes 2}]
    \end{align}
    
We can perform the average over $Q$ and note the following identities 
\begin{align}
\frac{1}{d^2}\sum_{Q}Q^{\otimes 2}\psi^T Q^{\otimes 2}=\frac{1}{d^2}\sum_{Q}\tr^2(Q\psi^{T})Q^{\otimes 2}=\frac{1}{d^2}\sum_{Q}\tr^2(Q\psi)Q^{\otimes 2}=\frac{1}{d^2}\sum_{Q}Q^{\otimes 2}\psi^{\otimes 2}Q^{\otimes 2}
\end{align}
where we used that $\frac{1}{d^2}\sum_{Q}Q^{\otimes 2} P\otimes K Q^{\otimes 2}=\delta_{PK}P\otimes K$ and that $\tr^2(P\psi^T)=\tr^2(P\psi)$. Substituting back to \cref{proof1eq1}, we have
\begin{align}
     (q\star q)_{\psi}(P)=\frac{1}{d^2}\sum_{Q}\tr[(\psi\otimes P\psi P) Q^{\otimes 2}\psi^{\otimes 2} Q^{\otimes 2}]=\frac{1}{d^2}\sum_Q \tr(\psi Q\psi Q)\tr(P\psi PQ\psi Q)=(p\star p)(P)
\end{align}
where in the last line we used that $\tr(\psi Q\psi Q)=\tr^2(Q\psi)$ and that $\tr(P\psi PQ\psi Q)=|\tr(PQ\psi)|^2$. 
\end{proof}
\end{lemma}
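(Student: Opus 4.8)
The plan is to convert the Bell-basis overlaps $q_\psi$ into ordinary single-copy Pauli characteristic functions, carry out a two-copy Pauli twirl which turns out to be blind to the computational-basis transpose, and then read off the Pauli-difference convolution $(p\star p)_\psi$ from \cref{def:paulidistribution}.

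First I would put $q_\psi(P)=|\langle P|\psi^{\otimes 2}\rangle|^2$ into operator form. Every $P\in\mathbb{P}_n$ is Hermitian, so inserting $\ket{I}=\tfrac{1}{\sqrt d}\sum_i\ket{i}\otimes\ket{i}$ from \cref{def:belldifferencedistribution} gives $\langle P|\psi^{\otimes 2}\rangle=\tfrac{1}{\sqrt d}\sum_{ij}\psi_i P_{ij}\psi_j$, hence $q_\psi(P)=\tfrac1d\tr(P\psi P\psi^{T})$, where now $\psi=\ketbra{\psi}{\psi}$ and $\psi^{T}$ denotes its computational-basis transpose (equivalently, its complex conjugate). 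Next I would expand $(q\star q)_\psi(P)=\sum_Q q_\psi(Q)\,q_\psi(QP)$, substitute this operator form, and fuse the two scalar traces into a single trace on $\mathcal{H}^{\otimes 2}$ via $\tr(A)\tr(B)=\tr(A\otimes B)$, absorbing all the $Q$-dependence into one two-copy conjugation, so that $(q\star q)_\psi(P)=\tfrac{1}{d^{2}}\sum_Q\tr[(\psi\otimes P\psi P)\,Q^{\otimes 2}(\psi^{T})^{\otimes 2}Q^{\otimes 2}]$.

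The crux is the two-copy Pauli average $\tfrac{1}{d^2}\sum_Q Q^{\otimes 2}(\cdot)Q^{\otimes 2}$. By Pauli orthogonality $\tfrac1{d^2}\sum_Q Q^{\otimes 2}(R\otimes S)Q^{\otimes 2}=\delta_{R,S}\,R\otimes R$, so this average keeps only the diagonal component, which for $(\psi^{T})^{\otimes 2}$ is $\tfrac1{d^2}\sum_R\tr^2(R\psi^{T})\,R\otimes R$; since $R^{T}=\pm R$ for Paulis we have $\tr^2(R\psi^{T})=\tr^2(R\psi)$, so the transpose may be dropped and $\tfrac1{d^2}\sum_Q Q^{\otimes 2}(\psi^{T})^{\otimes 2}Q^{\otimes 2}=\tfrac1{d^2}\sum_Q Q^{\otimes 2}\psi^{\otimes 2}Q^{\otimes 2}$. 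Substituting back and un-fusing the trace yields $(q\star q)_\psi(P)=\tfrac{1}{d^{2}}\sum_Q\tr(\psi Q\psi Q)\,\tr(P\psi P Q\psi Q)$, and then $\tr(\psi Q\psi Q)=\tr^2(Q\psi)=d\,p_\psi(Q)$ together with $\tr(P\psi P Q\psi Q)=|\tr(PQ\,\psi)|^2=d\,p_\psi(PQ)$ give $(q\star q)_\psi(P)=\sum_Q p_\psi(Q)\,p_\psi(PQ)=(p\star p)_\psi(P)$, using $p_\psi(PQ)=p_\psi(QP)$ in the last step.

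The only genuinely delicate point — and essentially the whole obstacle — is phase bookkeeping: the product of two Hermitian Paulis can carry a factor of $i$, so identities such as ``$|\tr(PQ\,\psi)|^2=d\,p_\psi(PQ)$'' must be read with $PQ$ standing for its Hermitian Pauli representative, and when reordering $QP$ versus $PQ$ one must check that the sign $\chi(P,Q)=\pm1$ cancels, which it does since every relevant quantity enters squared. Everything else is short Pauli algebra and the definitions of $p_\psi$ and $q_\psi$.
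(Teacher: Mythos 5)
Your proof is correct and follows essentially the same route as the paper: express $q_\psi(P)=\tfrac1d\tr(P\psi P\psi^T)$, fuse the two traces into a two-copy trace, apply the Pauli twirl $\tfrac1{d^2}\sum_Q Q^{\otimes2}(\cdot)Q^{\otimes2}$ together with $\tr^2(R\psi^T)=\tr^2(R\psi)$ to remove the transpose, then un-fuse to recover $(p\star p)_\psi$. Your closing remark about tracking the Hermitian Pauli representative of $QP$ and the $\chi(P,Q)=\pm1$ cancellation is a detail the paper leaves implicit, but it does not change the argument.
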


\begin{fact}[Convolution of Pauli distribution factorizes]\label{fact:convoludistributionfactorizes} Let $\ket{\psi}=\ket{\phi_{n_1}}\otimes \ket{\phi_{n_2}}$. Then $p_{\psi}=p_{\phi_{n_1}}p_{\psi_{n_2}}$ as well as the convolution $(p_{\psi}\star p_{\psi}) =(p_{\phi_{n_1}}\star p_{\phi_{n_1}})(p_{\phi_{n_2}}\star p_{\phi_{n_2}})$.
    \begin{proof}
        The proof descends from the definition of convolution. 
    \end{proof}
\end{fact}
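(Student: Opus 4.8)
The plan is to derive both identities directly from the multiplicativity of the trace under tensor products together with the direct-product structure of the Pauli group. Write $n=n_1+n_2$ and $d=2^n$. Every Pauli string in $\mathbb{P}_n$ decomposes uniquely as $P=P_1\otimes P_2$ with $P_1\in\mathbb{P}_{n_1}$ and $P_2\in\mathbb{P}_{n_2}$; this decomposition is compatible with Pauli multiplication up to the $\pm1,\pm i$ phases that $p_\psi$ cannot detect (it is built from $\tr^2$, and tensoring two Hermitian Pauli strings again yields a Hermitian Pauli string); and the summation $\sum_{Q\in\mathbb{P}_n}$ appearing in the convolution splits as $\sum_{Q_1\in\mathbb{P}_{n_1}}\sum_{Q_2\in\mathbb{P}_{n_2}}$ with $Q=Q_1\otimes Q_2$.

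For the factorization of $p_\psi$, I would simply compute, for $\psi=\phi_{n_1}\otimes\phi_{n_2}$ and $P=P_1\otimes P_2$, using $\tr[(A\otimes B)(C\otimes D)]=\tr(AC)\tr(BD)$:
\begin{align}
p_\psi(P)=\frac{\tr^2(P\psi)}{d}=\frac{\tr^2(P_1\phi_{n_1})\,\tr^2(P_2\phi_{n_2})}{2^{n_1}\,2^{n_2}}=p_{\phi_{n_1}}(P_1)\,p_{\phi_{n_2}}(P_2).
\end{align}
For the convolution, I would expand the definition of $\star$, split the summation index as $Q=Q_1\otimes Q_2$, and apply the factorization just obtained to both occurrences of $p_\psi$ (using $(Q_1\otimes Q_2)(P_1\otimes P_2)=(Q_1P_1)\otimes(Q_2P_2)$, up to a phase that $p_\psi$ ignores):
\begin{align}
(p_\psi\star p_\psi)(P_1\otimes P_2)&=\sum_{Q_1,Q_2}p_\psi(Q_1\otimes Q_2)\,p_\psi\big((Q_1P_1)\otimes(Q_2P_2)\big)\\
&=\sum_{Q_1}p_{\phi_{n_1}}(Q_1)\,p_{\phi_{n_1}}(Q_1P_1)\;\sum_{Q_2}p_{\phi_{n_2}}(Q_2)\,p_{\phi_{n_2}}(Q_2P_2)\\
&=(p_{\phi_{n_1}}\star p_{\phi_{n_1}})(P_1)\;(p_{\phi_{n_2}}\star p_{\phi_{n_2}})(P_2).
\end{align}

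There is essentially no obstacle here: the statement is a direct unwinding of the definitions. The only point worth an explicit sentence is that the convolution over $\mathbb{P}_n$ genuinely factors into the product of the convolutions over $\mathbb{P}_{n_1}$ and $\mathbb{P}_{n_2}$, i.e.\ that the summation variable ranges over the full direct product group and that the $\pm1,\pm i$ phases produced by Pauli multiplication never enter $p_\psi$, the latter because $p_\psi$ depends only on $\tr^2$ of a Hermitian Pauli representative.
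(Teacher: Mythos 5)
Your proof is correct and is the same direct unwinding of definitions that the paper gestures at with its one-line proof ("descends from the definition of convolution"); you simply fill in the elementary trace and direct-product computations, including the helpful observation that the $\pm1,\pm i$ phases from Pauli multiplication are quotiented out before $p_\psi$ is evaluated and therefore cause no trouble.
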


\begin{definition}[$t$-compressible states~\cite{gu_2025}]  $\ket{\psi}$ is a $t$-compressible states if there exist a Abelian subgroup of the Pauli group $G_{\psi}$, with cardinality $|G_{\psi}|=2^{n-t}$, such that $P\ket{\psi}=\pm \ket{\psi}$ for all $P\in G_{\psi}$.
\end{definition}

\begin{lemma}[Magic compression~\cite{leone_learning_2024,oliviero_unscrambling_2024}] Let $\ket{\psi_t}$ be a $t$-compressible state. There exists a Clifford operation $C\in\mathcal{C}_n$ such that
\begin{align}
    C\ket{\psi_t}=C(\ket{\phi_t}\otimes\ket{0}^{\otimes n-t})
\end{align}
\end{lemma}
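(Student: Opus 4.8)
The statement I set out to prove is that a $t$-compressible state $\ket{\psi_t}$ can be brought, by some Clifford, to the product form $\ket{\phi_t}\otimes\ket{0}^{\otimes(n-t)}$ (I read the displayed conclusion in this way, the $C$ on the right-hand side being a typo). The plan is to bring the stabilizing subgroup $G_\psi$ into a canonical form by a single Clifford conjugation, exploiting two standard ingredients: the Clifford group realizes the full symplectic group $\mathrm{Sp}(2n,\mathbb{F}_2)$ on the Pauli group modulo phases, and arbitrary sign flips on Pauli generators are implemented by Pauli (hence Clifford) operators.

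First I would record the structure of $G_\psi$. Since $\ket{\psi_t}\neq 0$ and $P\ket{\psi_t}=\pm\ket{\psi_t}$ for every $P\in G_\psi$, the element $-I$ cannot lie in $G_\psi$; hence Hermitian representatives of the phase classes in $G_\psi$ pairwise commute, and $G_\psi$ is generated by $n-t$ binary-independent Pauli operators $g_1,\dots,g_{n-t}$, each of which may be chosen with eigenvalue $+1$ on $\ket{\psi_t}$ (replace $g_j$ by $-g_j$ if needed). Passing to the binary symplectic picture, the images of $g_1,\dots,g_{n-t}$ span an $(n-t)$-dimensional subspace $L\subseteq\mathbb{F}_2^{2n}$ that is isotropic for the symplectic form, precisely because the $g_j$ commute. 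By transitivity of $\mathrm{Sp}(2n,\mathbb{F}_2)$ on isotropic subspaces of fixed dimension (Witt extension), there is a symplectic map sending $L$ onto the coordinate subspace of $Z$-type vectors on the last $n-t$ qubits, and a Clifford $C_0$ realizing it conjugates $G_\psi$ to a group of the form $\{(-1)^{f(b)}Z^{0\oplus b}:b\in\mathbb{F}_2^{n-t}\}$, where $Z^{0\oplus b}$ is the $Z$-string supported on the last $n-t$ qubits with pattern $b$ and $f:\mathbb{F}_2^{n-t}\to\mathbb{F}_2$ is additive (conjugation is a homomorphism and the phases square to $+1$). Writing $f(b)=c\cdot b$ and applying the Pauli correction $X^{0\oplus c}$ on the last $n-t$ qubits flips exactly the unwanted signs, so that $C\coloneqq X^{0\oplus c}C_0$ satisfies $C\,G_\psi\,C^\dagger=\langle Z_{t+1},\dots,Z_n\rangle$ with all $+1$ signs.

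Finally, because each $g_j$ was normalized to stabilize $\ket{\psi_t}$ with eigenvalue $+1$, the state $C\ket{\psi_t}$ is a simultaneous $+1$ eigenstate of $Z_{t+1},\dots,Z_n$; the joint $+1$ eigenspace of these commuting operators is exactly $\mathcal{H}_t\otimes\mathrm{span}\{\ket{0}^{\otimes(n-t)}\}$, whence $C\ket{\psi_t}=\ket{\phi_t}\otimes\ket{0}^{\otimes(n-t)}$ for some $t$-qubit state $\ket{\phi_t}$. The only place requiring genuine care — the ``main obstacle'' — is the phase bookkeeping: verifying $-I\notin G_\psi$, producing a consistent set of $+1$-eigenvalue generators, and checking that the residual sign character after the symplectic reduction is additive and hence removable by one Pauli. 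All of these are routine given the standard structure theory of stabilizer groups and the Clifford/symplectic correspondence, so the Lemma follows with little additional work.
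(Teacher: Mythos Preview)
Your proof is correct and follows the standard stabilizer-formalism argument: use the Clifford/symplectic correspondence to map the isotropic subspace underlying $G_\psi$ onto the $Z$-subspace of the last $n-t$ qubits, then fix signs with a Pauli. The paper does not actually supply its own proof of this lemma---it is stated with a citation to Refs.~\cite{leone_learning_2024,oliviero_unscrambling_2024} and used as a black box---so there is nothing to compare against, but your argument is precisely the expected one and your reading of the displayed equation (the extra $C$ on the right being a typo) is correct.
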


\begin{lemma}[Magic compression theorem~\cite{leone_learning_2024,oliviero_unscrambling_2024}]\label{lem:magicompressionth}
    Let $U_t$ be a unitary containing at most $t$ single qubit non-Clifford gates, then $U_t\ket{0}$ is a $t'$-compressible state with $t'\le t$.
\end{lemma}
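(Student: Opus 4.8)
The plan is to recast the statement as a bound on the \emph{stabilizer dimension} of $U_t\ket{0}$: writing $s$ for the largest integer such that some abelian subgroup $G\le\mathbb{P}_n$ of order $2^{s}$ fixes a Hermitian representative of each of its elements acting on $U_t\ket{0}$ up to sign, the claim is equivalent to $s\ge n-t$. Since $\mathbb{P}_n$ is an elementary abelian $2$-group, $G$ contains subgroups of every order $2^{m}$ with $m\le s$, so $s\ge n-t$ at once yields that $U_t\ket{0}$ is $t'$-compressible with $t'=n-s\le t$. I would establish $s\ge n-t$ by induction on the number of non-Clifford gates, decomposing the circuit as $U_t=C_t\,g_t\,C_{t-1}\cdots C_1\,g_1\,C_0$, where the $C_i\in\mathcal{C}_n$ are Clifford and each $g_i$ is a single-qubit non-Clifford gate that — up to a fixed Clifford conjugation absorbed into the neighbouring $C_i$, as holds for the $T$ gate of Clifford$+T$ fault tolerance — acts diagonally on one qubit.

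The base case is immediate: $U_0\ket 0$ is a stabilizer state, fixed by a group of order $2^n$, so $s=n$. The induction rests on two facts. First, \emph{Clifford invariance}: if an abelian $G\le\mathbb{P}_n$ of order $2^{s}$ fixes $\ket\psi$ up to sign, then $CGC^\dagger$ is again an abelian subgroup of $\mathbb{P}_n$ (as $C$ is Clifford), of the same order, fixing $C\ket\psi$ up to sign; hence Clifford gates do not change $s$. Second, the \emph{crux}: a diagonal single-qubit gate $g$ on qubit $j$ lowers $s$ by at most one. Given the stabilizer group $G$ of $\ket\psi$, set $G'\coloneqq\{P\in G:\,P_j\in\{I,Z\}\}$; the map sending $P$ to its $j$-th tensor factor, composed with the quotient map onto $\{I,X,Y,Z\}/\{I,Z\}\cong\mathbb{Z}_2$, is a homomorphism $G\to\mathbb{Z}_2$ with kernel $G'$, so $[G:G']\le 2$ and $|G'|\ge|G|/2$. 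Every $P\in G'$ commutes with $g$, since its factor on qubit $j$ is $I$ or $Z$ (both diagonal) and it is trivial on $j$ everywhere else; hence $Pg=gP$ and $P\bigl(g\ket\psi\bigr)=g\bigl(P\ket\psi\bigr)=\pm\, g\ket\psi$, so $G'$ is an abelian subgroup of order at least $2^{s-1}$ fixing $g\ket\psi$ up to sign.

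Running these two facts along $U_t=C_t g_t\cdots C_1 g_1 C_0$, starting from $s=n$ for $C_0\ket 0$, each $C_i$ preserves $s$ and each $g_i$ decreases it by at most $1$, so $U_t\ket 0$ has stabilizer dimension at least $n-t$, i.e.\ $t'\le t$. The main obstacle is the per-gate estimate: it genuinely relies on the non-Clifford gate being diagonal in a Pauli eigenbasis, so that it commutes with a subgroup of the stabilizer of index at most $2$; for a single-qubit gate rotating about a generic Bloch axis the argument fails — such a gate can destroy the entire Pauli stabilizer of one qubit of a Bell pair — and one only recovers the bound by first splitting it into a bounded number of diagonal gates via an Euler-angle factorization, which is why the magic-compression statement is naturally phrased for $T$-type gates. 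Everything else is routine bookkeeping: checking that $P\mapsto P_j$ followed by the quotient is a homomorphism onto $\{I,X,Y,Z\}/\{I,Z\}$, and that the $2$-group $G'$ contains a subgroup of the exact order $2^{n-t}$ demanded by the definition of $t$-compressibility.
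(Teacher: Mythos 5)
The paper does not prove this lemma internally; it states it by citation to Refs.~\cite{leone_learning_2024,oliviero_unscrambling_2024}, so there is no in-paper proof to compare against. Your self-contained argument — induction on the number of non-Clifford gates, Clifford-invariance of the stabilizer dimension, and the per-gate estimate via the homomorphism $P\mapsto P_j\ (\mathrm{mod}\ \{I,Z\})$ showing that the subgroup $G'=\{P\in G:P_j\in\{I,Z\}\}$ has index at most $2$ and commutes with a $Z$-diagonal gate on qubit $j$ — is correct, and it is essentially the argument one finds in the cited works for $T$-doped Clifford circuits. The reduction from compressibility to a lower bound on stabilizer dimension is also handled correctly, since an elementary abelian $2$-group of order $2^{s}$ contains subgroups of every smaller power of $2$.

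The one thing to tighten is the closing remark, which understates the scope gap. For a single-qubit unitary $g$ rotating about an axis not aligned with any Pauli, $g^{\dagger}Pg$ is a Pauli only for $P=I$, so the index of the surviving subgroup can be $4$ rather than $2$: with $\ket{\Phi^{+}}=\tfrac{1}{\sqrt2}(\ket{00}+\ket{11})$, a Pauli $P_1\otimes P_2$ stabilizes $(g\otimes I)\ket{\Phi^{+}}$ iff $g^{\dagger}P_1g=\pm P_2^{T}$, which for generic $g$ forces $P_1=P_2=I$, giving a state with trivial Pauli stabilizer and hence $t'=2$ from $t=1$. So $t'\le t$ is genuinely false for arbitrary single-qubit non-Clifford gates; your Euler-angle factorization does not ``recover the bound'' $t'\le t$ but only $t'\le 3t$ (or $t'\le 2t$ if you refine the index argument by noting the kernel of $P\mapsto P_j\ (\mathrm{mod}\ \{I\})$ has index at most $4$). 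The lemma as written must therefore be read as applying to Pauli-diagonal non-Clifford gates such as $T$, the standard Clifford$+T$/magic-state-injection setting that the rest of the paper works in; the weaker constant $2t$ for arbitrary single-qubit gates would leave the paper's $\Omega(k)$ lower bound intact, but you should state explicitly which version you are proving rather than suggesting the generic case is recoverable at the stated rate.
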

Next, we will show a very well known fact.
\begin{lemma}\label{lem:haarexpsmall}
    Let $\psi\sim\haar$ a state uniformly sampled from the Haar measure. Then
    \begin{align}
        \Pr_{\psi\sim\haar}\left(\tr^{2}(P\psi)\ge d^{-1/4}\,:\, \forall P\in\mathbb{P}_n\right)\le d^{2}e^{-\Omega(\sqrt{d})}
    \end{align}
    \begin{proof}
The proof uses Levy's lemma~\cite{popescu_entanglement_2006}. First, we have $\mathbb{E}_{\psi\sim\haar}\tr^2(P\psi)=\frac{1}{d+1}$. Then, using that $\tr^2(P\psi)$ has a Lipschitz constant upper bounded by $2$ with respect to $\psi$, we apply Levy's lemma:
\begin{align}
    \Pr_{\psi\sim\haar}\left(\left|\tr^{2}(P\psi)-\frac{1}{d+1}\right|\ge \varepsilon\right)\le e^{-\Omega(\varepsilon^2d)}
\end{align}
choosing $\varepsilon=d^{-1/4}$ and using the union bound over the whole Pauli group $\mathbb{P}_n$ the statement follows.
    \end{proof}
\end{lemma}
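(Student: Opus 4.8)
The plan is a textbook first-moment-plus-concentration argument: control $\tr^2(P\psi)$ for each fixed Pauli $P$ and then take a union bound over the $|\mathbb{P}_n| = d^2$ elements of $\mathbb{P}_n$. Note first that the statement is only meaningful for $P \neq \id$, since $\tr^2(\id\,\psi) = 1$ identically and the Haar-mean computation below also needs $P \neq \id$; I therefore read the event as ``$\tr^2(P\psi) \ge d^{-1/4}$ for some non-identity $P \in \mathbb{P}_n$''.

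For a fixed non-identity Pauli $P$, I would first compute the Haar mean. Using $\tr^2(P\psi) = \tr[(P\otimes P)\,\psi^{\otimes 2}]$, the standard identity $\mathbb{E}_{\psi\sim\haar}[\psi^{\otimes 2}] = \frac{1}{d(d+1)}(\id + \mathrm{SWAP})$, and $\tr(P) = 0$, $P^2 = \id$, one obtains $\mathbb{E}_{\psi\sim\haar}[\tr^2(P\psi)] = \frac{\tr(P)^2 + \tr(P^2)}{d(d+1)} = \frac{1}{d+1}$. Next I would bound the Lipschitz constant of $\psi \mapsto \tr^2(P\psi)$ on the unit sphere of $\mathbb{C}^d \cong \mathbb{R}^{2d}$: the map $\psi \mapsto \tr(P\psi) = \bra{\psi}P\ket{\psi}$ is $2$-Lipschitz (decompose $\bra{\psi}P\ket{\psi} - \bra{\phi}P\ket{\phi} = \bra{\psi}P\ket{\psi - \phi} + \bra{\psi - \phi}P\ket{\phi}$ and use $\|P\|_\infty = 1$ and $\|\psi\| = \|\phi\| = 1$) and takes values in $[-1,1]$, so its square is $O(1)$-Lipschitz. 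Levy's concentration-of-measure lemma for the sphere~\cite{popescu_entanglement_2006} then gives, for any $\varepsilon > 0$,
\begin{align}
\Pr_{\psi\sim\haar}\Big(\big|\tr^2(P\psi) - \tfrac{1}{d+1}\big| \ge \varepsilon\Big) \le e^{-\Omega(\varepsilon^2 d)}.
\end{align}

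Finally I would set $\varepsilon = d^{-1/4}$. Since $d^{-1/4} - \tfrac{1}{d+1} \ge \tfrac12 d^{-1/4}$ for every $d \ge 2$, one has $\{\tr^2(P\psi) \ge d^{-1/4}\} \subseteq \{\,|\tr^2(P\psi) - \tfrac{1}{d+1}| \ge \tfrac12 d^{-1/4}\,\}$, whose probability is $e^{-\Omega(\sqrt d)}$ because $\varepsilon^2 d = \sqrt d$. A union bound over the $d^2$ Paulis then yields $\Pr_{\psi\sim\haar}(\exists P \neq \id : \tr^2(P\psi) \ge d^{-1/4}) \le d^2\,e^{-\Omega(\sqrt d)}$, which is the claim.

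There is no genuine obstacle; it is a routine estimate. The only point worth a second look is that $\varepsilon = d^{-1/4}$ is essentially the borderline exponent: the per-Pauli Levy bound $e^{-\Omega(\varepsilon^2 d)}$ must still dominate the $d^2 = 4^n$ union-bound loss, which it does comfortably since $e^{-\Omega(\sqrt d)}$ is super-polynomially small in $d$. One also has to keep $P = \id$ out of the statement, where the bound would be vacuous.
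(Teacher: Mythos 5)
Your proof is correct and follows essentially the same route as the paper: compute the Haar mean $\frac{1}{d+1}$ via $\mathbb{E}[\psi^{\otimes 2}]$, invoke Levy's lemma with an $O(1)$-Lipschitz bound, set $\varepsilon = d^{-1/4}$, and union bound over the $d^2$ Paulis. Your added care in excluding $P=\id$ (where the event is trivially true) and in absorbing the $\frac{1}{d+1}$ shift into the $\Omega(\cdot)$ constant are small tightenings the paper leaves implicit, but they do not change the argument.
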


\begin{lemma}[Commuting with the support implies large expectation value]\label{lem:commutationexpectationlemma} Let $p_{\psi}$ be the Pauli distribution, and $p_{\psi}\star p_{\psi}$ the convolution with itself. Given $P\in\mathbb{P}_n$, define $\pi_{p_{\psi}}(P)$ and $\pi_{(p\star p)_{\psi}}(P)$ as
\begin{align}
    \pi_{p_{\psi}}(P)\coloneqq\Pr_{Q\sim p_{\psi}}\left([P,Q]=0\right),\quad \pi_{(p\star p)_{\psi}}\coloneqq\Pr_{Q\sim (p\star p)_{\psi}}\left([P,Q]=0\right)
\end{align}
the probability that a Pauli $P$ commute with Pauli operators sampled according either $p_{\psi}$ or $(p\star p)_{\psi}$ respectively. It holds that
\begin{align}
\pi_{p_{\psi}}(P)&=\frac{1+\tr^2(P\psi)}{2}\,,\label{eq1prooflem}\\
\pi_{(p\star p)_{\psi}}(P)&=\frac{1+\tr^4(P\psi)}{2}\,.\label{eq2prooflem}
\end{align}
\begin{proof}
    \cref{eq1prooflem} follows immediately from:
    \begin{align}
         \tr^2(P\psi)=\tr(P\psi P\psi)=\frac{1}{d}\sum_{Q\,:\,[P,Q]=0}\tr^2(Q\psi)-\frac{1}{d}\sum_{Q\,:\,\{P,Q\}=0}\tr^2(Q\psi)=2\pi_{p_{\psi}}(P)-1\,.
    \end{align}
To show \cref{eq2prooflem}, notice that given $P_1,P_2\sim p_{\psi}$, for $Q=P_1P_2$ then $[Q,P]=0$ if $[Q,P_1]=[Q,P_2]=0$ or $\{Q,P_1\}=\{Q,P_2\}=0$, meaning that we have the following relation between $\pi_{p_\psi}$ and $\pi_{(p\star p)_{\psi}}$:
\begin{align}
    \pi_{(p\star p)_{\psi}}=\pi_{p_{\psi}}^2+(1-\pi_{p_{\psi}})^2\label{eq3lemmaproof}
\end{align}
Combining \cref{eq3lemmaproof} and \cref{eq1prooflem} leads to \cref{eq2prooflem}.
\end{proof}
\end{lemma}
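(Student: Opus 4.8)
The plan is to reduce both identities to the Pauli expansion of a pure state together with a short combinatorial argument about commutation signs; no heavy machinery is required. For \cref{eq1prooflem} I would start from $\tr^2(P\psi)=\tr(P\psi P\psi)$ (valid because $\langle\psi|P|\psi\rangle$ is real), expand $\psi=\tfrac1d\sum_{Q\in\mathbb{P}_n}\tr(Q\psi)\,Q$ in the Pauli basis, and use $PQP=\chi(P,Q)Q$ to obtain $P\psi P=\tfrac1d\sum_{Q}\chi(P,Q)\tr(Q\psi)\,Q$. Orthogonality of the Pauli basis then gives
\[
\tr^2(P\psi)=\frac1d\sum_{Q\in\mathbb{P}_n}\chi(P,Q)\,\tr^2(Q\psi)=\sum_{Q\in\mathbb{P}_n}\chi(P,Q)\,p_\psi(Q).
\]
Since $\psi$ is pure, $\sum_Q p_\psi(Q)=\tr(\psi^2)=1$, so $p_\psi$ is a genuine probability distribution; splitting the sum according to whether $P$ commutes ($\chi=+1$) or anticommutes ($\chi=-1$) with $Q$ rewrites the right-hand side as $\pi_{p_\psi}(P)-(1-\pi_{p_\psi}(P))=2\pi_{p_\psi}(P)-1$, which rearranges to \cref{eq1prooflem}.

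For \cref{eq2prooflem}, the key step is the observation that sampling $R\sim(p\star p)_\psi$ is the same as sampling $Q_1,Q_2\sim p_\psi$ independently and setting $R=Q_1Q_2$, up to a phase that is irrelevant to whether $R$ commutes with $P$. Because $\chi(P,Q_1Q_2)=\chi(P,Q_1)\chi(P,Q_2)$, the event $[P,R]=0$ occurs precisely when $P$ commutes with both $Q_1,Q_2$ or anticommutes with both, so $\pi_{(p\star p)_\psi}(P)=\pi_{p_\psi}(P)^2+(1-\pi_{p_\psi}(P))^2$. Substituting \cref{eq1prooflem} and using $\bigl(\tfrac{1+s}{2}\bigr)^2+\bigl(\tfrac{1-s}{2}\bigr)^2=\tfrac{1+s^2}{2}$ with $s=\tr^2(P\psi)$ yields \cref{eq2prooflem}.

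I do not expect any real obstacle here: the whole argument is bookkeeping. The one point that deserves a line of justification is the identification of $(p\star p)_\psi$ with the distribution of the product of two independent $p_\psi$-samples, which follows immediately from the definition of convolution on $\mathbb{P}_n$ viewed as an abelian group modulo phases, together with the harmless fact that accumulated phases drop out of the commutation indicator. (If one prefers to avoid even this, one can instead argue through \cref{Belldifferenceequalpaulidifference} and the Bell-basis description of $q_\psi$, but the Pauli-distribution route above is the cleanest.)
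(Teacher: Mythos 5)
Your proof is correct and takes essentially the same route as the paper: express $\tr^2(P\psi)=\tr(P\psi P\psi)$ as a $\chi$-weighted sum over $p_\psi$, split by commutation to get the first identity, then use that a sample from $(p\star p)_\psi$ is the product of two independent $p_\psi$-samples and that $\chi(P,\cdot)$ is multiplicative to get the second. You are a bit more explicit than the paper in writing out the Pauli expansion of $\psi$, and you state the commutation condition correctly (``$P$ commutes with both $Q_1,Q_2$ or anticommutes with both''), whereas the paper's proof contains a typo writing $[Q,P_1]=[Q,P_2]=0$ in place of $[P,P_1]=[P,P_2]=0$; otherwise the arguments coincide.
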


\begin{theorem}[Lower bound on state $k$-design via compressible states]\label{th:lowerbound}
    Let $\mathcal{S}_t=\{\ket{\psi_t}\}$ be a set of $t$-compressible states. Let $t\le n-1$. If $k\ge 12t+8$, then it holds that:
    \begin{align}
        \|\mathbb{E}_{\psi\sim\mathcal{S}_t}\psi^{\otimes k}-\mathbb{E}_{\psi\sim\haar}\psi^{\otimes k}\|_1\ge\frac{1}{8}-2^{-\Omega(n)}
    \end{align}
\end{theorem}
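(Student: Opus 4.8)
The plan is to exhibit a computationally efficient distinguisher based on Bell difference sampling that, with constant probability, accepts on $t$-compressible states and rejects on Haar-random states, and then invoke the operational meaning of the trace distance of $k$-fold moments. The key structural fact is \cref{lem:magicompressionth} together with \cref{lem:commutationexpectationlemma}: if $\ket{\psi_t}$ is $t$-compressible, there is an abelian subgroup $G_\psi$ of the Pauli group of size $2^{n-t}$ all of whose elements stabilize $\ket{\psi_t}$ (up to sign), hence each $P\in G_\psi$ has $\tr^2(P\psi)=1$; by \cref{eq1prooflem}, \cref{eq2prooflem} such $P$ commute with \emph{everything} drawn from $p_\psi$ or $(p\star p)_\psi$. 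By contrast, \cref{lem:haarexpsmall} shows that for a Haar-random $\ket{\psi}$ every nonidentity Pauli has $\tr^2(P\psi)\le d^{-1/4}$ except with probability $d^2 e^{-\Omega(\sqrt d)}$, so there is no large stabilizer group, and a Pauli sampled from $(p\star p)_\psi$ is "generic."

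First I would describe the distinguisher: given $k$ copies of $\ket\psi$, use pairs of copies to perform $\lfloor k/2\rfloor$ rounds of Bell difference sampling, obtaining Pauli operators $Q_1,\dots,Q_{\lfloor k/2\rfloor}$ distributed as $(q\star q)_\psi = (p\star p)_\psi$ (by \cref{Belldifferenceequalpaulidifference}). The test accepts iff the $Q_i$ fail to generate a subgroup of rank at least $n-t$ — equivalently, iff the binary matrix whose rows are the $2n$-bit symplectic representations of the $Q_i$ has rank $\le n-t$ over $\mathbb{F}_2$. For a $t$-compressible state, \cref{fact:convoludistributionfactorizes} and the magic-compression picture (\cref{lem:magicompressionth}, and the Clifford normal form) show that the samples $Q_i$ all lie, up to a fixed Clifford change of basis, in a coset structure supported on only $t'\le t$ "non-stabilizer" qubits: the stabilizer part of $\ket{\psi_t}$ forces $Q_i$ to commute with $G_\psi$, so the $Q_i$ live in a symplectic subspace of dimension $n+t'\le n+t$, hence their $\mathbb{F}_2$-span has rank at most $n+t$ — but more carefully, projecting out the $(n-t')$-dimensional stabilizer directions, the interesting part of each $Q_i$ lives in a $2t'$-dimensional space, so the rank of the "non-stabilizer" component is at most $2t$, which for $k\ge 12t+8$ samples still leaves the total rank bounded away from $n$; the test accepts deterministically. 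For Haar-random states, I would show that $(p\star p)_\psi$ is close to uniform on the symplectic space (using \cref{lem:haarexpsmall}: all $\tr^4(P\psi)$ are tiny, so $\pi_{(p\star p)_\psi}(P)\approx 1/2$ for all $P\ne I$), and a standard coupon-collector / random-matrix-over-$\mathbb{F}_2$ argument gives that $O(n)$ near-uniform samples generate a rank-$n$ subspace with probability $1-2^{-\Omega(n)}$; hence the test rejects except with probability $2^{-\Omega(n)}$. Since $k\ge 12t+8$ gives $\lfloor k/2\rfloor \ge 6t+4 = \Omega(n)$ samples only when $t=\Theta(n)$, I would instead note the relevant count is $\lfloor k/2\rfloor - (\text{rank contributed by the compressible part})\ge$ a constant-fraction surplus, enough to witness the rank gap; the bookkeeping here is where I must be careful about whether "enough samples" means $\Omega(n)$ or merely "more than $2t$" — the former is needed on the Haar side and is supplied because on the Haar side one is trying to reach rank $n$, while the $t$-compressible side only needs the \emph{complementary} statement that rank stays $\le n-\Omega(1)$, which follows from $t\le n-1$ and the exact coset structure.

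Finally, combining the two cases: the distinguishing probability of this measurement is at least $1 - 2^{-\Omega(n)} - d^2 e^{-\Omega(\sqrt d)} \ge 7/8 - 2^{-\Omega(n)}$, and since this measurement is an affine functional of $\psi^{\otimes k}$ (Bell difference sampling on pairs of copies is a POVM on $(\mathbb{C}^2)^{\otimes 2nk/...}$ — more precisely it is implemented by measuring $\lfloor k/2\rfloor$ pairs in the Bell basis, which is a fixed $k$-copy measurement), the trace distance $\|\mathbb{E}_{\psi\sim\mathcal{S}_t}\psi^{\otimes k}-\mathbb{E}_{\psi\sim\haar}\psi^{\otimes k}\|_1$ is at least twice the distinguishing advantage over $1/2$, i.e. at least $1/8 - 2^{-\Omega(n)}$, as claimed. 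The main obstacle I anticipate is the \emph{rank argument on the $t$-compressible side}: one must argue that no matter which $t$-compressible state is drawn and no matter how the Bell-difference samples fall, the resulting $\mathbb{F}_2$-span provably has rank $\le n-1$ (not just "typically"), which requires carefully combining the Clifford-normal-form decomposition $\ket{\psi_t}\cong\ket{\phi_{t'}}\otimes\ket{0}^{\otimes n-t'}$, the factorization of $(p\star p)_\psi$ from \cref{fact:convoludistributionfactorizes}, and the observation that the $\ket{0}^{\otimes n-t'}$ block contributes only deterministic $Z$-type stabilizer directions to every sample — so the span is confined to a proper subspace. The Haar-side coupon-collector bound and the union-bound cleanup via \cref{lem:haarexpsmall} are routine by comparison.
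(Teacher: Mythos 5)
Your proposal identifies the right tools (Bell difference sampling, the factorization of $(p\star p)_\psi$, Levy's lemma, the triviality of the stabilizer group for Haar-random states), but the distinguisher you actually describe — a rank test on the $\mathbb{F}_2$-span of the Bell-difference samples — does not distinguish the two ensembles in the regime of interest, and no amount of bookkeeping can fix it. With only $\lfloor k/2\rfloor = 6t+4$ samples, the rank of the span is at most $6t+4$ on \emph{both} sides. When $t\ll n$ (the interesting case, since we want a lower bound that is nontrivial for constant $t$), $6t+4$ is far below $2n$, $n+t$, or $n-t$, so neither ensemble comes close to saturating any rank threshold and the two rank distributions are indistinguishable. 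Your own parenthetical (``the former is needed on the Haar side'') flags this, but the proposed resolution — that the compressible side ``only needs the complementary statement'' — is not a resolution: you never actually observe rank $>n-t$ on the Haar side with $O(t)$ samples, so the test accepts both ensembles. The rank of the span is simply the wrong observable.

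The paper's algorithm uses the same Bell-difference samples $X=(x_1,\dots,x_l)$ but looks at $S=\operatorname{span}(X)\cap X^\perp$, the set of Paulis that lie in the span \emph{and} commute with every sample, then spends two extra copies to actually \emph{measure} $\tr^2(P\psi)$ for a random nontrivial $P\in S$. The distinguishing power comes from this measurement, not from any rank count. On the compressible side (WLOG $\psi=\phi_t\otimes\ket{0}^{\otimes n-t}$ by Clifford-invariance), once $l>2t$ the restrictions of the samples to the first $t$ qubits are forced to be linearly dependent, so $\operatorname{span}(X)$ contains an element of the form $I_t\otimes Z_{n-t}$; this element is a stabilizer, hence lies in $X^\perp$, so $S\ne\{I\}$ with probability $\ge 1-2^{-(n-t)}$. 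Moreover \cref{lem:commutationexpectationlemma} and a union bound over the $4^t$ possible magic-parts show that, with high probability, \emph{every} $P\in S$ has $\tr^2(P\psi)\ge\varepsilon_T$, because a Pauli with small expectation would be unlikely to commute with all $l$ independent samples. On the Haar side, \cref{lem:haarexpsmall} forces the measured value to be $2^{-\Omega(n)}$ no matter what $P$ the algorithm finds. That final expectation-value measurement on $\psi^{\otimes 2}$ is the step your plan omits, and it is essential: without it, finding a nontrivial $P\in S$ (or noting a rank deficiency) gives no distinguishing signal.
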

\begin{proof}
In what follows, we construct a bounded operator $\Lambda$ which differentiate the two ensembles of states. For this purpose, we describe a computationally efficient distinguishing algorithm for the two ensembles and, at the end, show that the expectation value differences of the resulting POVM ($\frac{\Lambda+\mathbb1}{2}$) is $\Omega(1)$.

   The algorithm proceeds as follows. Let $l \in \mathbb{N}$ be an integer to be determined later. We collect $4l$ copies of the state $\ket{\psi}$, drawn from one of the two ensembles. From these, we perform $l$ independent Bell different measurements (see \cref{def:belldifferencedistribution}), yielding $l$ Pauli operators. For simplicity, we label them in their bitstring representation~\cite{aaronson_improved_2004}: $X = (x_1, \ldots, x_l)$. We define 
    \begin{align}
        S\coloneqq\operatorname{span}(X)\cap X^{\perp}
    \end{align}
where $X^{\perp}\coloneqq\{y\in\mathbb{F}^{2n}_2 : [y,x]=0 \ \forall x\in X \}$, i.e., the set of all Pauli operators commuting with every element of $X$, and $\operatorname{span}$ denotes the span over the field $\mathbb{F}_{2}^{2n}$. The algorithm then selects uniformly at random a nontrivial element from $S$ and outputs outcome coming from the operator, $\tr(P_x^{\otimes 2}\psi^{\otimes2})=\tr^2(P_x \psi)$ for $x\in S$, which corresponds to and operator with bounded operator norm. If $S$ contains only the trivial element (the identity Pauli operator), the algorithm outputs $0$.  

Let us now analyze the performance of the algorithm.

By \cref{lem:haarexpsmall}, if the state is Haar random distributed, the largest Pauli expectation value is exponentially small in $n$ with probability $1-d^{2}e^{-\Omega(\sqrt{d})}$. Let $\Lambda$ be the (efficient) POVM corresponding to the algorithm described above. We have
\begin{align}
    \tr(\Lambda\mathbb{E}_{\psi\sim\haar}\psi^{\otimes 4l+2})\le \left(1-d^{2}e^{-\Omega(d)}\right)d^{-1/4}+d^{2}e^{-\Omega(d)}=2^{-\Omega(n)}
\end{align}
Let us now analyze the case for the ensemble $\mathcal{E}_t$ of $t$-compressible states. We divide the proof into two steps. First we show that $S$ contains (with high probability) at least a non-trivial element once $l> 2t$. Second, we show that every Pauli operator in $S$ has expectation value large.

For the first claim, notice that without loss of generality, we can analyze the case where $\ket{\psi}=\ket{\phi}\otimes \ket{0}^{\otimes n-t}$, as the algorithm we are using is Clifford invariant.

The Pauli distribution (see \cref{def:paulidistribution}) of a state $\ket{\phi}\otimes \ket{0}^{\otimes n-t}$ has support of Pauli of the form $P_t\otimes Z_{n-t}$ where $P_t\in\mathbb{P}_t$ is a Pauli operator on $t$ qubits, while $Z_{n-t}\in\mathbb{Z}_{n-t}$ is a Pauli-Z-string operator on $n-t$ qubits. Moreover, any Pauli of the form $I_t\otimes Z_{n-t}$ belongs to the stabilizer group of $\ket{\psi}$ and therefore, by definition, to $X^{\perp}$. We now show that $\operatorname{span}(X)$ contains one element of the form $I_t\otimes Z_{n-t}$, which implies that $S$ contains more than the trivial element. Combining \cref{Belldifferenceequalpaulidifference} and \cref{fact:convoludistributionfactorizes}, the samples $P_{t}\otimes Z_{n-t}$ belonging to $X$ are sampled {\em independently}, i.e. $P_{t}$ according to $p_{\phi_t}\star p_{\phi_t}$ and $Z_{n-t}$ according to $p_{0_{n-t}}\star p_{0_{n-t}}$. Moreover, since the convolution of two uniform distribution is still a uniform distribution, $p_{0_{n-t}}\star p_{0_{n-t}}$ is uniformly distributed on $\mathbb{Z}_{n-t}$. Since $P_t$ and $Z_{n-t}$ are sampled independently, we can analyze them separately.
Denote $X|_t$ the restriction of the samples on the first $t$ qubits. On the one hand, note that if we collect $|X|\ge 2t+1$ samples, then—since the maximal size of an algebraically independent set of Pauli operators on $t$ qubits is $2t$—the restriction of $X$ to the first $t$ qubits, denoted $X|_t$, must contain a linear dependency. Consequently, $I_t$, i.e., the identity Pauli operator (all zeros in the bitstring representation) lies non-trivially in the span of $X|_t$, meaning there exists an $0\neq \alpha\in\mathbb F_2^{l}$, such that 
\begin{align}
    0=\sum_{i=1}^l \alpha_i X_i|_t\,.
\end{align}

On the other hand, since the  $Z_{n-t}$ Pauli operator are uniformly distributed, the probability that also
\begin{align}
    0=\sum_{i=1}^l \alpha_i X_i=0\oplus Z'\,.
\end{align}
holds is negligibly small. In particular, the probability that $Z'$ is all zero is given by $p=2^{-(n-t)}\,$. Therefore, with probability at least $1 - 2^{-(n-t)}$, the set $S$ contains a nontrivial element.

Let us now follow the second step of the proof: we show that, with high probability, every element in $S$ has a large expectation value. Since $P\in S$ commutes with $X$ by definition, we apply \cref{lem:commutationexpectationlemma}. The probability that $l$ samples from $p_{\psi}\star p_{\psi}$ commute with a given $P$ is $(\pi_{p_{\psi}\star p_{\psi}})^{l}$ (see \cref{lem:commutationexpectationlemma} for the definition of $\pi_{p_{\psi}\star p_{\psi}}$). From \cref{eq2prooflem}, if we fix a threshold $\varepsilon_T>0$, the probability that $P\in S$ (which commutes with all samples) has expectation value $\tr^2(P\psi)<\varepsilon_T$ is at most $\left(\frac{1+\varepsilon_T^2}{2}\right)^l$. By the union bound, the probability that all Pauli operators in $S$ have expectation value below $\varepsilon_T$ is at most $4^t\left(\frac{1+\varepsilon_T^2}{2}\right)^l$. The factor $4^t$ arises because $S$ contains Paulis of the form $P=P_t\otimes Z_{n-t}$ and, as such, $\tr^2(P\psi^{\otimes 2})=\tr^2(P_t\phi_t)$, i.e., the squared expectation value depends only on $P_t$ (since $Z_{n-t}$ belongs to the stabilizer group). We can finally lower bound the expectation value of the bounded operator $\Lambda$ over $\mathbb{E}_{\psi\sim\mathcal{E}_t}\psi^{\otimes 4l+2}$:
\begin{align}
    \tr(\Lambda \mathbb{E}_{\psi\sim\mathcal{E}_t}\psi^{\otimes 4l+2})\ge \varepsilon_T\left(1-2^{-(n-t)}-4^{t}\left(\frac{1+\varepsilon_T^2}{2}\right)^l\right)
\end{align}
We can therefore lower bound the trace distance as
\begin{align}
\|\mathbb{E}_{\psi\sim\mathcal{S}_t}\psi^{\otimes t'}-\mathbb{E}_{\psi\sim\haar}\psi^{\otimes t'}\|_1\ge\varepsilon_T\left(1-2^{-(n-t)}-4^{t}\left(\frac{1+\varepsilon_T^2}{2}\right)^l\right)-2^{-\Omega(n)}
\end{align}
For simplicity, let us consider $\varepsilon_T=\frac{1}{2}$. Let us impose the most stringent condition on $t$, i.e. $\le n-1$. Then, we further impose 
\begin{align}
   2^{2t-l\log_2\frac{8}{5}}\le \frac{1}{4}
\end{align}
for which is sufficient to choose $l\ge 3t+2$ and impose $n\ge t+2$ to ensure $\|\mathbb{E}_{\psi\sim\mathcal{S}_t}\psi^{\otimes k}-\mathbb{E}_{\psi\sim\haar}\psi^{\otimes k}\|_1\ge \frac{1}{8}-2^{-\Omega(n)}$. Hence a total number of copies of $k=4l+2=12t+10$ are sufficient to ensure a large trace distance. This concludes the proof.

\end{proof}

As a simple corollary, we have a lower bound that ensures the optimality in terms of non-Clifford support of our diluted $k$-design in \cref{maintheorem}.
\begin{corollary}[Quantum homeopathy dilution is optimal]\label{cor:quantumhomoptimal}
    Let $\mathcal{E}_t \coloneqq \{C_1 U_t C_2\}$ be the ensemble of unitaries with $C_1, C_2 \sim \mathcal{C}_n$ and $U_t$ an exact $k$-design on $t$ qubits. Then $\mathcal{E}_t$ cannot form an additive-error unitary $k$-design, and therefore a quantum-secure design, unless $t \ge (k-8)/12$. Therefore the dilution of a $k$-design on $\Theta(k)$ qubits is necessary and sufficient for the construction of a quantum-secure $k$-design on $n$ qubits.
\begin{proof}
    The proof follows from \cref{th:lowerbound} by noticing that the ensemble $\mathcal{S}_t=\{U\ket{0}\,:\, U\in\mathcal{E}_t\}$ is an ensemble of $t$-compressible states.
\end{proof}
\end{corollary}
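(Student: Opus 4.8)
The plan is to reduce to \cref{th:lowerbound} by passing from the unitary ensemble to its ensemble of output states. First I would recall the elementary reduction from unitary designs to state designs: if $\mathcal{E}_t$ were an $\varepsilon$-approximate additive-error unitary $k$-design, then evaluating the two moment channels on the input $\ketbra{0}{0}^{\otimes kn}$ and using that the diamond norm dominates the induced trace-norm distance gives
\begin{align}
\Big\|\,\mathbb{E}_{U\sim\mathcal{E}_t}\big(U\ketbra{0}{0}^{\otimes n}U^{\dagger}\big)^{\otimes k}-\mathbb{E}_{U\sim\haar}\big(U\ketbra{0}{0}^{\otimes n}U^{\dagger}\big)^{\otimes k}\Big\|_1\le\varepsilon .
\end{align}
Since $U\ket{0}^{\otimes n}$ with $U\sim\haar$ is a Haar-random pure state, the right-hand ensemble is the Haar state ensemble $\mathbb{E}_{\psi\sim\haar}\psi^{\otimes k}$ and the left-hand one is $\mathbb{E}_{\psi\sim\mathcal{S}_t}\psi^{\otimes k}$ with $\mathcal{S}_t\coloneqq\{U\ket{0}^{\otimes n}:U\in\mathcal{E}_t\}$. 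So it suffices to show $\mathcal{S}_t$ consists of $O(t)$-compressible states and then invoke \cref{th:lowerbound}.

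The crux is the compressibility claim. Write a generic member of $\mathcal{S}_t$ as $C_1(U_t\otimes I_{n-t})C_2\ket{0}^{\otimes n}$, the $k$-design $U_t$ acting WLOG on the first $t$ qubits. Then $C_2\ket{0}^{\otimes n}$ is a stabilizer state with abelian stabilizer group $S$ of order $2^n$. Every Pauli of the form $I_t\otimes Q$ commutes with $U_t\otimes I_{n-t}$, so the subgroup $G$ of $S$ consisting of such elements continues to stabilize $(U_t\otimes I_{n-t})C_2\ket{0}^{\otimes n}$ up to a sign; and $G$ is precisely the kernel of the homomorphism that restricts $P\in S$ to its action, modulo phase, on the first $t$ qubits, whose image lies in $\mathbb{P}_t$ and hence has order at most $4^t$. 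Therefore $|G|\ge 2^n/4^t=2^{n-2t}$. Conjugating by the outer Clifford $C_1$ carries $G$ to an abelian Pauli group of the same order stabilizing $C_1(U_t\otimes I_{n-t})C_2\ket{0}^{\otimes n}$. Thus every state in $\mathcal{S}_t$ is $2t$-compressible, i.e.\ $\mathcal{S}_t$ is a set of $2t$-compressible states in the sense required by \cref{th:lowerbound}.

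To conclude, apply \cref{th:lowerbound} with compression parameter $2t$ (valid since $2t\le n-1$ throughout the relevant regime $k\le n$): whenever $k$ exceeds a fixed constant times $t$, the two state ensembles differ by $\tfrac18-2^{-\Omega(n)}$ in trace norm, so the displayed inequality is violated for every $\varepsilon$ below that threshold. Consequently, for $\mathcal{E}_t$ to be an additive-error unitary $k$-design with resolution bounded away from $\tfrac18$ one needs $t=\Omega(k)$; the same conclusion applies to quantum-secure designs, since they are additive-error designs by \cref{lem:relativeimpliesquantumsecurity}. Together with \cref{maintheorem}—where a $k$-design seed on $O(k)$ qubits diluted by random Cliffords already produces a quantum-secure $k$-design—this gives the asserted statement that a seed on $\Theta(k)$ qubits is necessary and sufficient.

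The main obstacle is essentially absent: \cref{th:lowerbound} does all the real work and this corollary is a short reduction. The only points that need care are (i) the compressibility bound—recognizing that applying a $t$-qubit unitary to a stabilizer state retains a Pauli stabilizer subgroup of order $\ge 2^{n-2t}$, namely the kernel of the restriction map, and that the outer Clifford is harmless—with the exact numerical constant in the final bound tracking this compression parameter; and (ii) keeping the error model straight, since \cref{th:lowerbound} yields a constant ($\tfrac18$) lower bound on the trace distance rather than an $\varepsilon$-dependent one, so the corollary is informative only for approximate designs whose resolution lies below that constant.
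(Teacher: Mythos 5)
Your reduction follows the paper's route exactly: fix the input $\ket{0}^{\otimes n}$, use the diamond-norm bound to control the trace distance between the $k$-th state moments, identify $\mathcal{S}_t=\{U\ket{0}:U\in\mathcal{E}_t\}$ as an ensemble of compressible states, invoke \cref{th:lowerbound}, and close with \cref{lem:relativeimpliesquantumsecurity} and \cref{maintheorem}. Where you diverge is the compression parameter, and your bookkeeping appears to be the correct one. The paper asserts without argument that $\mathcal{S}_t$ consists of $t$-compressible states; your kernel argument shows that what one can actually guarantee is $2t$-compressibility. Indeed, the stabilizer group of $C_2\ket{0}^{\otimes n}$ has order $2^n$, its image under restriction to the $t$ qubits supporting $U_t$ lies in the projective $t$-qubit Pauli group of order $4^t$ and can be surjective (for instance whenever $C_2\ket{0}^{\otimes n}$ is maximally entangled between those $t$ qubits and their complement, as for a GHZ state with $t=1$), leaving a surviving stabilizer subgroup of order $2^{n-2t}$ and generically no larger once a Haar-random $U_t$ is applied. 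Feeding $t'=2t$ into \cref{th:lowerbound} therefore gives the threshold $t\ge(k-8)/24$ rather than the stated $(k-8)/12$; the printed constant appears to have been carried over from \cref{cor:lowerboundnoncliffordgates}, where the $t$-compressibility is justified by \cref{lem:magicompressionth} under the stronger hypothesis of at most $t$ single-qubit non-Clifford gates, a hypothesis that does not apply to an arbitrary $t$-qubit unitary sandwiched between two $n$-qubit Cliffords. The factor of two does not touch the $\Theta(k)$ conclusion that is the substance of the corollary, but you were right to flag that the final constant must track the compression parameter, and the paper's one-line justification glosses over this.
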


As a direct corollary, we have our lower bound for unitary $k$-designs.
\begin{corollary}[$\Omega(k)$ non-Clifford gates are necessary for $k$-designs]\label{cor:lowerboundnoncliffordgates}
  Let $\mathcal{E}_t$ be an ensemble of unitaries containg at most $t<n$ many single qubits non-Clifford gates. Then $\mathcal{E}_t$ cannot form an additive unitary $k$-design $\varepsilon$-approximate $k$-design, and therefore a quantum-secure design, unless $t\ge (k-8)/12$.  
  \begin{proof}
      The claim follows from \cref{lem:magicompressionth,th:lowerbound}.
  \end{proof}
\end{corollary}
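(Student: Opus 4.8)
The plan is to reduce the desired unitary $k$-design lower bound to the \emph{state} $k$-design lower bound already established in \cref{th:lowerbound}, with the magic compression theorem supplying the required compressibility. First I would pass from the unitary ensemble $\mathcal{E}_t$ to the induced ensemble of output states
\begin{align}
    \mathcal{S}_t\coloneqq\left\{\,U\ket{0}^{\otimes n}\;:\;U\in\mathcal{E}_t\,\right\},
\end{align}
equipped with the push-forward of the measure on $\mathcal{E}_t$. By \cref{lem:magicompressionth}, every $U\in\mathcal{E}_t$, which by hypothesis uses at most $t$ single-qubit non-Clifford gates, produces a state $U\ket{0}^{\otimes n}$ that is $t'$-compressible for some $t'\le t$; since such a state admits an abelian stabilizing subgroup of size $2^{n-t'}\ge 2^{n-t}$, restricting to a subgroup of size exactly $2^{n-t}$ shows it is in fact $t$-compressible. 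Hence $\mathcal{S}_t$ is a set of $t$-compressible states, and (using $t<n$, i.e.\ $t\le n-1$) \cref{th:lowerbound} applies directly: if $12t+8\le k$ then
\begin{align}
    \left\|\mathbb{E}_{\psi\sim\mathcal{S}_t}\psi^{\otimes k}-\mathbb{E}_{\psi\sim\haar}\psi^{\otimes k}\right\|_1\ge\frac{1}{8}-2^{-\Omega(n)}.
\end{align}

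Next I would connect this state-design gap to the additive-error unitary-design condition $\|\Phi_{\mathcal{E}_t}-\Phi_{\haar}\|_\diamond\le\varepsilon$. Feeding the single, ancilla-free input $(\ketbra{0}{0})^{\otimes k}$ into the $k$-fold channels gives, directly from the definition of the $k$-fold channel,
\begin{align}
    \Phi_{\mathcal{E}_t}\!\big((\ketbra{0}{0})^{\otimes k}\big)=\mathbb{E}_{\psi\sim\mathcal{S}_t}\psi^{\otimes k},\qquad \Phi_{\haar}\!\big((\ketbra{0}{0})^{\otimes k}\big)=\mathbb{E}_{\psi\sim\haar}\psi^{\otimes k},
\end{align}
so $\|\Phi_{\mathcal{E}_t}-\Phi_{\haar}\|_\diamond$ is at least the trace distance of the two averaged states, which by the previous display is $\tfrac18-2^{-\Omega(n)}=\Omega(1)$. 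Contrapositively, if $\mathcal{E}_t$ is an additive-error $\varepsilon$-approximate unitary $k$-design with $\varepsilon$ below this $\Omega(1)$ threshold — taking $n$ large enough relative to $t$, which is compatible with $t<n$, so that the $2^{-\Omega(n)}$ correction is negligible — then necessarily $12t+8>k$, i.e.\ $t>(k-8)/12$, which in particular gives $t\ge(k-8)/12$. The statement for quantum-secure designs is then immediate from \cref{lem:relativeimpliesquantumsecurity}, since a quantum-secure $k$-design is in particular an additive-error $k$-design of the same accuracy, so the same lower bound on $t$ applies.

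I do not expect a genuine obstacle here, since the mathematical content is already packaged in \cref{lem:magicompressionth} and \cref{th:lowerbound}. The only points requiring care are the passage from $t'\le t$ to exact $t$-compressibility via the subgroup argument, and making sure the comparison between the diamond norm and the state trace distance uses exactly the input against which the distinguisher of \cref{th:lowerbound} is analyzed — namely $k$ parallel copies of $\ket{0}^{\otimes n}$ pushed through $U$, with no auxiliary register needed.
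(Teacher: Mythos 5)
Your proof is correct and takes essentially the same route as the paper: push $\mathcal{E}_t$ to the state ensemble $\{U\ket{0}^{\otimes n}\}$, invoke the magic compression theorem (\cref{lem:magicompressionth}) to get $t$-compressibility, apply \cref{th:lowerbound}, and lower-bound the diamond norm by evaluating $\Phi_{\mathcal{E}_t}-\Phi_{\haar}$ on the input $(\ketbra{0}{0})^{\otimes k}$. The paper's proof is a one-line citation of exactly these two lemmas, and your write-up fills in the same details — including the helpful observation that a $t'$-compressible state with $t'\le t$ is automatically $t$-compressible by passing to a subgroup of the stabilizer — so there is nothing methodologically different to report.
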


\subsection{Implications of our results}\label{sec:implicationsofourresults}
In this section, we discuss the implications of our results. As explained in the main text, our homeopathy construction naturally requires only a system-size–independent number of non-Clifford gates. This is because the only step that involves non-Clifford gates is the injection of a $k$-design on $t = O(k)$ qubits. Here, we examine the best known constructions from previous work to estimate how many non-Clifford gates are sufficient for our approach. 

\begin{lemma}[Shallow unitary designs~\cite{schuster2025randomunitariesextremelylow}]\label{lem:shallowunitarydesigns} There exists a construction for relative $\varepsilon$-approximate unitary $k$-designs on $m$ qubits with unitaries with depth $O(k\log \frac{m}{\varepsilon})$.
\end{lemma}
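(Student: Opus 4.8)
This lemma restates the central construction of Ref.~\cite{schuster2025randomunitariesextremelylow}, so the plan is not to invent a new proof but to recall that paper's strategy, whose whole weight rests on a single tool — the \emph{gluing lemma} (Theorem~1 there). That lemma states, roughly, that if $\mathcal{E}_A$ and $\mathcal{E}_B$ are relative-error $k$-designs on overlapping contiguous intervals $A$ and $B$ whose overlap $A\cap B$ contains $\Omega(k)$ qubits, then the composed ensemble $\mathcal{E}_B\circ\mathcal{E}_A$ is a relative-error $k$-design on $A\cup B$ with errors that add rather than multiply. Taking this as given, the construction is a 1D brickwork random circuit and the argument has three parts.

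First, a \textbf{base case}: on a block of $\Theta(k)$ qubits a local random circuit of depth $O(k\,\mathrm{polylog}\,k\cdot\log(1/\varepsilon'))$ is a relative $\varepsilon'$-approximate unitary $k$-design; this is handled within the reference (and can itself be bootstrapped recursively from smaller blocks via the same gluing lemma). Second, \textbf{tiling and parallelization}: cover the line of $m$ qubits by overlapping blocks of size $\Theta(k)$ arranged in two alternating layers (``odd'' and ``even''), so that all blocks in one layer act on disjoint qubits and can be applied simultaneously — one layer therefore costs a single block-depth. Third, \textbf{recursion depth}: applying the odd layer then the even layer, and invoking the gluing lemma repeatedly, roughly doubles the length of the interval over which the ensemble is a $k$-design; hence after $O(\log m)$ such double-rounds the design property has spread over all $m$ qubits. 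The total depth is $O(\log m)$ times the block depth; choosing $\varepsilon' = \varepsilon/\mathrm{poly}(m)$ so that the $\mathrm{poly}(m)$ accumulated gluing errors sum to $\varepsilon$ converts $\log(1/\varepsilon')$ into $\log(m/\varepsilon)$, giving overall depth $O(k\log(m/\varepsilon))$ up to the $\mathrm{polylog}(k)$ factors suppressed in the statement.

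The hard part is entirely the gluing lemma itself, which is the technical core of Ref.~\cite{schuster2025randomunitariesextremelylow}: one must control the composed moment operator on the overlap region and show that the mismatch between $\Phi_{\mathcal{E}_A}$ and $\Phi_{\mathcal{E}_B}$ is \emph{absorbed} — in relative, not merely additive, error — by the Haar moment operator on $A\cap B$ acting as a projector, so that errors telescope additively through the $O(\log m)$ levels of recursion instead of blowing up exponentially (as a naive additive-to-relative conversion through Lemma~\ref{lem:additivetorelativeerror}, which costs a factor $d^{2k}$, certainly would). Once the gluing lemma is in place the remaining steps — the two-layer tiling, the doubling recursion, and the error bookkeeping — are routine.
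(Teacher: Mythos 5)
This lemma is cited from Ref.~\cite{schuster2025randomunitariesextremelylow} without proof in the paper, so the task is to correctly recall that paper's argument. You do identify the right central tool (the gluing lemma) and the right building blocks (block designs on $\Theta(k)$ qubits, two interleaved brickwork layers, error bookkeeping with $\varepsilon' = \varepsilon/\mathrm{poly}(m)$). However, your ``recursion depth'' step introduces a genuine error: you apply a fresh (odd, even) pair of layers at each of $O(\log m)$ doubling rounds and conclude that the total depth is ``$O(\log m)$ times the block depth.'' That would yield depth
\[
O\!\left(\log m\right)\cdot O\!\left(k\,\mathrm{polylog}(k)\,\log\tfrac{m}{\varepsilon}\right)
= O\!\left(k\,\log m\,\log\tfrac{m}{\varepsilon}\right),
\]
which overshoots the claimed $O(k\log(m/\varepsilon))$ by a multiplicative $\log m$ factor.

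In the actual construction the circuit has only a \emph{constant} number of brickwork layers (two suffice, with blocks of size $\Theta(k)$ overlapping by $\Theta(k)$). The gluing lemma is then applied \emph{in a chain along the line as a proof step, not as additional circuit rounds}: block~$1$ glued with block~$2$ gives a design on their union, which is then glued with block~$3$, and so on, covering all $m$ qubits after $O(m/k)$ proof-level gluings but with the \emph{fixed} two-layer circuit. Since odd blocks commute among themselves and even blocks commute among themselves, the two layers can be reordered so the sequential-gluing argument applies. The $\log m$ in the depth bound then enters \emph{only} through the error budget, $\varepsilon' = \varepsilon/\mathrm{poly}(m)$, which sits inside the single block depth $O(k\,\mathrm{polylog}(k)\,\log(1/\varepsilon')) = O(k\,\mathrm{polylog}(k)\,\log(m/\varepsilon))$. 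Collapsing your $O(\log m)$ conceptual doubling rounds into a single circuit round, while keeping the doubling (or chain) induction purely in the proof, recovers the stated bound.
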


\begin{lemma}[Shallow unitary design with extra qubits~\cite{cui2025unitarydesignsnearlyoptimal}]\label{lem:shallowandextraqubits} There exists a construction for relative $\varepsilon$-approximate unitary $k$-design on $m$ qubits with depth $O(\log k\log\log m/\varepsilon)$ using $O(km\log\log m/\varepsilon)$ many gates.
    
\end{lemma}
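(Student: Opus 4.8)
The plan is to reduce the statement to the \emph{gluing lemma} of Ref.~\cite{schuster2025randomunitariesextremelylow} — which says that overlapping relative-error $k$-designs compose to a relative-error $k$-design on the union of their supports — organized as a shallow \emph{tree} of gluings in which the ancilla qubits are used both to route subsystems together in logarithmic depth and to stop successive levels from stacking.

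\textbf{Base case.} First I would show that a relative $\varepsilon'$-approximate $k$-design on a block of $b=\mathrm{poly}(k)$ qubits can be realized in depth $O(\log k/\varepsilon')$ and $O(kb/\varepsilon')$ gates. Starting from a $k$-design on $\Theta(k)$ qubits, one repeatedly glues to double the support: the two half-size pieces at each doubling step act on disjoint qubits, so they run in parallel, and only the single ``bridge'' design on the overlap plus the Clifford routing between blocks add fresh depth. Using that Clifford (in particular SWAP-network) circuits on $N$ qubits run in depth $O(\log N)$ given $O(N^2)$ ancillas~\cite{moore1998parallelquantumcomputationquantum,jiang2022optimalspacedepthtradeoffcnot}, the routing is cheap, and the doubling recursion terminates after $O(\log k)$ steps.

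\textbf{Amplifying the support.} To grow the support from $b$ to $m$ in only $O(\log\log m)$ stages, at stage $j$ I would tile the qubit line with shifted copies of the stage-$j$ design on intervals of length $\ell_j$, with $\ell_0=b$ and $\ell_{j+1}\approx\ell_j^{2}$, and glue the $\approx\ell_j/b$ overlapping copies together. Copies on disjoint intervals run concurrently, and the $O(\ell_j/b)$ ``bridge'' designs of a single stage — each built recursively — are applied in parallel after one $O(\log N)$-depth ancilla-assisted permutation that brings their overlaps into place. Because the interval length roughly squares each stage, there are $O(\log(\log m/\log b))=O(\log\log m)$ stages; each stage contributes depth $O(\log k/\varepsilon')$ (a base-case design plus routing), giving total depth $O(\log k\,\log\log m/\varepsilon')$, while summing the per-stage gate counts yields $O(km\log\log m/\varepsilon')$.

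\textbf{Error control and the main obstacle.} Relative errors compose multiplicatively, so a depth-$O(\log\log m)$ recursion of gluings with per-bridge error $\varepsilon'$ incurs global error $\sim(1+\varepsilon')^{\mathrm{poly}\log m}-1$; taking $\varepsilon'=\Theta(\varepsilon/\mathrm{poly}\log m)$ pins the final error at $\varepsilon$ and only feeds $\log(1/\varepsilon')$ back into the $O(\cdot)$'s, absorbed into the stated $/\varepsilon$ dependence. The genuinely delicate steps — where I expect the real work — are (i) checking that the gluing lemma still applies when the glued pieces are themselves only relative-$\varepsilon'$-approximate and are interleaved with Clifford routing on shared ancillas, and (ii) designing the overlap geometry at every stage so that an entire stage's bridge-designs truly execute in parallel after a single logarithmic-depth permutation rather than sequentially — this parallelization of gluing is exactly what separates the $O(\log k\log\log m)$ depth here from the $O(k\,\mathrm{poly}\log k\,\log m)$ depth of the original line construction. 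For the precise accounting I would follow Ref.~\cite{cui2025unitarydesignsnearlyoptimal}.
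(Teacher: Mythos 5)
The paper does not prove this lemma at all: it is imported verbatim from Ref.~\cite{cui2025unitarydesignsnearlyoptimal} and used as a black box (see \cref{cor1app,cor3}). So there is no in-paper proof to compare against; what you have produced is a reconstruction of the argument that would appear in the cited reference.

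Your outline captures the right high-level ingredients — the gluing lemma of Ref.~\cite{schuster2025randomunitariesextremelylow}, a squaring recursion $\ell_{j+1}\approx\ell_j^2$ to get $O(\log\log m)$ stages, ancilla-assisted $O(\log N)$-depth Clifford/SWAP routing to rearrange blocks between stages, and a per-bridge error budget $\varepsilon'=\Theta(\varepsilon/\operatorname{poly}\log m)$ to control multiplicative accumulation. These are all plausible and mutually consistent, and the error-control paragraph is handled correctly (the $\varepsilon$ only enters logarithmically, matching the $\log\log(m/\varepsilon)$ in the statement).

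The gap is the base case, which is the actual technical core of the cited result. You assert that a relative $k$-design on $b=\operatorname{poly}(k)$ qubits can be realized in depth $O(\log k\cdot\log(1/\varepsilon'))$, and then treat the whole construction as ``gluing plus routing.'' But the known gluing-based line construction (Lemma~\ref{lem:shallowunitarydesigns}) has per-block depth scaling as $k\,\operatorname{poly}\log k$, not $\log k$. The reduction from $k$ to $\log k$ in the local depth — presumably via an ancilla-parallelized PFC-type construction where the $k$-wise independent permutation/phase layers and the random Clifford each collapse to $\widetilde{O}(\log k)$ depth given $\widetilde{O}(k)$ extra qubits — is precisely what distinguishes the $O(\log k\log\log(m/\varepsilon))$ depth of this lemma from the $O(k\log(m/\varepsilon))$ depth of the prior one, and it cannot be inherited from the gluing lemma itself. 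Your sketch names it as a ``base case'' without indicating how it is achieved, so the quantitative improvement you are asked to prove is, in effect, assumed at the outset. A complete argument must either construct this shallow local design explicitly or at least identify the structural reason ancillas buy an exponential speedup over the recursive gluing used in the earlier construction. You also leave the per-stage gate-count bookkeeping (that summing over $O(\log\log(m/\varepsilon))$ stages gives $O(km\log\log(m/\varepsilon))$ rather than an extra multiplicative factor) implicit; this is likely fine given a geometric decay of tile counts across stages, but it deserves a line of accounting.
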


\begin{lemma}[Shallow pseudorandom unitaries~\cite{schuster2025randomunitariesextremelylow}]\label{lem:shallowpseudorandomunitaries} There exists a construction of pseudorandom unitaries on $m$ qubits, secure against any quantum algorithm running in time $\exp(m^{1-c})$ for any $c>0$, with depth $O(\poly\log m)$, under the cryptographic assumption that the Learning with Errors (LWE) problem~\cite{ma2025constructrandomunitaries,schuster2025randomunitariesextremelylow} cannot be solved by any sub-exponential time quantum algorithm.
\end{lemma}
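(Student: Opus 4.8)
This lemma restates a construction of Ref.~\cite{schuster2025randomunitariesextremelylow} (building on the pseudorandom-unitary construction of Ref.~\cite{ma2025constructrandomunitaries}), so a self-contained proof is outside our scope; the plan below outlines the architecture one would follow. The starting point is a \emph{PFC-type ensemble} on $m$ qubits: a uniformly random Pauli $P$, a pseudorandom diagonal phase unitary $D_f\colon \ket{x}\mapsto \omega^{f(x)}\ket{x}$ with $f$ drawn from a post-quantum pseudorandom function family, and a uniformly random $m$-qubit Clifford $C$, the claimed object being the ensemble $\{CD_fP\}$. First I would invoke the path-recording/Weingarten analysis of Ref.~\cite{ma2025constructrandomunitaries}: against any adversary issuing $q$ queries, the only routes to distinguishing $CD_fP$ from a Haar-random unitary are to witness a repetition in $f$ (probability $O(q^2/2^m)$) or to break the pseudorandomness of $f$ itself, and both are negligible. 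Replacing the truly random $f$ by a PRF whose security reduces to LWE with a sub-exponential modulus-to-noise ratio, and assuming the latter has no sub-exponential-time quantum solver, upgrades ``poly-time security'' to ``$\exp(m^{1-c})$-time security'' for any constant $c>0$: one sets the LWE security parameter to a small polynomial power $m^{1-c'}$ of the qubit count and tracks the polynomial blow-up incurred by the reduction, choosing $c'$ so that the degradation is absorbed into the exponent.

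Next I would implement the three layers in $O(\poly\log m)$ depth. The random Pauli $P$ is depth $1$; a uniformly random $m$-qubit Clifford can be synthesized in depth $O(\log m)$ using $O(m^2)$ ancillary qubits via the parallel CNOT/phase-polynomial constructions of Refs.~\cite{moore1998parallelquantumcomputationquantum,jiang2022optimalspacedepthtradeoffcnot}; and $D_f$ is a diagonal unitary whose phases are computed by the PRF, which for the lattice-based (key-homomorphic) families lies in a low-depth circuit class (e.g.\ $\mathrm{TC}^0$ under suitable assumptions), so one coherently computes $f(x)$ into an ancilla register in $\poly\log m$ depth, applies the phase, and uncomputes. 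Alternatively—and this is the route actually taken in Ref.~\cite{schuster2025randomunitariesextremelylow}—one assembles the PRU on $m$ qubits from overlapping PRU \emph{patches} on $O(\poly\log m)$ qubits, each patch built exactly as above with trivial depth $\poly\log m$, and glued together with $O(\poly\log m)$ additional depth. Security of the glued object then follows from a computational hybrid argument over the $\poly(m)$ patches: each replacement costs the single-patch distinguishing advantage, and the union over patches multiplies this by $\poly(m)$ while degrading the adversary time bound by a $\poly(m)$ factor, both reabsorbed into the exponent $m^{1-c}$.

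The main obstacle is keeping \emph{both} requirements — depth $O(\poly\log m)$ \emph{and} security against $\exp(m^{1-c})$-time quantum adversaries — simultaneously in force. Concretely, one must exhibit a pseudorandom function that is evaluable coherently in $\poly\log m$ depth while still admitting a tight reduction to \emph{sub-exponentially} hard LWE, and one must verify that every polynomial loss along the way (the number of patches, the depth of each hybrid step, the query-count to security-parameter conversion) leaves the security exponent above $1-c$ for the advertised range of $c$. This bookkeeping, together with the precise instantiation of the low-depth PRF, is the technical heart of Ref.~\cite{schuster2025randomunitariesextremelylow}, which we invoke as a black box for the purposes of \cref{cor2}.
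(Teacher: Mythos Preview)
Your proposal is appropriate and in fact goes well beyond what the paper does: the paper provides no proof of this lemma at all, treating it purely as a cited result from Ref.~\cite{schuster2025randomunitariesextremelylow} and invoking it as a black box in the proof of \cref{cor2app}. Your recognition that a self-contained proof is out of scope, together with the architectural sketch you supply, is already more detail than the paper offers.
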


\begin{lemma}[Shallow-depth Clifford unitaries with extra qubits~\cite{moore1998parallelquantumcomputationquantum,jiang2022optimalspacedepthtradeoffcnot}]\label{lem:shallowClifford} Any Clifford operator can be realized in depth $O(\log n)$ and $O(n^2)$ extra qubits with all-to-all connectivity.
\end{lemma}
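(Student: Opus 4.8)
The plan is to reduce an arbitrary Clifford unitary to a constant number of structured layers, each of which can be parallelized to logarithmic depth using a quadratic number of ancillas, and then to invoke the cited parallel-synthesis results for the only nontrivial layer type. First I would recall a layered canonical form for the Clifford group (e.g.\ the normal form of Aaronson and Gottesman~\cite{aaronson_improved_2004}, or a Hadamard-free decomposition): every $C\in\mathcal{C}_n$ can be written as a product of $O(1)$ layers, where each layer is one of three types: (i) a tensor product of single-qubit Clifford gates (Hadamards and phase gates); (ii) a $\mathrm{CNOT}$ network, i.e.\ the unitary implementing an invertible $\mathbb{F}_2$-linear map $x\mapsto Ax$ on the computational basis; (iii) a $\mathrm{CZ}$ network, i.e.\ a diagonal unitary $\ket{x}\mapsto i^{q(x)}\ket{x}$ with $q$ a quadratic form over $\mathbb{F}_2$ specified by a symmetric matrix. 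Since each layer will return its ancillas to $\ket{0}$ (so that $O(n^2)$ ancillas can be reused across layers), and a constant number of depth-$O(\log n)$ blocks is still depth $O(\log n)$, it suffices to handle each layer type separately.

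Type (i) is trivially depth $1$ with no ancillas. Type (ii) is precisely the content of the cited results~\cite{moore1998parallelquantumcomputationquantum,jiang2022optimalspacedepthtradeoffcnot}: with $O(n^2)$ ancillas one realizes any $A\in\mathrm{GL}(n,\mathbb{F}_2)$ in depth $O(\log n)$ under all-to-all connectivity. The mechanism to keep in mind is that each of the $n$ output bits is a parity of at most $n$ input bits, and all $n$ such parities can be computed concurrently: one fans out each input bit into $O(n)$ copies via a binary tree of $\mathrm{CNOT}$s (depth $O(\log n)$), forms each output parity by a binary $\mathrm{CNOT}$ tree over its copies (depth $O(\log n)$), swaps the register holding $Ax$ into place, and uncomputes the copies. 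For type (iii) I would reduce the quadratic phase to single-qubit phases on parities: each monomial $x_ix_j$ (resp.\ $x_i$) of $q$ is handled by computing $x_i\oplus x_j$ into an ancilla with the same fan-out trick, applying a single-qubit $S$ or $Z$, and uncomputing, using the standard identity expressing each $\mathrm{CZ}_{ij}$ via a constant number of $\mathrm{CNOT}$s to a shared ancilla together with single-qubit phases. The $O(n^2)$ monomials are processed simultaneously with $O(n^2)$ ancillas, giving again depth $O(\log n)$.

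The main obstacle is entirely confined to type (ii): parallelizing an arbitrary invertible linear transformation to logarithmic depth, which requires scheduling the fan-out/fan-in trees together with a Gaussian-elimination-style factorization of $A$ so that non-overlapping operations run concurrently and every ancilla is cleanly uncomputed within the depth budget. This is exactly what is established in~\cite{jiang2022optimalspacedepthtradeoffcnot} (building on~\cite{moore1998parallelquantumcomputationquantum}), so in our setting we simply invoke it; the remaining ingredients—the constant-depth normal form and the reduction of the diagonal layer to parity computations—are routine. Assembling the $O(1)$ layers then yields total depth $O(\log n)$ and ancilla count $O(n^2)$, proving the claim.
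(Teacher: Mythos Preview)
The paper does not give its own proof of this lemma: it is stated purely as a citation of \cite{moore1998parallelquantumcomputationquantum,jiang2022optimalspacedepthtradeoffcnot} and then used as a black box in the subsequent corollaries. Your proposal therefore goes well beyond what the paper does, by actually sketching why the cited results yield the claim.

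Your argument is correct and is the standard route: reduce an arbitrary $C\in\mathcal{C}_n$ via a constant-length normal form (Aaronson--Gottesman, or the Hadamard-free form) to $O(1)$ layers of single-qubit Cliffords, $\mathrm{CNOT}$ networks, and $\mathrm{CZ}$ networks; handle the $\mathrm{CNOT}$ layers by directly invoking the cited parallel-synthesis results; and reduce the $\mathrm{CZ}$ layer to parity computations plus single-qubit phases via the identity $(-1)^{x_ix_j}=i^{x_i}i^{x_j}i^{-(x_i\oplus x_j)}$, parallelized with $O(n^2)$ fanned-out ancillas. The ancilla-reuse observation (each layer returns its workspace to $\ket{0}$) is the right way to keep the total ancilla count at $O(n^2)$ rather than a constant multiple thereof. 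One small remark: Moore and Nilsson in fact already treat general stabilizer circuits, not only linear reversible ones, so you could alternatively invoke \cite{moore1998parallelquantumcomputationquantum} directly for the whole Clifford group and skip the explicit layer-by-layer reduction; your decomposition is nonetheless a clean and self-contained way to see where the $O(n^2)$ ancillas and $O(\log n)$ depth come from.
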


Thanks to the lemmas above, and our quantum homeopathy construction, we can infere about the number of single qubits non-Clifford gates necessary for the construction of quantum secure unitary designs.

\begin{corollary}[Quantum secure unitary designs with few non-Clifford gates. \cref{cor1} in the main text]\label{cor1app} There exists a construction of quantum secure unitary $k$-designs using $O(k^2\log\varepsilon^{-1}\log (k+\log\varepsilon^{-1})/\varepsilon)$ number of non-Clifford gates. Alternatively, using $\tilde{O}(k^2)$ extra qubits the non-Clifford cost reduces to $O(k^2\log\varepsilon^{-1})$. 
\begin{proof}
In our quantum homeopathy construction, we can employ a exact $k$-design on $2k+6+\log\varepsilon^{-1}$ qubits. We can however substitute it with a $\varepsilon$-approximate $k$-designs. According to \cref{th:maintheorem1}, we need a $k$-design on $2k+\log2\varepsilon^{-1}+6$ qubits to have a quantum secure $\varepsilon/2$ unitary $k$-design diluted on $n$ qubits. Using \cref{lem:relativeimpliesquantumsecurity}, we can substitute the exact $k$-design on $2k+\log2\varepsilon^{-1}+6$ with a $\varepsilon/2$ quantum secure design requiring depth (\cref{lem:shallowunitarydesigns}) $O(k\log4[(2k+\log2\varepsilon^{-1}+6)/\varepsilon])$ and, therefore, $O((2k^2+k\log2\varepsilon^{-1}+6k)\log4[(2k+\log2\varepsilon^{-1}+6)/\varepsilon])=O(k^2\log k/\varepsilon)$ gates. Finally, the use of triangle inequality shows the first claim. Similarly, we can employ extra qubits, as in \cref{lem:shallowandextraqubits}. Using a similar technique, it yields to a non-Clifford gates count of $\tilde{O}(k^2\log\varepsilon^{-1})$ up to log factors and $\tilde{O}(k^2)$ extra qubits. 
\end{proof}
\end{corollary}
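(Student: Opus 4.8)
The plan is to take \cref{maintheorem} (in its sharp form \cref{th:maintheorem1}) as a black box and simply account for the non-Clifford cost hidden in it. \cref{th:maintheorem1} says that the sandwiched ensemble $\mathcal{E}_t=\{C_1U_tC_2\}$ with an \emph{exact} $k$-design $U_t$ on $t$ qubits is already an $\varepsilon$-approximate quantum-secure $k$-design on $n$ qubits as soon as $t\ge 2k+6+\log\varepsilon^{-1}$; the two outer Cliffords are free, so the entire non-Clifford budget is the cost of realizing $U_t$ on $t=\Theta(k+\log\varepsilon^{-1})$ qubits, which is automatically system-size independent. The two bounds in the statement will come from feeding the two shallow-design constructions of \cref{lem:shallowunitarydesigns,lem:shallowandextraqubits} into this slot, after first replacing the exact $U_t$ by an approximate one.

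The key step I would carry out is a hybrid argument showing that the exact $U_t$ can be swapped for an $\varepsilon'$-approximate quantum-secure $k$-design $\tilde U_t$ on $t$ qubits at the cost of $+\varepsilon'$ in the final error. Fix an adaptive experiment $\Psi_{C_1U_tC_2}(\boldsymbol V)$ with $k$ queries and $n'$ ancillas, and fold the Cliffords into the intermediate unitaries: $C_1U_tC_2V_kC_1U_tC_2\cdots C_1U_tC_2V_1\ket0=(C_1\otimes\id)\,U_tW_kU_tW_{k-1}\cdots U_tW_1\ket0$ with $W_i\coloneqq C_2V_iC_1$ (and $W_1\coloneqq C_2V_1$), where $U_t$ acts on $t$ of the $n+n'$ qubits and the remaining $n-t+n'$ qubits serve as ancillas for a $t$-qubit experiment. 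Then $\Psi_{C_1U_tC_2}(\boldsymbol V)=(C_1\otimes\id)\Psi_{U_t}(\boldsymbol W)(C_1^\dagger\otimes\id)$, and averaging over $C_1,C_2$ and using unitary invariance of $\|\cdot\|_1$,
\begin{align}
    \Big\|\underset{C_1,C_2,U_t}{\mathbb E}\Psi_{C_1U_tC_2}(\boldsymbol V)-\underset{C_1,C_2,\tilde U_t}{\mathbb E}\Psi_{C_1\tilde U_tC_2}(\boldsymbol V)\Big\|_1\le\underset{C_1,C_2}{\mathbb E}\Big\|\underset{U_t}{\mathbb E}\Psi_{U_t}(\boldsymbol W)-\underset{\tilde U_t}{\mathbb E}\Psi_{\tilde U_t}(\boldsymbol W)\Big\|_1\le\varepsilon',
\end{align}
the last bound holding for \emph{every} $\boldsymbol V$ (hence every induced $\boldsymbol W$), since the $\sup$ over ancilla count in \cref{def1app} absorbs the $n-t+n'$ spectator qubits. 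Because an exact $k$-design on $t$ qubits reproduces the degree-$(k,k)$ moments that define $\Psi_{U_t}(\boldsymbol W)$, the first expectation equals the Haar one on $t$ qubits, so \cref{th:maintheorem1} bounds the exact-$U_t$ ensemble against $\haar(n)$ by $47\cdot2^{2k-t}$; the triangle inequality then gives that $\mathcal{E}_t$ built from $\tilde U_t$ is a $(47\cdot2^{2k-t}+\varepsilon')$-approximate quantum-secure $k$-design.

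With this in hand the rest is parameter bookkeeping. I would set $t=2k+6+\log(2\varepsilon^{-1})$, making the dilution term $\le\varepsilon/2$, and $\varepsilon'=\varepsilon/2$, so the total error is $\varepsilon$. By \cref{lem:relativeimpliesquantumsecurity} a relative-error $(\varepsilon/4)$-design is a quantum-secure $(\varepsilon/2)$-design, and by \cref{lem:shallowunitarydesigns} such a design on $t$ qubits is a circuit of depth $O(k\log(t/\varepsilon))$, hence $O(tk\log(t/\varepsilon))$ two-qubit gates in total; bounding the non-Clifford count by the total gate count of $\tilde U_t$ and substituting $t=\Theta(k+\log\varepsilon^{-1})$ gives the first bound $O\!\big(k^2\log\varepsilon^{-1}\log((k+\log\varepsilon^{-1})/\varepsilon)\big)$, using $k^2+k\log\varepsilon^{-1}=O(k^2\log\varepsilon^{-1})$. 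For the second bound I would instead invoke \cref{lem:shallowandextraqubits}, which realizes the same relative-error $(\varepsilon/4)$-design on $t$ qubits with only $\tilde O(tk)=\tilde O(k^2)$ total gates at the price of $\tilde O(k^2)$ extra qubits, yielding non-Clifford cost $\tilde O(k^2+k\log\varepsilon^{-1})=\tilde O(k^2\log\varepsilon^{-1})$.

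The hard part will be exactly the substitution step: one must verify that an approximate design living on only $t<n$ qubits, once inserted into the sandwich, is still controlled by its \emph{quantum-secure} error and not merely its additive error — which rests on (i) the $\sup_{n'}$ in \cref{def1app} swallowing the $n-t$ idle qubits and (ii) the Clifford conjugations folding into admissible adaptive intermediates without inflating the query count $k$. Once this is granted, the two variants of the bound differ only in which shallow-design lemma is plugged in.
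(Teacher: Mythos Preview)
Your proposal is correct and follows essentially the same route as the paper: take \cref{th:maintheorem1} as input, split the error budget in half between the dilution step and the seed design, replace the exact seed by a relative-error (hence quantum-secure via \cref{lem:relativeimpliesquantumsecurity}) $k$-design on $t=\Theta(k+\log\varepsilon^{-1})$ qubits, and plug in either \cref{lem:shallowunitarydesigns} or \cref{lem:shallowandextraqubits} for the gate count. Your treatment of the substitution step---folding the outer Cliffords into the adaptive intermediates so that the $t$-qubit seed faces a bona fide $t$-qubit adaptive experiment with $n-t+n'$ ancillas, and then invoking the $\sup_{n'}$ in \cref{def1app}---is actually more explicit than the paper's, which simply asserts the substitution and cites ``the triangle inequality.''
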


According to our lower bound, the optimal number of non-Clifford gates required to construct a quantum-secure design is lower bounded by $\Omega(k)$. Hence, our construction appears to be loose by a quadratic factor. However, a closer examination reveals that: (i) the lower bound explicitly proves the existence of an \textit{computationally efficient} quantum algorithm to distinguish the two ensembles, so a fair comparison requires considering this aspect; and (ii) we can employ pseudorandom unitaries to guarantee quantum-polynomial security as defined in \cref{def2app}. Thanks to \cref{lem:shallowpseudorandomunitaries}, the following corollary shows that our construction is optimal for quantum-polynomially-secure unitary $k$-designs, achieving the construction using only $\tilde{O}(k)$ non-Clifford gates.

\begin{corollary}[Quantum-polynomially-secure-unitary $k$-designs with optimal number of non-Clifford gates. \cref{cor2} in the main text]\label{cor2app} Let $c>0$. Under the computational assumption that LWE cannot be solved in sub-exponential time by a quantum computer, for any $\varepsilon=\exp(o(n))$, there exists a construction of quantum-polynomially-secure unitary $k$-design using $\tilde{O}((k+\log^{1+c}\varepsilon^{-1}+\log^{1+c}n)\poly\log (k+\log^{1+c}\varepsilon^{-1}+\log^{1+c}n))$ number of non-Clifford gates. For $k=\omega(\log n)$, this number is also optimal up to log factors. 
\begin{proof}
    First, we notice that a pseudorandom unitary on $m$ qubits is, under computational assumption in \cref{lem:shallowpseudorandomunitaries}, a quantum-polynomially-secure $k$-design for all $k=O(\exp(m^{1-c}))$ and $\varepsilon(m)=\Omega(\exp(-m^{1-c}))$. We aim to replace the exact $k$-design on $t\ge 2k+\log\varepsilon^{-1}+6$ qubits in our homeopathy construction with a pseudorandom unitary. However, under computational assumption in \cref{lem:shallowpseudorandomunitaries}, pseudorandom unitaries are only guaranteed to be secure against $\poly(n)$ time adversaries if the subsystem $m=\omega(\log n)$. We have to distinguish several cases, but the proof proceeds analogously to the proof of \cref{cor1app}, i.e. we just replace a pseudorandom unitary on a portion of qubits and use triangle inequality to show that the construction remains indistinguishable from Haar random for any $k$-query quantum algorithm with resolution at most $\varepsilon$.
    \begin{itemize}
        \item $k+\log\varepsilon^{-1}=o(\log^{1+c} n)$: in order to guarantee security against any $\poly(n)$ time quantum algorithm, we place a pseudorandom unitary on $m=O(\log^{1+c}n)$ and then dilute it with our homeopathy construction. The number of gates is $O(\log^{1+c}n\poly\log\log n )$.
        \item $k+\log\varepsilon^{-1}=\Omega(\log^{1+c}n)$. In this case, depending on the target accuracy $\varepsilon$, we distinguish two scenarios:
         \begin{itemize}
        \item { $\varepsilon=\Omega(\exp(-[2k+\log\varepsilon^{-1}]^{1-c})$}: we place a pseudorandom unitary on $2k+\log\varepsilon^{-1}+6$ qubits which, thanks to \cref{lem:shallowpseudorandomunitaries}, can be constructed using $O((k+\log\varepsilon^{-1})\poly\log(k+\log\varepsilon^{-1}))$ many gates, which shows the claim.
        \item $\varepsilon=o(\exp(-[2k+\log\varepsilon^{-1}]^{1-c})$: we define $t_{\text{hom}}\coloneqq2k+\log\varepsilon^{-1}+6$, and inject a pseudorandom unitary on $t=t_{\text{hom}}+t_{\varepsilon}$ qubits, where the extra qubits $t_{\varepsilon}$ are needed to reach the desired target resolution $\varepsilon$. Indeed, since a pseudorandom unitary on $t$ qubits is secure against algorithms with resolution $O(-\exp(t^{1-c}))$, we impose
        \begin{align}
            \varepsilon=\Omega(\exp[-(t_{\text{hom}}+t_{\varepsilon})^{1-c}])
        \end{align}
    which imply that additional extra qubits $t_{\varepsilon}=O(\log^{1+c}\varepsilon^{-1})$ are sufficient. As before, such a pseudorandom unitary can be constructed with $O((t_{\text{hom}}+\log^{1+c}\varepsilon^{-1})\poly\log(t_{\text{hom}}+\log^{1+c}\varepsilon^{-1}))=\tilde{O}(k+\log^{1+c}\varepsilon^{-1})$ many gates.
    \end{itemize}
        \end{itemize}
   
Putting together all the cases, we see that for any target resolution $\varepsilon$ the injection of a pseudorandom unitary on $O(k+\log^{1+c}\varepsilon^{-1}+\log^{1+c} n)$ qubits is sufficient for constructing a $\varepsilon$-approximate quantum-polynomially-secure unitary $k$-design and requires a number of gates is given by $O((k+\log^{1+c}\varepsilon^{-1}+\log^{1+c}n)\poly\log(k+\log^{1+c}\varepsilon^{-1}+\log^{1+c}n))$ showing the claim. The optimality (with respect to the linear scaling with $k$) follows directly from the lower bound in \cref{sec:lowerbound}.
\end{proof}
    
\end{corollary}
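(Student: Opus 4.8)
The plan is to reuse the quantum homeopathy ensemble $\mathcal{E}_t=\{C_1U_tC_2\}$ of \cref{th:maintheorem1}, but to replace the exact $k$-design $U_t$ on $t$ qubits by a \emph{pseudorandom unitary} on $t$ qubits (\cref{lem:shallowpseudorandomunitaries}), and then to control the resulting error by a hybrid argument. The first step is to record, from \cref{def:pseudorandomunitaries,lem:shallowpseudorandomunitaries}, that under the LWE assumption a pseudorandom unitary on $m$ qubits is a quantum-polynomially-secure $\delta$-approximate $k$-design precisely when (a) $m=\omega(\log n)$, so that every $\poly(n)$-time $k$-query adversary lies within the $\exp(m^{1-c})$-time security window, and (b) $\delta\gtrsim\exp(-\Omega(m^{1-c}))$, the intrinsic accuracy floor of the construction. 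These two constraints, together with the homeopathy support requirement $t\ge 2k+\log\varepsilon^{-1}+6$ from \cref{th:maintheorem1}, are what the whole argument must reconcile.

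Next I would set up a three-way hybrid between $\haar$ on $n$ qubits, the homeopathy ensemble $\mathcal{E}_t^{\mathrm{exact}}$ built from an exact $k$-design on $t$ qubits, and the homeopathy ensemble $\mathcal{E}_t^{\mathrm{PRU}}$ built from a pseudorandom unitary on $t$ qubits. For any efficient $k$-query adversary, the distinguishing advantage between $\mathcal{E}_t^{\mathrm{PRU}}$ and $\haar$ is at most the advantage between $\mathcal{E}_t^{\mathrm{PRU}}$ and $\mathcal{E}_t^{\mathrm{exact}}$ plus the advantage between $\mathcal{E}_t^{\mathrm{exact}}$ and $\haar$. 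The second term is $\le\varepsilon$ by \cref{th:maintheorem1} as soon as $t\ge 2k+\log\varepsilon^{-1}+6$. For the first term, since $C_1,C_2$ are efficiently implementable Cliffords, an efficient $k$-query adversary distinguishing $\mathcal{E}_t^{\mathrm{PRU}}$ from $\mathcal{E}_t^{\mathrm{exact}}$ can be converted—by absorbing the conjugating Cliffords into its own pre- and post-processing—into an efficient $k$-query adversary distinguishing a pseudorandom unitary from an exact $k$-design on $t$ qubits, whose advantage is $\le\varepsilon$ by the pseudorandomness guarantee, provided $t=\omega(\log n)$ and $\exp(-\Omega(t^{1-c}))\le\varepsilon$.

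It then remains to choose the smallest sufficient $t$ and read off the gate count, using that \cref{lem:shallowpseudorandomunitaries} gives depth $\poly\log m$, hence $O(m\poly\log m)$ gates, on $m$ qubits. This forces a case split: if $k+\log\varepsilon^{-1}=o(\log^{1+c}n)$ the binding constraint is poly-security, so we take $t=\Theta(\log^{1+c}n)$ (which automatically satisfies the homeopathy bound) and pay $\tilde O(\log^{1+c}n)$ gates; if $k+\log\varepsilon^{-1}=\Omega(\log^{1+c}n)$ then $t=2k+\log\varepsilon^{-1}+6$ is already $\omega(\log n)$, and either $\varepsilon\ge\exp(-\Omega(t^{1-c}))$ already holds (cost $\tilde O(k+\log\varepsilon^{-1})$) or we pad with $t_\varepsilon=O(\log^{1+c}\varepsilon^{-1})$ extra qubits to reach the target accuracy (cost $\tilde O(k+\log^{1+c}\varepsilon^{-1})$). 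Combining the cases gives $O\big((k+\log^{1+c}\varepsilon^{-1}+\log^{1+c}n)\poly\log(k+\log^{1+c}\varepsilon^{-1}+\log^{1+c}n)\big)$ non-Clifford gates, and under $k=\omega(\log n)$ with $\log\varepsilon^{-1}=o(n)$ the $\log^{1+c}n$ term is subsumed, yielding $\tilde O(k+\log^{1+c}\varepsilon^{-1})$. Optimality is immediate from \cref{cor:lowerboundnoncliffordgates} (equivalently \cref{th:lowerboundmain}): the $\Omega(k)$ lower bound there is witnessed by a \emph{computationally efficient} $O(k)$-query distinguisher, so it rules out quantum-polynomially-secure $k$-designs with $o(k)$ non-Clifford gates as well, matching the upper bound up to the poly-logarithmic factors and the additive $\log^{1+c}\varepsilon^{-1}$ overhead.

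The main obstacle is exactly the tension among the three scales mentioned above: the homeopathy support $t\ge 2k+\log\varepsilon^{-1}+6$, the $\omega(\log n)$ floor on pseudorandom-unitary support needed for $\poly(n)$-time security, and the accuracy floor $\varepsilon\gtrsim\exp(-t^{1-c})$ of known pseudorandom-unitary constructions. Getting a clean statement requires the case analysis to always choose $t$ large enough for whichever of these three constraints is binding, and it is precisely this that produces the $\log^{1+c}n$ and $\log^{1+c}\varepsilon^{-1}$ terms and forces the hypothesis $k=\omega(\log n)$; the reduction itself (absorbing the Cliffords into the adversary) and the triangle inequality are routine once the pseudorandomness regime is correctly tracked.
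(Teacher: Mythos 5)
Your proposal matches the paper's proof in essence: both replace the exact $k$-design with a pseudorandom unitary, use a hybrid/triangle-inequality argument to control the error, perform the same three-way case split driven by the tension between the homeopathy support bound $t\ge 2k+\log\varepsilon^{-1}+6$, the $\omega(\log n)$ floor, and the accuracy floor $\varepsilon\gtrsim\exp(-t^{1-c})$, and invoke the same lower bound (\cref{cor:lowerboundnoncliffordgates}) for optimality. Your explicit rendering of the hybrid step—absorbing $C_1,C_2$ into the adversary to reduce to distinguishing PRU from exact design on $t$ qubits—is a slightly more careful articulation of what the paper compresses into ``use triangle inequality,'' but it is the same argument.
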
 


Let us now show that our construction can be implemented in shallow depth using extra qubits. 
\begin{corollary}[Shallow-depth and few non-Clifford gates quantum secure design. \cref{cor3} in the main text] For any $\varepsilon$, there exists a construction of quantum-secure unitary $k$-design using $\tilde{O}(k^2\log\varepsilon^{-1})$ non-Clifford gates, $O(k^2+n^2)$ extra qubits in depth $\tilde{O}(\log k+\log n)$.
\begin{proof}
    The proof makes use of \cref{lem:shallowandextraqubits,lem:shallowClifford} and proceeds in a similar way of \cref{cor1app,cor2app}.
\end{proof}
    
\end{corollary}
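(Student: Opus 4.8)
The plan is to take the quantum-homeopathy ensemble $\mathcal{E}_t=\{C_1U_tC_2\}$ of \cref{maintheorem} and realise each of its three pieces by a shallow, resource-efficient subroutine, then tally the depth, the number of extra qubits, and the number of non-Clifford gates of the resulting circuit for one copy of $U$.

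First I would set $t\coloneqq 2k+\log(4\varepsilon^{-1})+6=\Theta(k+\log\varepsilon^{-1})$, so that \cref{maintheorem} makes $\{C_1U_tC_2\}$ with $U_t$ an \emph{exact} $k$-design on $t$ qubits an $(\varepsilon/2)$-approximate quantum-secure $k$-design. I would then replace $U_t$ by the relative-$(\varepsilon/8)$-approximate $k$-design on $t$ qubits of \cref{lem:shallowandextraqubits}, which by \cref{lem:relativeimpliesquantumsecurity} is $(\varepsilon/4)$-approximate \emph{quantum-secure} on $t$ qubits, and argue via a hybrid step---structurally the same as in \cref{cor1app}---that the diluted ensemble remains $(\varepsilon/2+\varepsilon/4)\le\varepsilon$-approximate quantum-secure. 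Concretely, conditioning on $C_1,C_2$ and absorbing each occurrence of $C_1$ and $C_2$ into the neighbouring adversary operator, any adaptive $k$-query distinguisher between ``exact inner unitary $U_t$'' and ``approximate inner unitary $U_t'$'' turns, up to a fixed unitary pre- and post-processing that leaves the trace norm unchanged, into an adaptive $k$-query distinguisher between $U_t$ and $U_t'$ on $t$ qubits with $n-t+n'$ ancillas; since $U_t$ is exact (hence $0$-quantum-secure) and $U_t'$ is $(\varepsilon/4)$-quantum-secure on $t$ qubits, its advantage is at most $\varepsilon/4$ for every $C_1,C_2$, hence also after averaging over them by convexity of $\|\cdot\|_1$.

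Second I would realise the two global Clifford layers via \cref{lem:shallowClifford}---each $C_1,C_2\in\mathcal{C}_n$ in depth $O(\log n)$ with $O(n^2)$ (reusable) ancillas under all-to-all connectivity and with no non-Clifford gate---and $U_t'$ via \cref{lem:shallowandextraqubits}, which for an $m$-qubit design costs depth $\tilde{O}(\log k)$, $\tilde{O}(km)$ gates, and $\tilde{O}(km)$ ancillas, its gate count carrying no inverse-$\varepsilon$ factor. Substituting $m=t=\Theta(k+\log\varepsilon^{-1})$ makes the $U_t'$ block cost $\tilde{O}(k^2\log\varepsilon^{-1})$ gates---all of which may be non-Clifford---and $\tilde{O}(k^2)$ ancillas; the $\log\varepsilon^{-1}$ in the gate count thus comes entirely from the support size $t$ of the injected design, not from the accuracy of \cref{lem:shallowandextraqubits}. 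Adding up: the non-Clifford gates are exactly those of $U_t'$, i.e.\ $\tilde{O}(k^2\log\varepsilon^{-1})$; the ancilla count is $O(n^2+k^2)$, dominated by the two reusable Clifford layers together with the $U_t'$ block; and the depth of one copy of $U=C_1U_tC_2$ is $O(\log n)+\tilde{O}(\log k)+O(\log n)=\tilde{O}(\log n+\log k)$, as asserted.

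The step I expect to be the main obstacle is the hybrid argument---the replacement of the exact inner design by an approximate one. One must verify carefully that substituting the inner unitary of a Clifford sandwich $C_1(\cdot)C_2$ is ``transparent'' to every adaptive $k$-query experiment, so that the quantum-secure distance between the two diluted ensembles is bounded by the quantum-secure distance between the two $t$-qubit inner ensembles, and that the error contributions---exact $\to$ approximate inner design, plus the exact and hence error-free realisations of $C_1$ and $C_2$---compose additively. The rest---tuning $t$ to $\varepsilon$ through \cref{maintheorem} and plugging $t$ into the resource counts of \cref{lem:shallowandextraqubits,lem:shallowClifford}---is routine and runs exactly parallel to \cref{cor1app,cor2app}; the only point meriting a second look there is that \cref{lem:shallowandextraqubits} indeed delivers the $\log\varepsilon^{-1}$ (rather than $\poly(\varepsilon^{-1})$) gate scaling, which one could alternatively secure from the relative-error construction of \cref{lem:shallowunitarydesigns} at depth $O(k\log(t/\varepsilon))$ and then convert to $\tilde{O}(\log k)$ depth at the cost of the ancillas of \cref{lem:shallowandextraqubits}.
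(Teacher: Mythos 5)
Your proposal is correct and follows the same route the paper intends: tune $t=\Theta(k+\log\varepsilon^{-1})$ via \cref{maintheorem}, swap the exact inner design for the low-depth relative-error design of \cref{lem:shallowandextraqubits} by the hybrid/Clifford-absorption argument from \cref{cor1app}, realize $C_1,C_2$ via \cref{lem:shallowClifford}, and tally resources. Your careful spelling-out of the hybrid step (absorbing the fixed Cliffords into the adversary unitaries so the quantum-secure distance of the diluted ensemble is bounded by that of the $t$-qubit inner ensemble) is exactly the justification the paper's terse ``proceeds in a similar way'' proof leaves implicit.
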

\bibliographystyle{apsrev4-1}
\bibliography{bib3} 

\end{document}